\newcommand{\set}[2]{\{#1\,|\,#2\}}
\newcommand{\tuple}[1]{\left<{#1}\right>}
\newcommand{\Dom}{\mbox{\sl dom\/}}
\def\bbbn{{\rm I\!N}}
\def\bbbr{{\rm I\!R}}
\begin{document}

\title{Space-Efficient Bimachine Construction Based on the Equalizer Accumulation Principle}
\author{Stefan Gerdjikov\inst{1,2} \and Stoyan Mihov\inst{2} \and Klaus U. Schulz\inst{3}}

\institute{
Faculty of Mathematics and Informatics\\
Sofia University\\
5, James Borchier Blvd., Sofia 1164, Bulgaria\\
stefangerdzhikov@fmi.uni-sofia.bg
\and
Institute of Information and Communication Technologies\\
Bulgarian Academy of Sciences\\
25A, Acad. G. Bonchev Str., Sofia 1113, Bulgaria\\
stoyan@lml.bas.bg
\and
Centrum f\"ur Informations- und Sprachverarbeitung (CIS)\\
Ludwig-Maximilians-Universit\"at M\"unchen\\
Oettingenstr. 67, 80538 M\"unchen, Germany\\
schulz@cis.uni-muenchen.de
}

\maketitle

%\begin{abstract}
%The standard construction of a bimachine from a transducer given by Schutzenberge involves a preparation step for converting the transducer into an unambiguous transducer. The conversion requires that the source transducer is pseudo-deterministic and applies a specialised determinization, which is applicable only if the underlying monoid satisfies some special properties (e.g. is a free monoid). Our new method constructs the bimachine directly from the source transducer yielding a more efficient algorithm. Moreover the new construction has no requirements  on the output monoid and therefore can by applied to any transducer.
%\keywords{bimachines, transducers, }
%\end{abstract}

\begin{abstract}
Algorithms for building bimachines from functional transducers found in the literature in a run of the bimachine imitate one successful path of the input transducer. Each single bimachine output exactly corresponds to the output of a single transducer transition. Here we introduce an alternative construction principle where bimachine steps take alternative parallel transducer paths into account, maximizing the possible output at each step using a joint view. The size of both the deterministic left and right automaton of the bimachine is restricted by $2^{\vert Q\vert}$ where 
$\vert Q\vert$ is the number of transducer states. Other bimachine constructions lead to larger subautomata. 
As a concrete example we present a class of real-time functional transducers with $n+2$ states for which the standard bimachine construction generates a bimachine with at least $\Theta(n!)$ states whereas the construction based on the equalizer accumulation principle leads to $2^n + n +3$ states. 
Our construction can be applied to rational functions from free monoids to ``mge monoids'', a large class of monoids including free monoids, groups, and others that is closed under Cartesian products. 
\keywords{bimachines, transducers, rational functions}
\end{abstract}

\section{Introduction}

Functional finite-state transducers and bimachines \cite{Eil74,Sakarovitch09} are devices for translating a given input sequence of symbols to a new output form. With both kinds of devices, the full class of regular string functions can be captured. However, finite-state transducers  
are more restricted in the sense that a non-deterministic behaviour is needed to realize all regular functions. In contrast, with bimachines all regular functions can be processed in a fully deterministic way. From a practical point, both models have their own advantages. For a given translation task it is often much simpler to find a non-deterministic finite-state transducer. 
Bimachines, on the other hand, are much more efficient.  General constructions that convert finite-state transducers into equivalent bimachines help to obtain both benefits at the same time.

%The new methods for building a bimachine from a given transducer ${\cal T}$ are based on the rough idea to use sets of active states of ${\cal T}$ for defining the states of the left deterministic automaton ${\cal A}_L$ of the bimachine, and sets active states of the variant of ${\cal T}$ where all transitions are inverted and final (initial) states are treated as initial (final) as states of the right deterministic automaton ${\cal A}_R$ of the bimachine. 
%Given a pair of states $\tuple{ R,L}$ of the left and right automaton and a letter $a\in \Sigma$ the challenge is to ensure 
%``output composition adequacy'' of the bimachine: outputs for triples $\tuple{ L,R, a}$ need to be defined in such a way that the bimachine output for any input $w$ in the input language of $T$ corresponds to the output of $T$ on some transducer path. 

Known algorithms for converting a functional finite-state transducer ${\cal T}$ with set of states $Q$ into an equivalent bimachine are based on the ``path reconstruction principle'': At each step of a bimachine computation, the bimachine output represents the output of a single transducer transition step. 
Furthermore, for any complete input string $w$ the sequence of bimachine outputs for $w$ is given by the sequence of outputs of ${\cal T}$ for $w$ on a specific path.  
Control of these two principles is achieved either by transforming the source transducer to be unambiguous \cite{RS97} or by using a complex notion of states for the states of the deterministic 
subautomata of the bimachine \cite{CIAA2017}. The ``enhanced'' power set constructions used to build the two subautomata have the effect that the 
size of at least one subautomaton is {\em not} bounded by $2^{\vert Q\vert}$. Recall that $2^{\vert Q\vert}$ is the 
bound obtained for a standard power set determinization. 

Here we introduce a new construction that only needs a conventional power set construction for the states of {\em both} subautomata of the bimachine. As a consequence, the size of both subautomata is bound by $2^{\vert Q\vert}$. In the new construction, the output of a single bimachine step takes into account the outputs of several parallel alternative transducer transitions, in a way to be explained. Using a principle called ``equalizer accumulation'' the joint view on all relevant transducer transitions leads to a kind of maximal output for the bimachine. At the same time this ``joint look'' at parallel paths of the transducer guarantees that the complete bimachine output for an input string $w$ is identical to the combined output of the transducer on {\em any} path for $w$. 

The new construction is not restricted to strings as bimachine output. 
As a second contribution of the paper we introduce and study the class of ``monoids with most general equalizers'' (mge monoids).
We show that this class includes all free monoids, groups, the tropical semiring, and others and is closed under Cartesian products. 
The aforementioned principle of ``equalizer accumulation'' is possible for all mge monoids. 

The paper has the following structure. We start with formal preliminaries in Section~\ref{SecFormalPreliminaries}. 
In Section~\ref{SecMGE} we introduce mge monoids and study formal properties of this class. We define the principle of 
equalizer accumulation and show that equalizer accumulation is possible for all mge monoids. 
In Section~\ref{SecFunctionalityTest} we give an algorithm for deciding the functionality of a transducer with outputs in a mge monoid. 
In Section~\ref{SecBimachineConstruction} we introduce the new algorithm for converting a functional finite-state transducer 
with outputs in a mge monoid into an equivalent bimachine. 
In Section~\ref{SectionComparison} we present a class of real-time functional transducers with $n+2$ states for which the standard bimachine construction generates a bimachine with at least $\Theta(n!)$ states whereas the construction based on the equalizer accumulation principle leads to $2^n + n +3$ states only.
We finish with a short conclusion in Section~\ref{SecConclusion}.

\section{Formal Preliminaries}\label{SecFormalPreliminaries}

\begin{definition}\label{DefMonoid}
A {\em monoid} ${\cal M}$   is a  triple $\tuple{M,\circ,e}$, where $M$ is a non-empty set, $\circ :M \times M \rightarrow M$ is a total binary function that is associative, i.e.  $\forall a,b,c \in M: a \circ (b \circ c) = (a \circ b) \circ c$  and  $e \in M$ is the unit element, i.e. $\forall a \in M: a \circ e = e \circ a = a$.  Products $a \circ b$ are also written $a b$.
\end{definition}
\begin{definition}\label{DefCancellation}
The monoid ${\cal M}$ supports {\em left cancellation} if $\forall a,b,c \in M : c \circ a = c \circ b \rightarrow a=b$.
The monoid ${\cal M}$ supports {\em right cancellation} if $\forall a,b,c \in M : a \circ c = b \circ c \rightarrow a=b$.
\end{definition}
We list some wellknown notions used in the paper.
An {\em alphabet} $\Sigma$ is a finite set of symbols. 
%Alphabets are assumed to be finite.
Words of length $n\geq 0$ over an alphabet $\Sigma$ are introduced as usual and written $\tuple{a_1,\ldots,a_n}$ or simply $a_1\ldots a_n$ ($a_i \in \Sigma$). The {\em concatenation} of two words $u=\tuple{a_1,\ldots,a_n}$ and $v=\tuple{b_1,\ldots,b_m}\in \Sigma^\ast$ is $u \cdot v = \tuple{a_1,\ldots,a_n,b_1,\ldots,b_m}.$  The unique word of length $0$ (``empty word'') is written $\varepsilon$. The set of prefixes of a word $w$ is introduced as usual. $\Sigma^\ast$ denotes the set of all words over $\Sigma$. The set  $\Sigma^\ast$ with concatenation as monoid operation and the empty  word $\varepsilon$  as  unit element is called the {\em free monoid} over $\Sigma$. The expression $u^{-1}t$ denotes the word $v$ if $u$ is a prefix of  $t$ and $t=u\cdot v$, otherwise $u^{-1}t$ is undefined. 
%The word $w = u \wedge v$ denotes the {\em longest common prefix} of $u$ and $v$.

\begin{proposition}
Let ${\cal M} = \tuple{M,\circ_M,e_M}$ and ${\cal N} = \tuple{N,\circ_N,e_N}$ be two monoids. Let ${\cal M}\times {\cal N} := \tuple{M\times N,\circ_{M\times N},\tuple{e_M,e_N}}$, where $\circ_{M\times N} : (M\times N)^2 \rightarrow M\times N$  is the total function such that $\tuple{a_1,a_2}  \circ_{M\times N} \tuple{b_1,b_2} := \tuple{a_1 \circ_M b_1, a_2 \circ_N b_2}$. Then ${\cal M}\times {\cal N}$ is a monoid.
\end{proposition}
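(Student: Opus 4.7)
The plan is to verify the three defining properties of a monoid for the triple $\tuple{M\times N,\circ_{M\times N},\tuple{e_M,e_N}}$, each of which will reduce componentwise to the corresponding property for ${\cal M}$ and ${\cal N}$. Since $M\times N$ is non-empty (it contains $\tuple{e_M,e_N}$) and $\circ_{M\times N}$ is declared as a total function from $(M\times N)^2$ to $M\times N$, the only substantive obligations are associativity and the unit law.

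For associativity, I would take arbitrary elements $\tuple{a_1,a_2},\tuple{b_1,b_2},\tuple{c_1,c_2}\in M\times N$ and expand both sides of the associativity equation using the definition of $\circ_{M\times N}$. After one unfolding step both sides become pairs whose first component is an associativity expression over $\circ_M$ and whose second component is an associativity expression over $\circ_N$; applying the associativity of ${\cal M}$ and ${\cal N}$ in each component yields equal pairs.

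For the unit law, I would take an arbitrary $\tuple{a_1,a_2}\in M\times N$ and compute $\tuple{a_1,a_2}\circ_{M\times N}\tuple{e_M,e_N}$ and $\tuple{e_M,e_N}\circ_{M\times N}\tuple{a_1,a_2}$ directly from the definition; both sides reduce componentwise to $\tuple{a_1\circ_M e_M,a_2\circ_N e_N}$ respectively $\tuple{e_M\circ_M a_1,e_N\circ_N a_2}$, and the unit laws in ${\cal M}$ and ${\cal N}$ collapse these to $\tuple{a_1,a_2}$.

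There is no real obstacle here: the proof is a mechanical componentwise verification, and the only thing one has to notice is that the binary operation is well-defined as a function into $M\times N$ because $\circ_M$ and $\circ_N$ are total. I would present the argument as three short bullet-free paragraphs corresponding to well-definedness, associativity, and the unit law.
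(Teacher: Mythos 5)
Your proof is correct: the componentwise verification of associativity and the unit law is exactly the standard argument, and the paper itself omits the proof as routine. Nothing is missing.
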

The monoid ${\cal M}\times {\cal N}$ is called the {\em Cartesian product} of the monoids ${\cal M}$ and ${\cal N}$.

%\paragraph{Monoidal finite-state automata} 

\begin{definition}\label{DefFSA}%ok
A {\em monoidal finite-state automaton} is a tuple of the form
${\cal A} = \tuple{ {\cal M},Q,I,F,\Delta}$ where 
\begin{itemize}
\item
  ${\cal M}=\tuple{M,\circ,e}$ is a monoid, 
\item
  $Q$ is a finite set called the set of states, 
\item
  $I\subseteq Q$ is the set of initial states, 
\item 
  $F\subseteq Q$ is the set of final states, and 
\item
  $\Delta \subseteq Q\times M \times Q$ 
  is a finite set of transitions called the transition relation. 
\end{itemize}
\end{definition}
\begin{definition}
Let ${\cal A} = \tuple{ {\cal M},Q,I,F,\Delta}$ be a monoidal finite-state automaton. 
The {\em generalized transition relation} $\Delta^\ast$ is defined as the
smallest subset of $Q \times M \times Q$ with the following closure 
properties:
\begin{itemize}
\item
  for all $q\in Q$ we have $\tuple{q,e,q} \in \Delta^\ast$.
\item
  For all $q_1, q_2, q_3\in Q$ and $w, a \in M$:
  if $\tuple{q_1,w,q_2} \in \Delta^\ast$ and $\tuple{q_2,a,q_3} \in \Delta$,  
  then also $\tuple{q_1,w \circ a,q_3} \in \Delta^\ast$.
\end{itemize}
The monoidal language  
{\em accepted} (or {\em recognized}) by ${\cal A}$ is 
$$L({\cal A}) :=\set{w\in M}{\exists p \in I\ \exists q \in F: \tuple{p,w,q}\in \Delta^\ast}.$$
\end{definition}
\begin{definition}
Let ${\cal A} = \tuple{ {\cal M},Q,I,F,\Delta}$ be a monoidal finite-state automaton. 
A {\em proper path} of ${\cal A}$ is a finite sequence of $k>0$ transitions
$$
\pi = \tuple{q_0,a_1,q_1}\tuple{q_1,a_2,q_2}\ldots\tuple{q_{k-1},a_k,q_k}
$$
where $\tuple{q_{i-1},a_i,q_i} \in \Delta$ for $i=1\ldots k$.
The number $k$ is called the {\em length} of $\pi$, we say that $\pi$ starts in $q_0$ and ends in $q_k$. 
States $q_0,\ldots,q_k$ are {\em the states on the path $\pi$}. If 
$q_0=q_k$, then the path is called a {\em cycle}. 
The monoid element $w=a_1 \circ \ldots \circ a_k$ is called the {\em label} of $\pi$. 
We denote $\pi$ as
$\pi = q_0 \rightarrow^{a_1} q_1 \ldots \rightarrow^{a_k} q_k$.
%The {\em null path} of $q\in Q$ is $0_q$ starting and ending in $q$ with label $e$. 
A {\em successful path} is a path starting in an initial state and ending in a final state. 
\end{definition}
A state $q\in Q$  is {\em accessible} if $q$ is the ending of a path of ${\cal A}$ starting from an initial state. A state $q\in Q$  is {\em co-accessible} if $q$ is the starting of a path of ${\cal A}$ ending in a final state. 
A monoidal finite-state automaton ${\cal A}$ is {\em trimmed} iff each state $q\in Q$ is accessible and co-accessible. 
A monoidal finite-state automaton ${\cal A}$ is {\em unambiguous} iff for every element $m \in M$ there exists at most one successful path in ${\cal A}$ with label $m$. 
\begin{definition}\label{DefExtExt}
Let ${\cal A} = \tuple{ {\cal M},Q,I,F,\Delta}$ be a monoidal finite-state automaton. Then the {\em $e$-extended automaton} for ${\cal A}$ is the monoidal finite-state automaton ${\cal A}^{ext} = \tuple{ {\cal M},Q,I,F,\Delta^{ext}}$, where $\Delta^{ext} = \Delta \cup \set{\tuple{q,e,q}}{q \in Q}$.
\end{definition}
Clearly the $e$-extended automaton for ${\cal A}$ is equivalent to ${\cal A}$ i.e. $L({\cal A}^{ext})=L({\cal A})$.
\begin{definition}
A monoidal finite-state automaton ${\cal A} = \tuple{ {\cal M},Q,I,F,\Delta}$  is a {\em mo\-noidal finite-state transducer} iff ${\cal M}$  can be represented as the Cartesian product of a free monoid $\Sigma^\ast$ with another monoid ${\cal M}'$,  i.e ${\cal M} = \Sigma^\ast \times {\cal M}'$ and $\Delta\subseteq Q\times (\Sigma \cup \{\varepsilon\}\times M') \times Q$.
A monoidal finite-state transducer ${\cal A} = \tuple{ \Sigma^\ast \times {\cal M}',Q,I,F,\Delta}$ is  {\em real-time} if $\Delta\subseteq Q\times (\Sigma\times M') \times Q$. A monoidal finite-state transducer ${\cal A}$ is  {\em functional} if $L({\cal A})$ is a function. In this case we denote the function recognized by ${\cal A}$ as $O_{\cal A} : \Sigma^\ast \rightarrow M'$.
\end{definition}
Let ${\cal M}$ be a monoid. A language $L \subseteq M$ is {\em rational} iff it is accepted by a monoidal finite-state automaton. A {\em rational function} is a rational language that is a function.
\begin{definition}\label{DefBimachine}
A {\em bimachine} is a tuple ${\cal B}=\tuple{{\cal M},{\cal A}_L,{\cal A}_R,\psi}$, where:
\begin{itemize}
\item ${\cal A}_L = \tuple{\Sigma,L,s_L,L,\delta_L}$ and ${\cal A}_R = \tuple{\Sigma,R,s_R,R,\delta_R}$ are deterministic finite-state automata called the {\em left} and {\em right automaton} of the bimachine;
\item ${\cal M} = \tuple{M,\circ,e}$ is the output monoid and $\psi : (L \times \Sigma \times R) \rightarrow M$ is a partial function called the {\em  output function}.
\end{itemize}
Note that all states of ${\cal A}_L$ and ${\cal A}_L$ are final. The function $\psi$ is naturally extended to the  {\em generalized output function}\index{Generalized output function of a bimachine} $\psi^\ast$ as follows:
\begin{itemize}
\item $\psi^\ast(l,\varepsilon,r)=e$ for all $l\in L, r\in R$;
\item $\psi^\ast(l,t\sigma,r)= \psi^\ast(l,t,\delta_R(r,\sigma)) \circ  \psi(\delta^\ast_L(l,t),\sigma,r)$ for $l\in L, r\in R, t\in\Sigma^\ast, \sigma\in\Sigma$.
\end{itemize}
The {\em function represented by the bimachine} is 
$$O_{\cal B}:\Sigma^\ast \rightarrow M: t  \mapsto \psi^\ast(s_L,t,s_R).$$
If $O_{\cal B}(t)=t'$ we say that the bimachine ${\cal B}$ {\em  translates $t$ into $t'$}. 
\end{definition}

\section{Monoids with most general equalizers}\label{SecMGE}

Monoids with most general equalizers (mge monoids), to be introduced below, generalize both free monoids and groups. 
In this section we study properties that provide the basis for the bimachine construction presented afterwards.  
%We present a method for converting a functional real-time transducer ${\cal T}$ over a mge monoid into a bimachine. 
%The interest in this construction relies on the fact that here the number of states of the left and right deterministic 
%automaton of the resulting bi
%machine is bounded by $2^{\vert Q\vert}$, where $Q$ is the 
%set of states of ${\cal T}$. 

\begin{definition}\label{Defmostgeneralequalizers}
Let ${\cal M}=\tuple{M,\circ,e}$ be a monoid, let $n\geq 1$. 
A tuple $\tuple{ m_1,\ldots m_n}$ in $M^n$ is called 
{\em equalizable} iff there exists $\tuple{x_1,\ldots,x_n} \in M^n$ such that 
$$m_1 x_1=\ldots = m_n x_n.$$ 
In this situation 
$\tuple{ x_1,\ldots,x_n}$ is called an {\em equalizer} for $\tuple{ m_1,\ldots m_n}$. 
%An equalizer $\tuple{ x_1,\ldots,x_n}$ is {\em canonical} if there exists at least one index $c\in \{1,\ldots,n\}$ such that $x_c=e$.  
If 
$\tuple{ x_1,\ldots,x_n}$ is an equalizer for $\tuple{ m_1,\ldots m_n}$ and $x\in M$, then  
$\tuple{ x_1 x,\ldots,x_n x}$ is called an {\em instance} of $\tuple{ x_1,\ldots,x_n}$. An equalizer
$\tuple{ x_1,\ldots,x_n}$ for $\tuple{ m_1,\ldots m_n}$ is called a {\em most general equalizer} (mge) for 
$\tuple{ m_1,\ldots m_n}$ if each equalizer
$\tuple{ x_1',\ldots,x_n'}$ for $\tuple{ m_1,\ldots m_n}$ is an instance of $\tuple{ x_1,\ldots,x_n}$. By $Eq^{(n)}(m_1,\ldots,m_n) \subseteq M^n$ we denote the set of all equalizers of $\tuple{ m_1,\ldots m_n}$.
\end{definition}
As a matter of fact for $n>1$ in general not all tuples $\tuple{ m_1,\ldots,m_n}$ in a monoid are equalizable. There are monoids with equalizable tuples that do not have any mge. 
In our context, most general equalizers become relevant when considering 
intermediate outputs $\tuple{ m_1,\ldots,m_n}$ of a functional transducer obtained for the same input word 
$w$ on distinct initial paths. If for some continuation $v$ of $w$ each path can be continued to a final state, on a path with input $v$, then the completed outputs 
$m_1m_1',\ldots,m_nm_n'$ must be equal. If $\tuple{ x_1,\ldots,x_n}$ is a most general equalizer of $\tuple{ m_1,\ldots,m_n}$, then we 
can safely output $m_1x_1=\ldots = m_nx_n$, ``anticipating'' {\em necessary} output. Below we use this idea for bimachine construction. 
%We write $\mge(m_1,m_2)$ for the set of most general equalizers of 
%$\tuple{ m_1,m_2} \in M^2$. 
%In what follows, ``ce'' and ``cmge'' respectively abbreviate``canonical equalizer'' and ``canonical mge''. 
%\begin{lemma}\label{Lemma_ce_is_cmge}
%Let ${\cal M}=(M,\circ,e)$ be a monoid with left cancellation, let $\tuple{ m_1,\ldots,m_n}\in M^2$ ($n\geq 2$). Then 
%each ce for $\tuple{ m_1,\ldots m_n}$ is a cmge for $\tuple{ m_1,\ldots m_n}$. 
%\end{lemma}
%
%\begin{proof}
%Let $\tuple{x_1,\ldots,x_n}$ be a ce for $\tuple{ m_1,\ldots m_n}$. There exists $c\in \{1,\ldots,n\}$ such that $x_c=e$. 
%let $\tuple{u_1,\ldots,u_n}$ be any equalizer for $\tuple{ m_1,m_2}$. Let $c\neq j \in \{1,\ldots,n\}$. Then
%$m_cx_c=m_ce=m_c=m_jx_j$. Since $\tuple{u_1,\ldots,u_n}$ is an equalizer for $\tuple{ m_1,\ldots,m_n}$ we obtain 
%$m_jx_ju_c = m_cu_c=m_ju_j$. Using left cancellation we get $u_j=x_ju_c$. It follows that 
%$\tuple{u_1,\ldots,u_n}=\tuple{x_1u_c,\ldots,x_nu_c}$, which shows that 
%$\tuple{u_1,\ldots,u_n}$ is an instance of $\tuple{x_1,\ldots,x_n}$.\qed
%\end{proof}
%
We introduce a special class of monoids where 
equalizable pairs always have mge's. 
\begin{definition}\label{DefceMonoid}
The monoid ${\cal M}=\tuple{M,\circ,e}$ is a {\em mge monoid} iff ${\cal M}$ has right cancellation and 
each equalizable pair $\tuple{ m_1,m_2} \in M^2$ has a mge.
%\begin{enumerate}
%\item Each equalizable pair $\tuple{ m_1,m_2} \in M^2$ has a ce. 
%\item (Left cancellation)  for all $m,m_1,m_2\in M$: $m m_1 = m m_2$ implies $m_1=m_2$. 
%\item (Right cancellation) for all $m,m_1,m_2\in M$: $m_2 m = m_1 m$ implies $m_1=m_2$. 
%\end{enumerate}
\end{definition}
\begin{example}\label{Ex1mgeFree}
(a) Let $\Sigma$ be an alphabet. The free monoid $\tuple{\Sigma^\ast,\cdot,\varepsilon}$ for alphabet $\Sigma$ is 
a mge monoid. In fact, a pair of words $\tuple{ u,v}$ is equalizable iff $u$ is a prefix of $v=u x$ or $v$ is a prefix of 
$u=vx$. In the former (latter) case $\tuple{ x,\varepsilon}$ (resp. $\tuple{ \varepsilon,x}$) is a mge for $\tuple{ u,v}$. \\
(b) Another example of a mge monoid is the additive monoid of nonnegative real numbers, $(\bbbr_{\geq 0},+,0)$. Consider a pair of non-negative reals $\tuple{m,n}$.
Let $M=\max(m,n)$, thus $M-m,M-n\in \mathbb{R}_{\ge 0}$.First, $m+(M-m)=n+(M-n)=M$ shows that $\tuple{M-m,M-n}$ is an equalizer for $\tuple{m,n}$. Next, if $\tuple{x,y}$ is an arbitrary equalizer for $\tuple{m,n}$, then $m+x=n+y=N$. Since $x$ and $ y$ are nonnegative $N\ge M$ and $N-M\in \mathbb{R}_{\ge 0}$. Thus, $x=(M-m) +(N-M)$ and $y=(M-n)+(N-M)$. This proves that
$\tuple{M-m,M-n}$ is a mge of $\tuple{m,n}$.\\
\iffalse
(b) Another example of a mge monoid is $(\bbbr^{\geq 0},+,0)$, a submonoid of the ``tropical semiring'' 
$(\bbbr \cup \{\infty\},\oplus,\odot)$ where $x\oplus y=\min(x,y)$ and $x\odot y= x +y$. 
Consider a pair of non-negative reals $(m,n)$. We may assume that $n\geq m$.  If $(v,w)$ is any equalizer, then $m+v=n+w$, $(v,w)=(n-m+w,0+w)$. Since
$(n-m,0)$ is an equalizer for $(m,n)$ it is a mge. \\
\fi
(c) Both Example (a) and Example (b) are special cases of sequentiable structures in the sense of 
Gerdjikov \& Mihov \cite{LATA2017}. It can be shown that all sequentiable structures are mge monoids. This is a simple consequence of Proposition~2 in \cite{LATA2017}. 
\end{example}
%
%\begin{example}\label{Ex1mgeFree}
%Sequentiable structures in the sense of Gerdjikov \& Mihov \cite{LATA2017} are mge monoids. This is a simple consequence of Proposition~2 in \cite{LATA2017}. 
%\end{example}
%
\begin{example}\label{Ex2mgeGroup}
Let ${\cal G}=\tuple{G,\cdot,1,^{-1}}$ be a group. Then $\tuple{G,\cdot,1}$ is a mge monoid. For each pair 
$\tuple{g,h}$ of group elements, both $\tuple{1,h^{-1}\cdot g}$ and 
$\tuple{ g^{-1}\cdot h,1}$ are mge's. 
\end{example}
\begin{lemma}\label{LemmaCartesianMGE}
Let ${\cal M}=\tuple{M,\circ_M,e_m}$ and ${\cal N}=\tuple{N,\circ_N,e_n}$ be mge monoids. Then the Cartesian product  
${\cal M}\times {\cal N}$ is a mge monoid. 
\end{lemma}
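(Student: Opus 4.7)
My plan is to verify the two defining properties of a mge monoid for ${\cal M}\times {\cal N}$ by reducing each of them to the corresponding property of the two factor monoids componentwise.

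First I would dispose of right cancellation. Given $\tuple{a_1,a_2}\circ_{M\times N}\tuple{c_1,c_2}=\tuple{b_1,b_2}\circ_{M\times N}\tuple{c_1,c_2}$, the definition of $\circ_{M\times N}$ yields $a_1\circ_M c_1=b_1\circ_M c_1$ and $a_2\circ_N c_2=b_2\circ_N c_2$. Right cancellation in each factor gives $a_1=b_1$ and $a_2=b_2$, hence $\tuple{a_1,a_2}=\tuple{b_1,b_2}$.

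For the mge property, take an equalizable pair $\tuple{\tuple{m_1,n_1},\tuple{m_2,n_2}}$ in $(M\times N)^2$. Unfolding the product operation, equalizability is exactly equivalent to the two projected pairs $\tuple{m_1,m_2}$ and $\tuple{n_1,n_2}$ being equalizable in ${\cal M}$ and ${\cal N}$ respectively. Using the mge hypothesis on each factor, I pick a mge $\tuple{\bar{x}_1,\bar{x}_2}$ for $\tuple{m_1,m_2}$ in ${\cal M}$ and a mge $\tuple{\bar{y}_1,\bar{y}_2}$ for $\tuple{n_1,n_2}$ in ${\cal N}$. I then claim that $\tuple{\tuple{\bar{x}_1,\bar{y}_1},\tuple{\bar{x}_2,\bar{y}_2}}$ is a mge for the original pair in the product. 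It is clearly an equalizer.

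The main (minor) step is the generality check. Given any equalizer $\tuple{\tuple{x_1,y_1},\tuple{x_2,y_2}}$ for the pair in ${\cal M}\times {\cal N}$, the projections yield equalizers $\tuple{x_1,x_2}$ and $\tuple{y_1,y_2}$ for the factor problems. Hence there are $z\in M$ and $w\in N$ with $x_i=\bar{x}_i\circ_M z$ and $y_i=\bar{y}_i\circ_N w$ for $i=1,2$. Combining the two instance elements into the single element $\tuple{z,w}\in M\times N$ gives $\tuple{x_i,y_i}=\tuple{\bar{x}_i,\bar{y}_i}\circ_{M\times N}\tuple{z,w}$, so the arbitrary equalizer is an instance of the proposed one. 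I expect no real obstacle here; the only substantive observation is that the two independent instance witnesses from the two factors can be packaged into one witness in the product, which is precisely why the definition of mge localises well to Cartesian products.
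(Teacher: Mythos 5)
Your proposal is correct and follows essentially the same componentwise argument as the paper: project the equalizable pair to the two factors, take a mge in each, pair them up, and package the two instance witnesses into a single witness $\tuple{z,w}$ in the product. The only difference is that you also verify right cancellation explicitly, which the paper leaves implicit, so if anything your write-up is slightly more complete.
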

\begin{proof}
Let $\tuple{\tuple{m_1,n_1},\tuple{ m_2,n_2}}$ be an equalizable pair, let 
$\tuple{\tuple{ u_1,v_1},\tuple{ u_2,v_2}}$ be an equalizer. Then 
$$\tuple{ m_1u_1,n_1v_1}=\tuple{ m_1,n_1}\tuple{ u_1,v_1}=\tuple{ m_2,n_2}\tuple{ u_2,v_2}=\tuple{ m_2u_2,n_2v_2},$$
which implies that 
\begin{eqnarray*}
m_1u_1 &=& m_2u_2\\
n_1v_1 &=& n_2v_2.
\end{eqnarray*}
Both $\tuple{m_1,m_2}$ and $\tuple{n_1,n_2}$ are equalizable. Let $\tuple{x_1,x_2}$ and $\tuple{y_1,y_2}$ respectively denote
mge's. For some $c\in M$ and $d\in N$ we have 
$u_i=x_ic$ and $v_i=y_id$ ($i=1,2$). The equations
$$\tuple{ m_1,n_1}\tuple{ x_1,y_1} = \tuple{ m_1x_1,n_1y_1}=\tuple{ m_2x_2,n_2y_2}=\tuple{ m_2,n_2}\tuple{ x_2,y_2}.$$
show that $\tuple{\tuple{ x_1,y_1},\tuple{ x_2,y_2}}$ is an equalizer for 
$\tuple{\tuple{ m_1,n_1},\tuple{ m_2,n_2}}$. Furthermore, 
$$\tuple{\tuple{ x_1,y_1}\tuple{ c,d},\tuple{ x_2,y_2})\tuple{ c,d}}=\tuple{\tuple{ x_1c,y_1d},\tuple{ x_2c,y_2d}}=\tuple{\tuple{ u_1,v_1},\tuple{ u_2,v_2}}.$$
Hence each equalizer is an instance of $\tuple{\tuple{ x_1,y_1},\tuple{ x_2,y_2}}$ and 
$\tuple{\tuple{ x_1,y_1},\tuple{ x_2,y_2}}$ is a mge.\qed
\end{proof}
\begin{lemma}\label{LemmaLeftCancelation}
Let ${\cal M}=\tuple{M,\circ,e}$ be a mge monoid. Then ${\cal M}$ has left cancellation and for each element $m \in M$ the pair $\tuple{e,e}$ is a mge of $\tuple{m,m}$.
\end{lemma}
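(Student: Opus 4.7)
The plan is to prove left cancellation first and then derive the mge claim as a direct consequence.

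First I would observe that for any $m \in M$ the pair $\tuple{m,m}$ is equalizable, since $m \circ e = m \circ e$ witnesses that $\tuple{e,e}$ itself is an equalizer of $\tuple{m,m}$. By the defining property of a mge monoid, $\tuple{m,m}$ then has some mge $\tuple{x_1,x_2}$. Applying the mge property to the specific equalizer $\tuple{e,e}$, there exists $y \in M$ with $\tuple{e,e} = \tuple{x_1 y, x_2 y}$, so in particular $x_1 y = x_2 y$. Right cancellation (which is part of the definition of a mge monoid) then yields $x_1 = x_2$; call this common value $x$, so $\tuple{x,x}$ is a mge of $\tuple{m,m}$.

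Next I would prove left cancellation using this $x$. Assume $m u = m v$ for $u,v \in M$. Then $\tuple{u,v}$ is by definition an equalizer of $\tuple{m,m}$, hence an instance of the mge $\tuple{x,x}$: there exists $a\in M$ with $u = x a$ and $v = x a$. Therefore $u = v$, establishing left cancellation.

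Finally, for the claim that $\tuple{e,e}$ is itself a mge of $\tuple{m,m}$, I would take an arbitrary equalizer $\tuple{u,v}$ of $\tuple{m,m}$. By definition $m u = m v$, so left cancellation (just proved) gives $u = v$. Thus $\tuple{u,v} = \tuple{e \circ u, e \circ u}$, which is an instance of $\tuple{e,e}$ with parameter $u$. Since every equalizer is an instance of $\tuple{e,e}$, we conclude that $\tuple{e,e}$ is a mge of $\tuple{m,m}$, completing the proof.

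No step looks like a serious obstacle; the only subtle point is the order of the argument. Left cancellation is what one actually wants to apply, but to obtain it one must first extract a mge $\tuple{x,x}$ from the hypotheses and exploit right cancellation to collapse its two components. Once that symmetry is in hand, both conclusions fall out immediately.
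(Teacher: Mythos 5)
Your proof is correct and follows essentially the same route as the paper's: both extract a mge of $\tuple{m,m}$, use the fact that $\tuple{e,e}$ is an instance of it together with right cancellation to collapse the two components, deduce left cancellation, and then conclude that $\tuple{e,e}$ is itself a mge. The only cosmetic difference is that you name the collapsed mge $\tuple{x,x}$ and route the left-cancellation step through it explicitly, whereas the paper applies the instance property directly to the equalizer $\tuple{a,b}$.
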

\begin{proof}
Let $a,b,c$ be arbitrary elements of $M$ such that $ca=cb$. Clearly the pair $\tuple{e,e}$ is an equalizer of $\tuple{c,c}$ and therefore there exists a mge $\tuple{x,y}$ of $\tuple{c,c}$. Hence there is some $d \in M$ such that $x d = e = y d$. Therefore, by the right cancellation property we get that $x = y$. On the other hand $\tuple{a,b}$ is also an equalizer of $\tuple{c,c}$ and therefore there is some $d' \in M$ such that $x d' = a$ and $y d' = b$. Since $x=y$ we obtain that $a=b$. 

\noindent Let $\tuple{a,b}$ be an equalizer of $\tuple{m,m}$. Then $ma=mb$ and thus $a=b$. Hence $\tuple{e,e}$ is a mge of $\tuple{m,m}$.\qed
\end{proof}
\begin{definition}\label{DefJointequalizers}
Let ${\cal M}$ be a monoid, let $n\geq 1$. A tuple $\tuple{ x_1,\ldots,x_n}$ 
is called a {\em joint equalizer} for $\tuple{ m_1,\ldots m_n}$ and 
$\tuple{ m'_1,\ldots m'_n}$ if $\tuple{ x_1,\ldots,x_n}$ is an equalizer 
both for $\tuple{ m_1,\ldots m_n}$ and for $\tuple{ m_1,\ldots m_n}$.
\end{definition}
Note that in the situation of the definition we have $m_1x_1=\ldots = m_nx_n$ and 
$m'_1x_1=\ldots =m'_nx_n$ but we do {\em not} demand that $m_ix_i=m_i'x_i$ ($1\leq i\leq n$). 
\begin{lemma}\label{LemmaJointEq}
Let ${\cal M}=\tuple{M,\circ,e}$ be a mge monoid.
%, let $m_1,\ldots,m_l,n_1,\ldots,n_l \in M$. 
If $\tuple{ m_1,\ldots,m_l}$ and $\tuple{ n_1,\ldots,n_l}$ have a joint equalizer, then 
each mge for $\tuple{ m_1,\ldots,m_l}$ is a mge for $\tuple{ n_1,\ldots,n_l}$ and vice versa. Furthermore $Eq^{(l)}(m_1,\ldots,m_l) = Eq^{(l)}(n_1,\ldots,n_l)$.
\end{lemma}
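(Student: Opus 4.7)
The plan is to first establish the equality $Eq^{(l)}(m_1,\ldots,m_l) = Eq^{(l)}(n_1,\ldots,n_l)$; the claim about mge's then falls out essentially for free from the definition. Fix a joint equalizer $\tuple{x_1,\ldots,x_l}$ of the two tuples. Since the hypothesis is symmetric in the $m_i$'s and $n_i$'s, it suffices to prove one inclusion, say $Eq^{(l)}(m_1,\ldots,m_l) \subseteq Eq^{(l)}(n_1,\ldots,n_l)$.

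Let $\tuple{y_1,\ldots,y_l}$ be an equalizer of $\tuple{m_1,\ldots,m_l}$; I need to show that $n_iy_i = n_jy_j$ for any fixed indices $i,j$. The key idea is to reduce this to the pair $\tuple{m_i,m_j}$, where the mge monoid hypothesis (which only guarantees mge's for pairs) applies. The pair $\tuple{m_i,m_j}$ is equalizable, because both $\tuple{x_i,x_j}$ and $\tuple{y_i,y_j}$ equalize it. So pick a mge $\tuple{u_i,u_j}$ of $\tuple{m_i,m_j}$. By the mge property, there exist $c,d \in M$ with $y_i = u_i c$, $y_j = u_j c$, $x_i = u_i d$, $x_j = u_j d$.

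Now I use the joint equalizer property on the $n$-side: $n_i x_i = n_j x_j$ gives $n_i u_i d = n_j u_j d$, and right cancellation in the mge monoid yields $n_i u_i = n_j u_j$. Multiplying on the right by $c$ gives $n_i y_i = n_i u_i c = n_j u_j c = n_j y_j$, as required. Since $i,j$ were arbitrary, $\tuple{y_1,\ldots,y_l} \in Eq^{(l)}(n_1,\ldots,n_l)$. The symmetric argument gives the other inclusion, so the two equalizer sets coincide.

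For the mge statement: suppose $\tuple{y_1,\ldots,y_l}$ is a mge of $\tuple{m_1,\ldots,m_l}$. By the set equality just proved, it is an equalizer of $\tuple{n_1,\ldots,n_l}$. Any other equalizer $\tuple{z_1,\ldots,z_l}$ of $\tuple{n_1,\ldots,n_l}$ is, again by the set equality, also an equalizer of $\tuple{m_1,\ldots,m_l}$, and hence an instance of $\tuple{y_1,\ldots,y_l}$; so $\tuple{y_1,\ldots,y_l}$ is a mge of $\tuple{n_1,\ldots,n_l}$. The converse direction is symmetric. The only delicate step is the reduction from an $l$-tuple to pairs in the second paragraph; once one sees that pairwise mge's of the $m_i,m_j$ can absorb both the hypothetical equalizer $\tuple{y_i,y_j}$ and the joint equalizer $\tuple{x_i,x_j}$ through a common factorization, right cancellation completes everything cleanly.
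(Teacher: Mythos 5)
Your proof is correct, but it takes a genuinely different route from the paper's. The paper fixes mge's $\tuple{x_1,\ldots,x_l}$ and $\tuple{y_1,\ldots,y_l}$ for the two $l$-tuples, writes the joint equalizer as an instance $\tuple{x_1c,\ldots,x_lc}$, cancels $c$ on the right to conclude that $\tuple{x_1,\ldots,x_l}$ equalizes $\tuple{n_1,\ldots,n_l}$ and is therefore an instance of $\tuple{y_1,\ldots,y_l}$, and finishes by symmetry; the equalizer-set equality is then implicit. You instead prove $Eq^{(l)}(m_1,\ldots,m_l)=Eq^{(l)}(n_1,\ldots,n_l)$ first, by reducing to the pairs $\tuple{m_i,m_j}$ and letting a pairwise mge $\tuple{u_i,u_j}$ absorb both the joint equalizer and the candidate equalizer through a common factorization before applying right cancellation, and you then read off the mge claim as a purely formal consequence of the set equality. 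Your version buys two things: it invokes only mge's of \emph{pairs}, which exist by the very definition of a mge monoid, whereas the paper's argument presupposes the existence of mge's for equalizable $l$-tuples --- a fact the paper only establishes later, in Lemma~\ref{LemmaCmgesinMGEmonoids} (harmless in practice, since the lemma is only ever applied with $l=2$, but a forward dependency nonetheless); and it makes the asserted equality of equalizer sets an explicit, self-contained step rather than a byproduct. The paper's argument is shorter; both are sound.
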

\begin{proof}
Let $\tuple{k_1,\ldots,k_l}$ denote a joint equalizer for $\tuple{ m_1,\ldots,m_l}$ and $\tuple{ n_1,\ldots,n_l}$. 
Let $(x_1,\ldots,x_l)$ be a mge for $\tuple{ m_1,\ldots,m_l}$, let $(y_1,\ldots,y_l)$ be a mge for $\tuple{ n_1,\ldots,n_l}$. 
Then there exists some $c\in M$ such that $\tuple{k_1,\ldots,k_l}=(x_1c,\ldots,x_lc)$. 
We obtain $$n_1x_1c= \ldots =n_lx_lc.$$
Using right cancellation we get $n_1x_1=\ldots=n_lx_l$, which shows that $\tuple{x_1,\ldots,x_l}$ is an equalizer for $\tuple{ n_1,\ldots,n_l}$ and an instance of $\tuple{y_1,\ldots,y_l}$. 
Symmetrically we see that $\tuple{y_1,\ldots,y_l}$ is an instance of $\tuple{x_1,\ldots,x_l}$. It follows that both $\tuple{x_1,\ldots,x_l}$ and $\tuple{y_1,\ldots,y_l}$ are mge's for 
$\tuple{ m_1,\ldots,m_l}$ and for $\tuple{ n_1,\ldots,n_l}$.\qed
\end{proof}
\begin{definition}
Let ${\cal M}=\tuple{M,\circ,e}$ be a mge monoid. An element $m\in M$ is {\em invertible} if there exists an element $n\in M$ such that $mn=e$. It follows from the left cancellation property that $n$ is unique, we write 
$m^{-1}$ for $n$ and call $m^{-1}$ the {\em inverse} of $m$. 
\end{definition}
If $m$ has an inverse $m^{-1}$, then 
$mm^{-1}m=m=me$ and left cancellation shows that $m^{-1}m=e$. Hence ``right'' inverses are left inverses and vice versa. 

\begin{lemma}\label{lemmaSolvability}
Let ${\cal M}=\tuple{M,\circ,e}$ be a mge monoid. Let $m,n\in M$ be given and assume that the equation
$mx=n$ has a solution. Then the solution is unique and if $\tuple{x_1,x_2}$ is a mge of $\tuple{m,n}$ then $x=x_1 x_2^{-1}$. 
\end{lemma}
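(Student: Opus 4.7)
The plan is to exploit the key observation that a solution $x$ to $mx=n$ immediately produces a very simple equalizer of $\langle m,n\rangle$, namely $\langle x,e\rangle$, since $mx = n = ne$. Once this is in hand, the mge property forces $\langle x,e\rangle$ to be an instance of the given mge $\langle x_1,x_2\rangle$, and reading off the second coordinate will show that $x_2$ is invertible and that its inverse is exactly the ``instantiation factor''.

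More concretely, first I would argue uniqueness: if both $x$ and $x'$ satisfy $mx=n=mx'$, then $mx=mx'$, and Lemma~\ref{LemmaLeftCancelation} (which provides left cancellation in every mge monoid) gives $x=x'$. This settles uniqueness independently of the mge structure.

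Next, for the explicit formula, I would fix a solution $x$ of $mx=n$ and a mge $\langle x_1,x_2\rangle$ of $\langle m,n\rangle$. Since $mx = ne$, the pair $\langle x,e\rangle$ is an equalizer of $\langle m,n\rangle$. By Definition~\ref{Defmostgeneralequalizers}, it is an instance of $\langle x_1,x_2\rangle$, so there exists $c\in M$ with
\[
x_1 c = x \qquad \text{and}\qquad x_2 c = e.
\]
The second equation says that $x_2$ is invertible with $x_2^{-1} = c$ (the right inverse is also a left inverse, as noted just before the lemma). Substituting into the first equation yields $x = x_1 x_2^{-1}$, completing the proof.

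There is no real obstacle here — the argument is a direct unpacking of the definitions — so the only thing to be careful about is the logical order: uniqueness must be handled via the left-cancellation lemma (not via the formula, which would be circular if one wanted to compare it across different choices of mge), and the invertibility of $x_2$ should be justified by citing the paragraph after Definition of invertible element that asserts right inverses are left inverses in a mge monoid.
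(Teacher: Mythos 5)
Your proof is correct and follows essentially the same route as the paper's: uniqueness via left cancellation, then observing that $\tuple{x,e}$ is an equalizer of $\tuple{m,n}$, hence an instance $\tuple{x_1c,x_2c}$ of the mge with $x_2c=e$, so $c=x_2^{-1}$ and $x=x_1x_2^{-1}$. The only cosmetic difference is that you read the formula off the first instance equation directly, whereas the paper verifies $mx_1x_2^{-1}=n$ and invokes uniqueness; both are sound.
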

\begin{proof} Uniqueness of solutions follows directly from left cancellation. 
If $mx=n$ has a solution $x$, then $\tuple{x,e}$ is an equalizer for $\tuple{m,n}$. 
 Then there exists an element $d\in M$ such that 
$x_1d=x$ and $x_2d=e$. Therefore $d$ is the inverse of $x_2$ i.e. $d :=x_2^{-1}$. Now $mx_1x_2^{-1}=nx_2x_2^{-1}=n$. The product 
$x_1x_2^{-1}$ represents the unique solution for $mx=n$.\qed
\end{proof}
\begin{definition}\label{DefComputational}  
A mge monoid ${\cal M}=(M,\circ,e)$ is {\em effective} if 
(i) M can be represented as a recursive subset of $\bbbn$ and the monoid operation is computable (ii) the equality of two monoid elements is decidable, (iii) it is decidable whether a 
given pair $\tuple{ m,m'}\in M^2$ is equalizable, there is a computable function $\eta \in M^2 \rightarrow M^2$, such that for each equalizable pair $\tuple{ m,m'}$ always $\eta(m,m')$ is a mge of $\tuple{ m,m'}$, and
(iv) for every invertible element $m\in M$ the inverse $m^{-1}$ is computable.
\end{definition}
It is simple to see that the class of effective mge monoids is closed under Cartesian products and that
the above examples of mge monoids are effective under natural assumptions\footnote{The tropical semiring is not effective, but the finite set of transition outputs of a given transducer can be embedded in an effective mge submonoid of the tropical semiring.}.
\begin{corollary}\label{CorSolvability}
Let ${\cal M}=\tuple{M,\circ,e}$ be an effective mge monoid. Let $m,n\in M$ be given and assume that the equation
$mx=n$ has a solution. Then we may effectively compute the solution.
\end{corollary}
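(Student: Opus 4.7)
The plan is to reduce the corollary directly to Lemma~\ref{lemmaSolvability} together with the four effectiveness conditions in Definition~\ref{DefComputational}. By the lemma, whenever $mx=n$ is solvable, the unique solution has the explicit form $x = x_1 \circ x_2^{-1}$, where $\tuple{x_1,x_2}$ is any mge of $\tuple{m,n}$. So the algorithm writes itself: compute a mge of $\tuple{m,n}$, invert the second component, and concatenate.

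Concretely, I would proceed as follows. First, since $mx=n$ has a solution $x$, the pair $\tuple{x, e}$ is an equalizer of $\tuple{m,n}$, so $\tuple{m,n}$ is equalizable. By effectiveness condition (iii), we may apply the computable function $\eta$ to obtain a mge $\tuple{x_1,x_2} := \eta(m,n)$. Next, I would invoke the proof of Lemma~\ref{lemmaSolvability}: since $\tuple{x,e}$ is an equalizer for $\tuple{m,n}$, it is an instance of $\tuple{x_1,x_2}$, so there exists $d\in M$ with $x_1 d = x$ and $x_2 d = e$. This last equation shows that $x_2$ is invertible, with $x_2^{-1}=d$. By effectiveness condition (iv), we may then compute $x_2^{-1}$. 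Finally, by condition (i), the product $x_1 \circ x_2^{-1}$ is computable, and by Lemma~\ref{lemmaSolvability} this product equals $x$.

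There is essentially no obstacle here, since the heavy lifting has already been done in Lemma~\ref{lemmaSolvability}; the only point that requires a moment's attention is noting that condition (iv) is applicable, i.e.\ that the second component of the mge produced by $\eta$ is actually invertible. But this is exactly what the lemma's proof establishes from the mere existence of a solution to $mx=n$, so no additional verification is needed. The correctness of the output then follows from the uniqueness clause of Lemma~\ref{lemmaSolvability}.
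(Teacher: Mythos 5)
Your proof is correct and matches the paper's intent: the corollary is stated without an explicit proof precisely because it follows immediately from Lemma~\ref{lemmaSolvability} combined with conditions (i), (iii), and (iv) of Definition~\ref{DefComputational}, which is exactly the reduction you carry out. Your observation that the invertibility of the second mge component is already guaranteed by the lemma's proof is the right (and only) point needing attention, and you handle it correctly.
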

\begin{lemma}\label{LemmaCmgesinMGEmonoids}
Let ${\cal M}=\tuple{M,\circ,e}$ be a mge monoid. Then 
for each $n\geq 1$, every equalizable tuple $\tuple{m_1,\ldots,m_n}$ has a mge. 
\end{lemma}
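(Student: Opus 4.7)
I will prove this by induction on $n$. The case $n=1$ is trivial, since every single-element tuple $\tuple{m}$ is equalized by $\tuple{e}$ and every equalizer $\tuple{x}$ satisfies $x = e \circ x$, so $\tuple{e}$ is a mge. The case $n=2$ is precisely the defining property of a mge monoid. For the inductive step, I assume the claim for all tuples of length up to some $n \geq 2$ and prove it for a given equalizable tuple $\tuple{m_1,\ldots,m_{n+1}}$ with equalizer $\tuple{x_1,\ldots,x_{n+1}}$.

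The plan is to combine two smaller mge's. The initial segment $\tuple{m_1,\ldots,m_n}$ is equalizable via $\tuple{x_1,\ldots,x_n}$, so by the inductive hypothesis it has a mge $\tuple{y_1,\ldots,y_n}$. The final pair $\tuple{m_n,m_{n+1}}$ is equalizable via $\tuple{x_n,x_{n+1}}$, so by the $n=2$ base case it has a mge $\tuple{z_n,z_{n+1}}$. Since both $\tuple{y_1,\ldots,y_n}$ and $\tuple{z_n,z_{n+1}}$ subsume the relevant portions of $\tuple{x_1,\ldots,x_{n+1}}$, we get some $c,d\in M$ with $x_i = y_i c$ for $i\leq n$ and $x_n = z_n d$, $x_{n+1} = z_{n+1} d$. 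In particular $y_n c = z_n d$, so the pair $\tuple{y_n,z_n}$ is itself equalizable and therefore has a mge $\tuple{a,b}$ by the definition of mge monoid.

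My candidate mge for $\tuple{m_1,\ldots,m_{n+1}}$ is then $\tuple{y_1 a, \ldots, y_n a, z_{n+1} b}$. It is an equalizer: for $i<n$, $m_i y_i a = m_n y_n a$ from the first mge, and $m_n y_n a = m_n z_n b = m_{n+1} z_{n+1} b$ using $y_n a = z_n b$ (which follows from $\tuple{a,b}$ equalizing $\tuple{y_n,z_n}$). To show it is most general, I take an arbitrary equalizer $\tuple{w_1,\ldots,w_{n+1}}$ and apply the two smaller mge properties: there exist $t,s\in M$ with $w_i = y_i t$ for $i\leq n$ and $w_n = z_n s$, $w_{n+1} = z_{n+1} s$. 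Then $y_n t = z_n s$, so $\tuple{t,s}$ equalizes $\tuple{y_n,z_n}$ and is therefore an instance $t = au$, $s = bu$ of $\tuple{a,b}$. Substituting yields $w_i = (y_i a) u$ for $i\leq n$ and $w_{n+1} = (z_{n+1} b) u$, exhibiting $\tuple{w_1,\ldots,w_{n+1}}$ as an instance of the candidate.

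The only nontrivial step is constructing the ``gluing'' mge $\tuple{a,b}$ of $\tuple{y_n,z_n}$; this is where I must verify that those two elements are indeed equalizable, which is forced by the assumption that the original tuple $\tuple{m_1,\ldots,m_{n+1}}$ admits an equalizer. Right cancellation, which holds in every mge monoid, is implicitly needed throughout to ensure uniqueness of the constants $c,d,t,s,u$ derived from instance relationships, but no calculation beyond these substitutions is required.
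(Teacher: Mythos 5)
Your proof is correct and follows essentially the same route as the paper's: induct on $n$, take a mge of the initial segment and a mge of the last pair, glue them via a mge of the two ``$n$-th components,'' and check that every equalizer factors through the result. The only (harmless) difference is that you verify the equalizer and most-generality properties for an arbitrary equalizer separately, whereas the paper runs the computation on the originally chosen equalizer and then remarks that it was arbitrary.
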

\begin{proof}
We use induction on $n$. For $n=1$ note that $e$ is always a mge for $\tuple{ m}$. For $n=2$ equalizable pairs have mge's 
by definition of mge monoids. 
Let $n\geq 2$ and 
$\tuple{ m_1,\ldots,m_n,m_{n+1}}$
be equalizable, say 
$m_1v_1=\ldots = m_nv_n=m_{n+1}v_{n+1}$. 
By induction hypothesis there exists a mge $\tuple{ x_1,\ldots,x_n}$ for $\tuple{ m_1,\ldots,m_n}$. 
For some $c\in M$ we have $x_ic=v_i$ for $1\leq i\leq n$. 

Since $m_n$ and $m_{n+1}$ are equalizable there exists a mge $\tuple{y_n,y_{n+1}}$ and some $d\in M$ such that $y_n d=v_n$, $y_{n+1} d=v_{n+1}$. 
Now we have two representations for $v_n$:
$$x_n c=v_n=y_n d.$$
Hence $\tuple{x_n,y_n}$ is equalizable. Let $\tuple{z_x,z_y}$ be a mge. Since $\tuple{c,d}$ is an instance there exists an $f\in M$ such that $c=z_x f$, $d=z_y f$.
Now we can represent the full equalizer $\tuple{v_1,\ldots,v_n,v_{n+1}}$ in the following form
\begin{eqnarray*}
v_1 &=& x_1c=x_1z_xf\\
&\ldots& \\
v_n &=& x_nc=x_nz_xf\\
v_{n+1} &=& y_{n+1}d=y_{n+1}z_yf
\end{eqnarray*}
as an instance of $\tuple{x_1z_x,\ldots,x_n z_x,y_{n+1} z_y}$. As a matter of fact the latter is an equalizer for $\tuple{ m_1,\ldots,m_n,m_{n+1}}$. 
Since $\tuple{v_1,\ldots,v_n,v_{n+1}}$ was {\em any} equalizer for $\tuple{ m_1,\ldots,m_n,m_{n+1}}$
it follows that $\tuple{x_1z_x,\ldots,x_n z_x,y_{n+1} z_y}$ is a mge for $\tuple{ m_1,\ldots,m_n,m_{n+1}}$.\qed
%Lemma~\ref{Lemma_ce_is_cmge} shows that  $\tuple{ x_1x,\ldots,x_nx,e}$ is a cmge.\\
%
\end{proof}
%
%The following remark represents the principle of equalizer accumulation mentioned in the introduction.
%
The proof of Lemma~\ref{LemmaCmgesinMGEmonoids} provides us with an effective way to compute mge's. 
The following corollaries describe the principle of equalizer accumulation mentioned in the introduction. 
\begin{corollary}\label{CorollaryCombinationPairwisecmges1}
Let ${\cal M}$ be an effective mge monoid. If $\tuple{m_1,\ldots,m_n} \in M^n$ are equalizable, then ${\mu}^{(n)}(m_1,\ldots,m_n)$ is the mge of 
$m_1,\ldots,m_n$, where
${\mu}^{(n)}:M^n\rightarrow M^n$ is defined as:
\begin{enumerate}
\item $\mu^{(1)}(m)=e$ for every $m\in M$,
\item $\mu^{(2)}(m_1,m_2)=\eta(m_1,m_2)$ for each equalizable pair $\langle m_1,m_2\rangle$,
\item If $\mu^{(n)}$ is given, then for every equalizable tuple $\langle m_1,\dots,m_{n+1}\rangle \in M^{n+1}$ the $i$-th coordinate of $\mu^{(n+1)}(m_1,\dots,m_{n+1})$, denoted $\mu^{(n+1)}_i(m_1,\dots,m_{n+1})$,  is given by
\begin{eqnarray*}
&&\mu_i^{(n+1)}(m_1,\dots,m_{n+1})=\\
&&\begin{cases}
\mu_i^{(n)}(m_1,\ldots,m_n) \circ z_x\text{ for } i\le n\\
\eta_2(m_n,m_{n+1}) \circ z_y \text{ for } i=n+1
\end{cases}
\end{eqnarray*}
where $\langle z_x,z_y\rangle := \eta_(\mu_n^{(n)}(m_1,\ldots,m_n),\eta_1(m_n,m_{n+1}))$.
\end{enumerate} 
\end{corollary}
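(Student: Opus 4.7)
The plan is to prove this by induction on $n$, essentially lifting the constructive argument hidden inside the proof of Lemma~\ref{LemmaCmgesinMGEmonoids} into an explicit verification that the recursive function $\mu^{(n)}$ produces a mge whenever its argument is equalizable. The base cases are immediate: for $n=1$, every singleton is trivially equalized by $\tuple{e}$, and any equalizer $\tuple{x}$ is an instance of $\tuple{e}$; for $n=2$, the definition of an effective mge monoid (clause (iii)) gives us that $\eta(m_1,m_2)$ is a mge for any equalizable pair.

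For the inductive step I would assume that $\mu^{(n)}$ works as claimed and take an equalizable tuple $\tuple{m_1,\ldots,m_{n+1}}$. The key structural observation is that any equalizer of this long tuple restricts to an equalizer of the prefix $\tuple{m_1,\ldots,m_n}$ and also to an equalizer of the suffix pair $\tuple{m_n,m_{n+1}}$, so both sub-tuples are equalizable. The induction hypothesis supplies a mge $\tuple{x_1,\ldots,x_n}:=\mu^{(n)}(m_1,\ldots,m_n)$, and clause (iii) supplies a mge $\tuple{y_n,y_{n+1}}:=\eta(m_n,m_{n+1})$. The pair $\tuple{x_n,y_n}$ is equalizable (because any equalizer $\tuple{v_1,\ldots,v_{n+1}}$ of the original tuple factors through both mge's: $v_n = x_n c = y_n d$ for some $c,d\in M$), so $\tuple{z_x,z_y}:=\eta(x_n,y_n)$ is well-defined. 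This is exactly the data appearing in the definition of $\mu^{(n+1)}$.

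What remains is a direct verification, split in two parts. First, that the tuple prescribed by the corollary is an equalizer of $\tuple{m_1,\ldots,m_{n+1}}$: for $i\le n$ we have $m_i(x_iz_x)=(m_ix_i)z_x=(m_nx_n)z_x$; and since $x_nz_x=y_nz_y$ (because $\tuple{z_x,z_y}$ equalizes $\tuple{x_n,y_n}$), we get $(m_nx_n)z_x=(m_ny_n)z_y=(m_{n+1}y_{n+1})z_y$, which matches $m_{n+1}(y_{n+1}z_y)$. Second, that it is most general: given an arbitrary equalizer $\tuple{v_1,\ldots,v_{n+1}}$, use the chain $v_i=x_ic$ (for $i\le n$), $v_n=y_nd$, $v_{n+1}=y_{n+1}d$, and then $c=z_xf$, $d=z_yf$ for some $f\in M$, to rewrite each $v_i$ as $\mu_i^{(n+1)}(m_1,\ldots,m_{n+1})\circ f$, showing that the arbitrary equalizer is an instance of the constructed one.

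I expect no real obstacle here: the corollary is almost a transcription of the proof of Lemma~\ref{LemmaCmgesinMGEmonoids}, with the only mild subtlety being the careful bookkeeping of which sub-tuples are equalizable at each recursive step, and the implicit use of right cancellation (Lemma~\ref{LemmaLeftCancelation} for left cancellation is not needed here) when passing from the joint factorizations to conclude that the $\eta$-call on $\tuple{x_n,y_n}$ really is applicable. The argument is therefore essentially algebraic substitution, and I would present it in the order above so that the recursive clause of $\mu^{(n+1)}$ appears as the natural formula forced by the substitution pattern rather than as an opaque definition to be checked after the fact.
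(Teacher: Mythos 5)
Your proposal is correct and follows essentially the same route as the paper, which states this corollary without a separate proof precisely because it is obtained by instantiating the inductive construction in the proof of Lemma~\ref{LemmaCmgesinMGEmonoids} with the specific choices $\tuple{x_1,\ldots,x_n}=\mu^{(n)}(m_1,\ldots,m_n)$ and $\tuple{y_n,y_{n+1}}=\eta(m_n,m_{n+1})$, exactly as you do. (One small aside: no cancellation is actually needed to see that $\tuple{x_n,y_n}$ is equalizable --- the factorizations $v_n=x_nc=y_nd$ already exhibit $\tuple{c,d}$ as an equalizer --- but this does not affect your argument.)
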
 
The next corollary shows that we can express the function $\mu$ using only pairwise mge's.
\begin{corollary}\label{CorollaryCombinationPairwisecmges2} In the settings of Corollary~\ref{CorollaryCombinationPairwisecmges1}
$$\mu^{(n)}(m_1,\ldots,m_n) = \gamma^{(n)}(\eta(m_1,m_2),\eta(m_2,m_3),\ldots,\eta(m_{n-1},m_n)),$$
where $\gamma^{(n)} : (M^2)^{n-1} \rightarrow M^n$ is a partial function defined inductively:
\begin{enumerate}
\item $\gamma^{(1)}=e$,
\item $\gamma^{(2)}(X)=X$ for every $X \in M^2$,
\item \begin{eqnarray*}
&&\gamma_i^{(n+1)}(X^{(1)},\dots,X^{(n)})=\\
&&\begin{cases}
\gamma_i^{(n)}(X^{(1)},\dots,X^{(n-1)}) \circ \eta_1(\gamma_n^{(n)}(X^{(1)},\dots,X^{(n-1)}),X_1^{(n)})\text{ for } i\le n\\
X_2^{(n)} \circ \eta_2(\gamma_n^{(n)}(X^{(1)},\dots,X^{(n-1)}),X_1^{(n)}) \text{ for } i=n+1
\end{cases}
\end{eqnarray*}
\end{enumerate} 
\end{corollary}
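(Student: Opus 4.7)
The plan is to proceed by induction on $n$, recognizing $\gamma^{(n)}$ as the syntactic reformulation of the recursion for $\mu^{(n)}$ from Corollary~\ref{CorollaryCombinationPairwisecmges1} in which the primary inputs are the pairwise mge's $X^{(k)} := \eta(m_k,m_{k+1})$ rather than the original monoid elements themselves. The base cases are immediate: $\mu^{(1)} = e = \gamma^{(1)}$, and $\mu^{(2)}(m_1,m_2) = \eta(m_1,m_2) = \gamma^{(2)}(\eta(m_1,m_2))$.

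For the inductive step, assume the identity for $n$ and let $\tuple{m_1,\ldots,m_{n+1}}$ be equalizable. Then every adjacent pair $\tuple{m_k,m_{k+1}}$ is equalizable (since from any common equalizer of the full tuple one reads off equalizers of consecutive pairs), and by Lemma~\ref{LemmaCmgesinMGEmonoids} each $X^{(k)}$ is well defined. Applying the induction hypothesis coordinate-wise yields $\mu_i^{(n)}(m_1,\ldots,m_n) = \gamma_i^{(n)}(X^{(1)},\ldots,X^{(n-1)})$ for every $i\leq n$. In particular the $n$-th coordinates agree, and therefore the auxiliary pair $\tuple{z_x,z_y} = \eta(\mu_n^{(n)}(m_1,\ldots,m_n),\eta_1(m_n,m_{n+1}))$ appearing in the recursion of $\mu^{(n+1)}$ coincides with $\eta(\gamma_n^{(n)}(X^{(1)},\ldots,X^{(n-1)}),X_1^{(n)})$, which is precisely the auxiliary pair implicit in the recursion of $\gamma^{(n+1)}$.

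A coordinate-by-coordinate comparison then finishes the argument. For $i \leq n$ one has
$$\mu_i^{(n+1)} = \mu_i^{(n)} \circ z_x = \gamma_i^{(n)} \circ \eta_1(\gamma_n^{(n)},X_1^{(n)}) = \gamma_i^{(n+1)},$$
where the first equality is the definition from Corollary~\ref{CorollaryCombinationPairwisecmges1}, the second combines the induction hypothesis with the identification of $\tuple{z_x,z_y}$, and the third is the definition of $\gamma^{(n+1)}$. For $i = n+1$, analogously,
$$\mu_{n+1}^{(n+1)} = \eta_2(m_n,m_{n+1}) \circ z_y = X_2^{(n)} \circ \eta_2(\gamma_n^{(n)},X_1^{(n)}) = \gamma_{n+1}^{(n+1)}.$$
The main ``obstacle'' is essentially bookkeeping: aligning indices between the two recursions and verifying that equalizability of $\tuple{m_1,\ldots,m_{n+1}}$ justifies the definedness of every intermediate $\eta$-call. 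There is no new conceptual content beyond Corollary~\ref{CorollaryCombinationPairwisecmges1}; the present corollary merely records that $\mu^{(n)}$ depends on $\tuple{m_1,\ldots,m_n}$ only through the $n-1$ pairwise mge's $\eta(m_k,m_{k+1})$, which is the form in which this information will actually be available during the bimachine construction.
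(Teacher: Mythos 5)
Your proof is correct and follows exactly the argument the paper leaves implicit: the corollary is a direct induction unfolding the recursion of $\mu^{(n+1)}$ from Corollary~\ref{CorollaryCombinationPairwisecmges1} and observing that, under the substitution $X^{(k)}=\eta(m_k,m_{k+1})$, the auxiliary pair $\tuple{z_x,z_y}$ coincides with $\eta(\gamma_n^{(n)}(X^{(1)},\dots,X^{(n-1)}),X_1^{(n)})$, so the two recursions agree coordinate-wise. The paper states this as an immediate consequence without a written proof, and your bookkeeping (including the definedness of the intermediate $\eta$-calls, which is guaranteed by Lemma~\ref{LemmaCmgesinMGEmonoids} and Corollary~\ref{CorollaryCombinationPairwisecmges1}) fills in precisely that intended argument.
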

In particular, if two tuples $\tuple{m_1',\dots,m_n'}$ and $\tuple{m_1'',m_2'',\dots,m_n''}$ have the property:
\begin{equation*}
\eta(m_i',m_{n}')\text{ is a mge for } \tuple{m_i'',m_{n}''} \text{ for each } i<n,
\end{equation*}
then $\mu^{(n)}(m_1',m_2',\dots,m_n')=\mu^{(n)}(m_1'',m_2'',\dots,m_n'')$.
\begin{remark}\label{pure_unity_case}
Note that if $\eta(e,e)=\tuple{e,e}$, then 
$$\gamma^{(n)}(\tuple{e,e},\tuple{e,e},\dots,\tuple{e,e})=\tuple{e,e,\dots,e}$$
\end{remark}
%\noindent It can be seen that if $\mu^{(n)}(m_1,\dots,m_n)$ is defined for some $m_1,m_2,\dots, m_n\in M$, then $\mu^{(n)}(m_1,\dots,m_n)$ is an equalizer and hence a mge for $\tuple{m_1,\dots,m_n}$. 

\section{Squared automata and functionality test for mge monoids}\label{SecFunctionalityTest}
In this section we characterize the class of functional transducers with outputs in an effective mge monoid and give an efficient algorithm for testing the functionality. We make use of the squaring transducer approach presented in \cite{BCPS03}.

Let ${\cal M}=(M,\circ,e)$ be an effective mge monoid and $\eta$ be a computable function computing the mge. 
We start with some simple observations that enables us to restrict attention to real-time transducers.
\begin{lemma}\label{epsilon-cycles}
Let ${\cal T}=\tuple{\Sigma^*\times {\cal M},Q,I,F,\Delta}$ be a trimmed functional transducer, let $p\in M$. If
%\begin{equation*}
$\tuple{p,\tuple{\varepsilon,m},p}\in \Delta^*$,
%\end{equation*}
then $m=e$.
\end{lemma}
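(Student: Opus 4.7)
The plan is to exploit the trimmed hypothesis to compare the outputs of two successful paths on the same input and to use the cancellation properties of mge monoids to conclude $m = e$.

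Since $\mathcal{T}$ is trimmed, the state $p$ is both accessible and co-accessible. Hence there exist a path $\pi_1$ from some initial state $q_0 \in I$ to $p$ with label $\langle u, \alpha\rangle \in \Sigma^*\times M$, and a path $\pi_2$ from $p$ to some final state $q_f \in F$ with label $\langle v, \beta\rangle$. The concatenation $\pi_1 \pi_2$ is then a successful path on input $u v$ producing output $\alpha \circ \beta$. On the other hand, the hypothesis $\langle p, \langle \varepsilon, m\rangle, p\rangle \in \Delta^*$ means that there is a path from $p$ back to $p$ labelled $\langle \varepsilon, m\rangle$ (in the $\varepsilon$-extended transducer, realised as a concatenation of actual transitions whose input components compose to $\varepsilon$ and whose output components compose to $m$). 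Inserting this loop between $\pi_1$ and $\pi_2$ yields a second successful path, again on input $uv$, producing the output $\alpha \circ m \circ \beta$.

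Now functionality of $\mathcal{T}$ forces the two outputs associated with the same input $uv$ to coincide, so that
\[
\alpha \circ \beta \;=\; \alpha \circ m \circ \beta.
\]
Applying left cancellation, which holds in every mge monoid by Lemma~\ref{LemmaLeftCancelation}, we may strike $\alpha$ from the left and obtain $\beta = m \circ \beta$, i.e.\ $e \circ \beta = m \circ \beta$. Then right cancellation, which is part of the very definition of a mge monoid, lets us strike $\beta$ on the right and yields $m = e$, as required.

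The argument is essentially bookkeeping; the only delicate point is ensuring that ``inserting the $\varepsilon$-loop'' legitimately produces another successful path for the same input, which follows because the input component of the label of the cycle is $\varepsilon$ and path labels multiply in the product monoid $\Sigma^* \times \mathcal{M}$. Everything else is a direct invocation of functionality together with the two cancellation properties.
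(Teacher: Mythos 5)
Your proof is correct and follows essentially the same route as the paper's: surround the $\varepsilon$-cycle with an accessing and a co-accessing path (using trimmedness), compare the two outputs for the same input via functionality, and cancel on both sides. The explicit appeal to Lemma~\ref{LemmaLeftCancelation} for left cancellation matches what the paper does implicitly.
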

\begin{proof}
Since ${\cal T}$ is trimmed state $p$ is both accessible and co-accessible. There exist $u_1,u_2\in \Sigma^*$, $i\in I$, $f\in F$, and $m_1,m_2\in M$ such that
\begin{equation*}
\tuple{i,\tuple{u_1,m_1},p}\in \Delta^* \text{ and } \tuple{p,\tuple{u_2,m_2},f}\in \Delta^*.
\end{equation*}
Therefore $\tuple{u_1u_2,m_1m_2}\in {\cal L}({\cal T})$. Using the cycle $\tuple{p,\tuple{\varepsilon,m},p}\in \Delta^*$, we see that $\tuple{i,\tuple{u_1u_2,m_1mm_2},f}\in \delta^*$. Hence $\tuple{u_1u_2,m_1mm_2}\in {\cal L}({\cal T})$. 

Finally, since ${\cal T}$ is functional we have $m_1m_2=m_1mm_2$. Since ${\cal M}$ satisfies the left and the right cancellation properties we obtain $e=m$.\qed
\end{proof}

\begin{remark}\label{real-time1} 
A (trimmed) transducer ${\cal T}=\tuple{\Sigma^*\times {\cal M},Q,I,F,\Delta}$ contains 
%non-trivial $\varepsilon$-cycles, i.e. 
a cycle with label $(\varepsilon,m)$ such that $m\neq e$ if and only if ${\cal T}$ contains a 
%non-trivial simple $\varepsilon$-cycle. 
cycle $\pi$ of length less or equal to $|Q|$, with label $(\varepsilon,m')$ such that $m'\neq e$. 
Thus we can efficiently check whether there exists a cycle
\begin{equation*}
\tuple{p,\tuple{\varepsilon,m},p}\in \Delta^* \text{ with } m\neq e.
\end{equation*}
If this is the case, we reject that ${\cal T}$ is functional. 
\end{remark}
\begin{remark}\label{real-time2}
If every $\varepsilon$-cycle in ${\cal T}$ is an $(\varepsilon,e)$-cycle, then we can apply a specialised $\varepsilon$-closure procedure (see e.g. \cite{RS97}) and convert ${\cal T}$ into a real-time transducer, ${\cal T}'$, with
\begin{equation*}
L({\cal T}) \cap (\Sigma^+ \times {\cal M})=L({\cal T}') \cap (\Sigma^+ \times {\cal M}).
\end{equation*}
In addition we can compute the finite set:
\begin{equation*}
L_{\varepsilon}({\cal T})= \{m\,|\, \exists i\in I,f\in F (\tuple{i,\tuple{\varepsilon,m},f}\in \Delta^*\}.
\end{equation*}
If $\vert L_{\varepsilon}({\cal T})\vert > 1$, then we reject that ${\cal T}$ is functional.
Otherwise, the functionality test for ${\cal T}$ boils down to check whether the real-time transducer ${\cal T}'$ is functional.
\end{remark}

Due to Remarks~\ref{real-time1} and~\ref{real-time2} for the functionality test we may assume 
that ${\cal T}$ is a real-time transducer. 
(At the end of this section we add a note how the test can be efficiently applied to an arbitrary trimmed transducer directly without such a preprocessing step.) 
As a preparation we need the following structure. 
\begin{definition}
Let ${\cal T}=\tuple{\Sigma^\ast\times{\cal M},Q,I,\Delta,F}$ be a  monoidal finite-state transducer, let 
$\Delta_{2}\subseteq (Q\times Q)\times ({\cal M}\times {\cal M}) \times (Q\times Q)$ be finite. 
Then $$\tuple{{\cal M}\times {\cal M},Q\times Q,I\times I,F\times F,\Delta_2}$$ is a {\em squared output automaton} for ${\cal T}$ if the following holds:
\begin{eqnarray*}
\tuple{\tuple{p_1,q_1},\tuple{m,n},\tuple{p_2,q_2}}\in \Delta_2^* \iff \exists u\in \Sigma^*&&(\tuple{p_1,\tuple{u,m},q_1}\in \Delta^* \\&&\text{ and } \tuple{p_2,\tuple{u,n},q_2}\in \Delta^*).
\end{eqnarray*}
\end{definition}

\begin{proposition}
Let ${\cal T}=\tuple{\Sigma\times{\cal M},Q,I,\Delta,F}$ be a  real-time transducer. 
Then the monoidal finite-state-automaton ${\cal A}^{sq}$ defined as:
\begin{eqnarray*}
{\cal A}^{sq} &=& \tuple{{\cal M}\times {\cal M},Q\times Q,I\times I,F\times F,\Delta_2}\\
\Delta_{sq} & = & \{\tuple{\tuple{p_1,p_2},\tuple{m_1,m_2},\tuple{q_1,q_2}}\,|\, \exists a\in \Sigma:\tuple{p_i,\tuple{a,m_i},q_i}\in \Delta \text{ for } i=1,2\}.
\end{eqnarray*}
is a squared output automaton for ${\cal T}$.
\end{proposition}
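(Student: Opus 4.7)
My plan is to prove the defining biconditional for squared output automata: $\tuple{\tuple{p_1,p_2},\tuple{m,n},\tuple{q_1,q_2}}\in \Delta_{sq}^\ast$ if and only if there exists $u\in \Sigma^\ast$ with $\tuple{p_1,\tuple{u,m},q_1}\in \Delta^\ast$ and $\tuple{p_2,\tuple{u,n},q_2}\in \Delta^\ast$. I would establish each direction by a separate induction. (I read the pairing of states in the definition as giving a source $\tuple{p_1,p_2}$ and target $\tuple{q_1,q_2}$, consistent with how $\Delta_{sq}$ is presented.)

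For the ``$\Rightarrow$'' direction I would induct on the length $k$ of the generalized transition in ${\cal A}^{sq}$. The base case $k=0$ forces $p_i=q_i$ and $m=n=e$, and then $u=\varepsilon$ witnesses the right-hand side since $\tuple{p_i,\tuple{\varepsilon,e},p_i}\in \Delta^\ast$ always holds. For the inductive step, I decompose the $k$-step generalized transition as a $(k-1)$-step transition $\tuple{\tuple{p_1,p_2},\tuple{m',n'},\tuple{r_1,r_2}}\in \Delta_{sq}^\ast$ followed by a single step $\tuple{\tuple{r_1,r_2},\tuple{m'',n''},\tuple{q_1,q_2}}\in \Delta_{sq}$ with $m=m'\circ m''$ and $n=n'\circ n''$. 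By the very definition of $\Delta_{sq}$, the latter arises from a common symbol $a\in\Sigma$ with $\tuple{r_i,\tuple{a,m_i''},q_i}\in\Delta$ for $i=1,2$. The induction hypothesis applied to the prefix yields a word $u'$ such that $\tuple{p_i,\tuple{u',m_i'},r_i}\in\Delta^\ast$, and appending $a$ gives the required $u=u'a$.

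For the ``$\Leftarrow$'' direction I would induct on $|u|$. The base $u=\varepsilon$ is where the real-time assumption is essential: since $\Delta\subseteq Q\times(\Sigma\times M')\times Q$, every non-trivial element of $\Delta^\ast$ consumes at least one input letter, so $\tuple{p,\tuple{\varepsilon,x},q}\in\Delta^\ast$ forces $p=q$ and $x=e$. Thus $p_i=q_i$ and $m=n=e$, and the trivial generalized transition in $\Delta_{sq}^\ast$ suffices. For $u=u'a$ with $a\in\Sigma$, I decompose each $\Delta^\ast$-path at its final transition, obtaining intermediate states $r_1,r_2$, decompositions $m=m_1'\circ m_1''$ and $n=m_2'\circ m_2''$, and transitions $\tuple{r_i,\tuple{a,m_i''},q_i}\in\Delta$ that share the terminal symbol $a$. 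Consequently $\tuple{\tuple{r_1,r_2},\tuple{m_1'',m_2''},\tuple{q_1,q_2}}\in\Delta_{sq}$, and combining with the induction hypothesis applied to $u'$ closes the step.

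The only delicate point is ensuring, in the backward direction, that the two $\Delta^\ast$-paths decompose with the same terminal letter $a$. This is forced by the real-time hypothesis: every transition of $\cal T$ consumes exactly one symbol of $\Sigma$, so the $i$-th letter of $u$ is produced by the $i$-th transition on every accepting path, making the two decompositions rigidly aligned. Without this synchronisation, $\varepsilon$-transitions could be interleaved arbitrarily between the two paths and the one-symbol-at-a-time pairing built into $\Delta_{sq}$ would be inadequate, which explains why the proposition is restricted to real-time transducers.
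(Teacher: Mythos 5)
The paper offers no proof of this proposition---it is stated as immediate and is followed only by the remark that ${\cal A}^{sq}$ can be computed by Algorithm~1---so there is no authors' argument to compare against; your two-way induction (on the number of $\Delta_{sq}$-steps for one direction, on $|u|$ for the other) is the standard argument and is correct, including the base cases where the real-time hypothesis forces an $\varepsilon$-labelled generalized transition to be trivial. You also rightly read the state pairing as source $\tuple{p_1,p_2}$ and target $\tuple{q_1,q_2}$ (the paper's definition of squared output automaton garbles this), and you correctly identify that real-timeness is what rigidly aligns the two paths letter by letter so that the last transitions share the symbol $a$.
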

The squared automaton ${\cal A}^{sq}$ can be efficiently computed, see Algorithm~\ref{alg:sq_aut}.

\begin{algorithm}
\begin{multicols}{2}
\noindent
\begin{alltt}
ProcessState(\(F,\Delta,Q\sb{2},F\sb{2},\)
\hspace{2cm}\(\Delta\sb{2},start,end\))
@1\hspace{0.2cm}\(\tuple{p\sb{1},p\sb{2}}\leftarrow{Q\sb{2}}[start]\)
@2\hspace{0.2cm}if \(p\sb{1}\in{F}\) and \(p\sb{1}\in{F}\) then
@3\hspace{0.4cm}\(F\sb{2}\leftarrow{F}\sb{2}\cup\{\tuple{p\sb{1},p\sb{2}}\}\)
@4\hspace{0.2cm}for \(a\in\Sigma\) do
@5\hspace{0.4cm}for \(\tuple{p\sb{1},\tuple{a,m\sb{1}},q\sb{1}}\in\Delta\) do
@6\hspace{0.6cm}for \(\tuple{p\sb{2},\tuple{a,m\sb{2}},q\sb{2}}\in\Delta\) do
@7\hspace{0.7cm}if \(\tuple{q\sb{1},q\sb{2}}\not\in{Q}\sb{2}[0..{end-1}]\) then
@8\hspace{0.8cm}\(Q\sb{2}[end]\leftarrow\tuple{q\sb{1},q\sb{2}}\)
@9\hspace{0.8cm}\(end\leftarrow{end+1}\)
@10\hspace{0.7cm}\(\Delta\sb{2}\leftarrow{\tuple{p\sb{1},p\sb{2}},\tuple{m\sb{1},m\sb{2}},\tuple{q\sb{1},q\sb{2}}}\)
@11\hspace{0.2cm}return \(end\)
\end{alltt}
\break
\begin{alltt}
SquaredAutomaton(\({\cal{T}}\))
@1\hspace{0.2cm}\(\tuple{\Sigma\times{\cal{M}},Q,I,F,\Delta}\leftarrow{\cal T}\)
@2\hspace{0.2cm}\(Q\sb{2}\leftarrow\emptyset\), \(I\sb{2}\leftarrow\emptyset\)
@3\hspace{0.2cm}\(F\sb{2}\leftarrow\emptyset\), \(\Delta\sb{2}\leftarrow\emptyset\)
@4\hspace{0.2cm}\(start\leftarrow{0}\)
@5\hspace{0.2cm}\(end\leftarrow{0}\)
@6\hspace{0.2cm}for \(i\sb{1}\in{I}\) do
@7\hspace{0.4cm}for \(i\sb{2}\in{I}\) do
@8\hspace{0.6cm}\(Q\sb{2}[end]\leftarrow\tuple{i\sb{1},i\sb{2}}\)
@9\hspace{0.6cm}\(I\sb{2}\leftarrow{I}\sb{2}\cup\{\tuple{i\sb{1},i\sb{2}}\)
@10\hspace{0.6cm}\(end\leftarrow{end+1}\)
@11\hspace{0.2cm}while \(start<end\) do
@12\hspace{0.3cm}\(end\leftarrow{ProcessState(F,\Delta,Q\sb{2},F\sb{2},\Delta\sb{2},start,end)}\)
@13\hspace{0.3cm}\(start\leftarrow{start+1}\)
@14\hspace{0.2cm}done
@15\hspace{0.2cm}return \(\tuple{{\cal{M}}\times{\cal{M}},Q\sb{2},I\sb{2},F\sb{2},\Delta\sb{2}}\)
\end{alltt}
\end{multicols}
\caption{Computation of the squared automaton for a real-time transducer, ${\cal T}$.}\label{alg:sq_aut}
\end{algorithm}

\begin{definition}
Let ${\cal A}^2$ be a squared output automaton for the monoidal finite-state transducer ${\cal T}$. 
Let $\tuple{q_1,q_2}\in Q\times Q$ be on a successful path in ${\cal A}^2$, let 
$$\tuple{\tuple{i_1,i_2},\tuple{m_1,m_2},\tuple{q_1,q_2}}\in \Delta_2^*$$ 
for some initial state $\tuple{i_1,i_2}\in I\times I$. Then $\tuple{m_1,m_2}$ is called a {\em relevant pair} for $\tuple{q_1,q_2}$.  
%Then, we say that $\tuple{u_1,u_2}$ is a {\em reduct} for $p$ iff
%\begin{equation*}
%m_1 \circ u_1=m_2 \circ u_2.
%\end{equation*}
%We denote by $Red(p)$ the set of all reducts with respect to a state $p\in Q\times Q$ that lies on a successful path in ${\cal A}^2$.
%We set $Red(p)=\emptyset$ for all other\footnote{That is those that are either non-accessible or non-co-accessible.} states in ${\cal A}^2$.
\end{definition}
\begin{definition}\label{DefValuation}
Let 
$$
{\cal A}^2 = \tuple{{\cal M}\times {\cal M},Q\times Q,I\times I,F\times F,\Delta_2}
$$ 
be a squared output automaton for the monoidal finite-state transducer ${\cal T}$. 
A {\em valuation} of 
${\cal A}^2 = \tuple{{\cal M}\times {\cal M},Q\times Q,I\times I,F\times F,\Delta_2}$ 
is a pair of partial functions $\tuple{\rho,\nu}$, $\rho,\nu : Q^2 \rightarrow M^2$ such that for each state $\tuple{q_1,q_2}$ of the squared automaton $\rho(q_1,q_2)$ choses a relevant pair for $\tuple{q_1,q_2}$ if such a pair exists, and $\rho(q_1,q_2)$ is undefined otherwise. 
If $\rho(q_1,q_2)$ is defined and equalizable, then $\nu(q_1,q_2)$ returns a mge for $\rho(q_1,q_2)$, otherwise $\nu(q_1,q_2)$ is undefined. 
In addition we require that 
\begin{itemize}
\item
if $\tuple{q_1,q_2} \in I^2$, then $\rho(q_1,q_2)=\tuple{e,e}$,
\item
and if the chosen relevant pair is $\tuple{m,m}$ for some $m$, then $\nu$ returns $\tuple{e,e}$.
\end{itemize}
\noindent More formally:
$$\rho(q_1,q_2) = \left\{\begin{array}{ll} \tuple{e,e} & \mbox{if there exists a relevant pair for } \tuple{q_1,q_2} \in I^2 \\ \tuple{m_1,m_2} & \mbox{if there exists a relevant pair for } \tuple{q_1,q_2},\\
& \tuple{q_1,q_2} \not\in I^2 \mbox{ and }
    \tuple{m_1,m_2} \mbox{ is a relevant pair for } \tuple{q_1,q_2} \\ \neg ! & \mbox{otherwise} \end{array} \right. $$
$$\nu(q_1,q_2) = \left\{\begin{array}{ll} \tuple{e,e} & \mbox{if } !\rho(q_1,q_2) \mbox{ and } \rho(q_1,q_2)=\tuple{m,m}\\ \eta(\rho(q_1,q_2)) & \mbox{if } !\rho(q_1,q_2)\mbox{ and $\rho(q_1,q_2)$ equalizable}\\ \neg ! & \mbox{otherwise} \end{array} \right.
$$
\end{definition}

\begin{proposition}\label{prop:valuation}
We can effectively compute a valuation of ${\cal A}^2$.
\end{proposition}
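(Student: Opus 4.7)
The plan is to compute the valuation in two passes over the states of ${\cal A}^2$, using only the effectiveness of ${\cal M}$ together with standard automaton reachability.

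First I would determine $\rho$ by a forward reachability traversal (say breadth-first search) of ${\cal A}^2$ starting from $I\times I$. During the search, whenever a state $\tuple{q_1,q_2}$ is discovered for the first time along a path $\pi$ from some initial pair $\tuple{i_1,i_2}$, I store the pair $\tuple{m_1,m_2}\in M^2$ obtained by accumulating the two output components of $\pi$ (products computed using the monoid operation, which is available by Definition~\ref{DefComputational}). For an initial state the discovery path is empty, producing $\tuple{e,e}$, which is exactly the value demanded by the first clause of the specification of $\rho$. For any non-initial discovered state the stored pair is a relevant pair by the definition of $\Delta_2^*$ and of relevant pairs. States that are not reached by the BFS have no relevant pair at all (again by the very definition), and there $\rho$ is left undefined.

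Second, I would iterate over the states where $\rho$ is defined and compute $\nu$ exactly as dictated by the piecewise definition. Writing $\rho(q_1,q_2)=\tuple{m_1,m_2}$, I first test the special case $m_1=m_2$ (decidable because equality in ${\cal M}$ is decidable) and in that case set $\nu(q_1,q_2):=\tuple{e,e}$, which is legitimate by Lemma~\ref{LemmaLeftCancelation}. Otherwise I invoke the decision procedure for equalizability supplied by Definition~\ref{DefComputational}: if $\tuple{m_1,m_2}$ is equalizable, I set $\nu(q_1,q_2):=\eta(m_1,m_2)$, which is computable and a mge by assumption on $\eta$; if not, $\nu(q_1,q_2)$ remains undefined. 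Everything required is thus produced in finite time.

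I do not expect any real obstacle here; the only point that needs attention is the bookkeeping ensuring that the special clauses of $\rho$ and $\nu$ are honoured. The initial-state clause for $\rho$ is handled automatically because the empty path is a witness giving relevant pair $\tuple{e,e}$, so initialising the BFS with stored pair $\tuple{e,e}$ at every element of $I\times I$ is consistent with taking any subsequent discovery path. Termination of the whole procedure is immediate since $|Q\times Q|$ is finite and each per-state step performs only finitely many effective monoid operations.
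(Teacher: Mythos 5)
Your two-pass procedure (forward BFS accumulating outputs to get $\rho$, then an equality/equalizability test with $\eta$ to get $\nu$) is exactly the paper's approach, and the effectiveness arguments you give are the right ones. There is, however, one genuine omission: you compute $\rho$ on \emph{all accessible} states of ${\cal A}^2$, but Definition~\ref{DefValuation} requires $\rho(q_1,q_2)$ to be undefined whenever no relevant pair exists, and a relevant pair is only defined for a state that lies \emph{on a successful path} of ${\cal A}^2$, i.e.\ one that is both accessible and co-accessible. For a state that your BFS reaches but from which no final pair in $F\times F$ is reachable, the accumulated pair you store is not a relevant pair, so the partial function you produce has too large a domain and is not a valuation in the sense of the definition. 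This is not merely cosmetic: the functionality test of Proposition~\ref{PropCondEq} quantifies over states where $\rho$ is defined, so defining $\rho$ at accessible-but-not-co-accessible states would strengthen Condition~2 spuriously and could cause the test to reject a functional transducer (e.g.\ when such a state carries a non-equalizable stored pair).

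The repair is a standard preliminary step, and it is what the paper does: first compute the set $C$ of co-accessible states of ${\cal A}^2$ by a backward reachability traversal from $F\times F$ (the \texttt{Coaccessible} routine of Algorithm~\ref{alg:valuation}), and then run your forward BFS restricted to states in $C$, leaving $\rho$ and $\nu$ undefined outside $C$. With that single amendment your argument coincides with the paper's proof.
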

\begin{proof}
Clearly $\tuple{e,e}$ is a relevant pair for each initial state $q \in I^2$. Hence, since the monoid operation in ${\cal M}$ is computable, we can compute $\rho$ by a straightforward traversal, say breadth-first search, of ${\cal A}^2$. Since $\eta$ is computable and the equality of elements of $M$ is decidable, we can also effectively obtain $\nu$. For technical details we refer to Algorithm~\ref{alg:valuation}. Note that in Algorithm~\ref{alg:valuation} the function {\tt Valuation} assumes that the states in $Q_2$ are ordered in a breadth-first order as returned by the function {\tt SquaredAutomaton} in Algorithm~\ref{alg:sq_aut}. The set $C$ is the set of co-accessible states.\qed
\end{proof}

\begin{algorithm}
\begin{multicols}{2}
\noindent
\begin{alltt}
Coaccessible(\({\cal{A}}\))
@1\hspace{0.2cm}\(\tuple{{\cal{M}},Q,I,F,\Delta}\leftarrow{\cal{A}}\)
@2\hspace{0.2cm}\(C\leftarrow\emptyset\) \(start\leftarrow{0}\) \(end\leftarrow{0}\)
@3\hspace{0.2cm}for \(p\in{F}\) do 
@4\hspace{0.4cm}\(C[end]\leftarrow{p}\)
@5\hspace{0.4cm}\(end\leftarrow{end+1}\)
@6\hspace{0.2cm}while \(start<end\) do
@7\hspace{0.4cm}\(p\leftarrow{C}[start]\)
@8\hspace{0.4cm}for \(\tuple{q,m,p}\in\Delta\) do
@9\hspace{0.6cm}if \(q\not\in{C}[0..{end-1}]\) then
@10\hspace{0.7cm}\(C[end]\leftarrow{q}\)
@11\hspace{0.7cm}\(end\leftarrow{end+1}\)
@12\hspace{0.5cm}fi
@13\hspace{0.3cm}done
@14\hspace{0.2cm}done
@15\hspace{0.2cm}return \(C\)
\end{alltt}
\break
\begin{alltt}
Valuation(\({\cal{A}},C,\eta\))
@1\hspace{0.2cm}\(\tuple{{\cal{M}}\times{\cal{M}},Q\sb{2},I\sb{2},F\sb{2},\Delta\sb{2}}\leftarrow{\cal{A}}\)
@2\hspace{0.2cm}\(\rho\leftarrow\emptyset\)
@3\hspace{0.2cm}for \(i=0\) to \({|Q\sb{2}|-1}\) do
@4\hspace{0.4cm}\(\tuple{p\sb{1},p\sb{2}}\leftarrow{Q\sb{2}}[i]\)
@5\hspace{0.4cm}if \(\tuple{p\sb{1},p\sb{2}}\in{C}\) then
@6\hspace{0.6cm}if \(\tuple{p\sb{1},p\sb{2}}\in{I}\sb{2}\)
@7\hspace{0.8cm}\(\rho(p\sb{1},p\sb{2})\leftarrow\tuple{e,e}\)
@8\hspace{0.6cm}for \({\tuple{\tuple{p\sb{1},p\sb{2}},\tuple{m\sb{1},m\sb{2}},\tuple{q\sb{1},q\sb{2}}}\in{\Delta}\sb{2}}\)
@9\hspace{0.8cm}if \(\tuple{q\sb{1},q\sb{2}}\in{C}\) and \(\rho(q\sb{1},q\sb{2})\) is not defined then
@10\hspace{0.9cm}\(\rho(q\sb{1},q\sb{2})\leftarrow\rho(p\sb{1},p\sb{2})\circ\tuple{m\sb{1},m\sb{2}}\)
@11\hspace{0.5cm}done
@12\hspace{0.3cm}fi
@13\hspace{0.2cm}done
@14\hspace{0.2cm}for \(\tuple{p\sb{1},p\sb{2}}\in{C}\) do
@15\hspace{0.4cm} \(\tuple{x\sb{1},x\sb{2}}\leftarrow\rho(p\sb{1},p\sb{2})\)
@16\hspace{0.4cm}if \({x\sb{1}=x\sb{2}}\) then
@17\hspace{0.6cm} \(\nu(p\sb{1},p\sb{2}=\tuple{e,e}\) 
@18\hspace{0.4cm}else
@19\hspace{0.6cm} \(\nu(p\sb{1},p\sb{2})\leftarrow\eta(x\sb{1},x\sb{2})\)
@20\hspace{0.2cm}done
@21\hspace{0.2cm}return \(\tuple{\rho,\nu}\)
\end{alltt}
\end{multicols}\caption{The valuation of the squared automaton, ${\cal A}^2$}\label{alg:valuation}
\end{algorithm}
The following lemma is crucial for the bimachine construction method described below. 
Part 1 says that if ${\cal T}$ is functional, then we may associate with each state of a squared automaton a {\em single} mge which acts as a 
mge for {\em all} relevant pairs of the state. 
\begin{lemma}\label{lemma:functional}
Let ${\cal T}$ be a trimmed monoidal finite-state transducer with output in the mge monoid ${\cal M}=\tuple{M,\circ,e}$.
Let ${\cal A}^2$ be a squared automaton for ${\cal T}$ with valuation $\tuple{\rho,\nu}$.
Then ${\cal T}$ is functional iff
\begin{enumerate}
\item 
for each relevant pair $\tuple{m_1,m_2}$ of a state $p\in Q^2$ always $\nu(p)$ is a mge for $\tuple{m_1,m_2}$
\item 
for each final and accessible state $f\in F\times F$ we have $\nu(f)=\tuple{e,e}$. 
\end{enumerate}
\end{lemma}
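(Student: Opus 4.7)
The plan is to prove the biconditional by handling the two directions separately, with Lemma~\ref{LemmaJointEq} (joint equalizers) doing the heavy lifting in the forward direction.

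For the forward direction, assume ${\cal T}$ is functional and fix a state $p=\tuple{q_1,q_2}$ in ${\cal A}^2$ together with a relevant pair $\tuple{m_1,m_2}$ of $p$. By definition of relevant pair, $p$ lies on a successful path in ${\cal A}^2$, so there are $v\in\Sigma^*$, $\tuple{f_1,f_2}\in F\times F$ and $\tuple{n_1,n_2}\in M^2$ with $\tuple{p,\tuple{v,\tuple{n_1,n_2}},\tuple{f_1,f_2}}\in\Delta_2^*$. Combining with the run witnessing the relevant pair on input $u$ and unfolding the squared automaton relation, one obtains $\tuple{i_1,\tuple{uv,m_1n_1},f_1}\in\Delta^*$ and $\tuple{i_2,\tuple{uv,m_2n_2},f_2}\in\Delta^*$. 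Functionality of ${\cal T}$ forces $m_1n_1=m_2n_2$, so $\tuple{n_1,n_2}$ is an equalizer of $\tuple{m_1,m_2}$. Applying the same argument to the chosen relevant pair $\rho(p)$ (using the same continuation $v$) shows that $\tuple{n_1,n_2}$ is a \emph{joint} equalizer of $\rho(p)$ and $\tuple{m_1,m_2}$. Lemma~\ref{LemmaJointEq} then yields that $\nu(p)=\eta(\rho(p))$ is an mge for $\tuple{m_1,m_2}$ as well, giving condition~1. For condition~2, if $f\in F\times F$ is accessible in ${\cal A}^2$ then $\rho(f)=\tuple{m_1,m_2}$ is a relevant pair reached at a final state on some input $u$; functionality forces $m_1=m_2$, and by the definition of the valuation $\nu(f)=\tuple{e,e}$.

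For the backward direction, assume conditions 1 and 2. Suppose for contradiction that ${\cal T}$ is not functional, i.e.\ there is $w\in\Sigma^*$ and $m_1\neq m_2$ with $\tuple{w,m_1},\tuple{w,m_2}\in L({\cal T})$. Then the defining property of ${\cal A}^2$ produces $\tuple{\tuple{i_1,i_2},\tuple{w,\tuple{m_1,m_2}},\tuple{f_1,f_2}}\in\Delta_2^*$ for some initial pair and some $\tuple{f_1,f_2}\in F\times F$; hence $\tuple{f_1,f_2}$ is accessible and $\tuple{m_1,m_2}$ is a relevant pair for it. By condition~1, $\nu(f_1,f_2)$ is an mge of $\tuple{m_1,m_2}$, and by condition~2 it equals $\tuple{e,e}$. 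Since any mge is in particular an equalizer, $m_1\circ e=m_2\circ e$, contradicting $m_1\neq m_2$.

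The main obstacle is the forward direction, and specifically the insight that functionality does not merely guarantee that the \emph{specific} relevant pair $\rho(p)$ picked by the valuation is equalizable, but that \emph{every} relevant pair at $p$ shares a joint equalizer with $\rho(p)$. This is precisely where co-accessibility of $p$ together with functionality is used to produce the common continuation $v$, and where Lemma~\ref{LemmaJointEq} is indispensable to transfer the mge property from $\rho(p)$ to all other relevant pairs simultaneously. Apart from this, the remaining arguments amount to unwinding the definitions of the squared automaton, the valuation, and mge/equalizer.
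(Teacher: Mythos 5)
Your proposal is correct and follows essentially the same route as the paper's own proof: co-accessibility in ${\cal A}^2$ plus functionality yields a joint equalizer of $\rho(p)$ and an arbitrary relevant pair, Lemma~\ref{LemmaJointEq} transfers the mge property, and the backward direction contradicts condition~2 at an accessible final state. The only cosmetic imprecision is writing $\nu(p)=\eta(\rho(p))$, which by Definition~\ref{DefValuation} only holds when $\rho(p)$ is not of the form $\tuple{m,m}$; in that exceptional case $\nu(p)=\tuple{e,e}$ is still an mge by Lemma~\ref{LemmaLeftCancelation}, so the argument goes through unchanged.
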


\begin{proof}
``$\Rightarrow$'': Let ${\cal T}$ be functional. 1. Let $\tuple{m_1,m_2}$ be a relevant pair for $p=\tuple{p_1,p_2}\in Q\times Q$. Therefore $p$ lies on a successful path in ${\cal A}^2$ and furthermore, there is an initial state $s=\tuple{s_1,s_2}\in I^2$ such that $\tuple{s,\tuple{m_1,m_2},p}\in \Delta_2^*$. Since $p$ lies on a successful path, 
there exists also a pair of final states $f=\tuple{f_1,f_2}$ and a tuple 
$\tuple{p,\tuple{k_1,k_2},f}\in \Delta_2^*$. Using that ${\cal A}^2$ is a squared automaton for ${\cal T}$, we conclude that there are words $u,v\in \Sigma^*$ such that $\tuple{s_i,\tuple{u,m_i},p_i}\in \Delta^*$ and $\tuple{p_i,\tuple{v,k_i},f_i}\in \Delta^*$ for $i=1,2$.
This shows that $\tuple{uv,m_ik_i}\in L({\cal T})$ for $i=1,2$ and since ${\cal T}$ is functional we get that
$m_1k_1=m_2k_2$ and $\tuple{k_1,k_2}$ is an equalizer for $\tuple{m_1,m_2}$. This implies that $\rho(p)$ and $\nu(p)$ are defined. Let $\rho(p)=\tuple{m'_1,m'_2}$. Since ${\cal T}$ is functional $\tuple{k_1,k_2}$ is an equalizer of $\tuple{m'_1,m'_2}$ and therefore $\tuple{k_1,k_2}$ is a joint equalizer of $\tuple{m_1,m_2}$ and $\tuple{m'_1,m'_2}$. From Lemma~\ref{LemmaJointEq} it follows that $\nu(p)$ is a mge for $\tuple{m_1,m_2}$ as well.\\
2. In the special case where $p\in F\times F$, we can take $k_1=k_2=e$. Hence, each relevant pair for $p$ has the form $(m,m)$ and by Definition~\ref{DefValuation} we have $\nu(p)=\tuple{e,e}$.

``$\Leftarrow$'': If ${\cal T}$ is {\em not} functional, then for some input string $w$ there are two distinct outputs $\tuple{m_1,m_2}$. Let 
$m_i$ be produced on a path with input label $w$ from an initial state $s_i$ to the final state $f_i$ ($i=1,2$). 
Since ${\cal A}^2$ is a squared automaton for ${\cal T}$, there is a path $\tuple{\tuple{s_1,s_2},\tuple{m_1,m_2},\tuple{f_1,f_2}}\in \Delta_2^*$.
Thus, $\tuple{m_1,m_2}$ is a relevant pair for $\tuple{f_1,f_2}$. Obviously, $\tuple{f_1,f_2}$ is final and accessible in ${\cal A}^2$. Since $m_1\neq m_2$ it follows that $\tuple{e,e}$ is {\em not} a mge for $\tuple{m_1,m_2}$.\qed 
\end{proof}
\begin{proposition}\label{PropCondEq}
Let ${\cal T}$ be a trimmed monoidal finite-state transducer with output in the mge monoid ${\cal M}=\tuple{M,\circ,e}$.
Let ${\cal A}^2$ be a squared automaton for ${\cal T}$ with valuation $\tuple{\rho,\nu}$.
Then the following two conditions are equivalent:
\begin{enumerate}
\item 
for each relevant pair $\tuple{m_1,m_2}$ of a state $p\in Q^2$ always $\nu(p)$ is a mge for $\tuple{m_1,m_2}$,
\item 
for all states $p,q \in Q^2$, such that $\rho(p)=\tuple{m_1,m_2}$ and $\rho(q)=\tuple{n_1,n_2}$ are defined and for each transition $\tuple{p,\tuple{u_1,u_2},q} \in \Delta_2$ always $\eta(m_1 u_1,m_2 u_2)$ is a mge of $\rho(q)$. 
\end{enumerate}
\end{proposition}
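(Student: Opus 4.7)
The plan is to prove both implications using Lemma~\ref{LemmaJointEq}: two tuples that share a joint equalizer have identical sets of equalizers, hence identical mge's.

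For $(1)\Rightarrow(2)$, I would fix a transition $\tuple{p,\tuple{u_1,u_2},q}\in\Delta_2$ with $\rho(p)=\tuple{m_1,m_2}$ and $\rho(q)=\tuple{n_1,n_2}$ defined. Since $\rho(p)$ is itself a relevant pair for $p$, appending the transition to a path that witnesses $\rho(p)$ shows that $\tuple{m_1u_1,m_2u_2}$ is a relevant pair for $q$. By condition (1), $\nu(q)$ is a mge both for $\tuple{m_1u_1,m_2u_2}$ and for $\rho(q)$, so these two pairs share $\nu(q)$ as a joint equalizer. Lemma~\ref{LemmaJointEq} then forces $Eq^{(2)}(m_1u_1,m_2u_2)=Eq^{(2)}(n_1,n_2)$, so $\eta(m_1u_1,m_2u_2)$, which is a mge of the first pair by construction, is also a mge of $\rho(q)$.

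For $(2)\Rightarrow(1)$, I would induct on the length $n$ of a path from an initial state to $p$ witnessing the given relevant pair $\tuple{m_1,m_2}$. The base $n=0$ forces $p\in I\times I$, $\tuple{m_1,m_2}=\tuple{e,e}=\rho(p)$, and $\nu(p)=\tuple{e,e}$, which is a mge of $\tuple{e,e}$ by Lemma~\ref{LemmaLeftCancelation}. In the step, split a length-$(n+1)$ path as one of length $n$ reaching $p'$ with label $\tuple{m_1',m_2'}$ followed by a transition $\tuple{p',\tuple{u_1,u_2},p}$, so that $m_i=m_i'u_i$.

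The main obstacle is that the chosen relevant pair $\rho(p')=\tuple{m_1^{\ast},m_2^{\ast}}$ need not coincide with $\tuple{m_1',m_2'}$ coming from the path. The induction hypothesis gives that $\nu(p')$ is a mge of $\tuple{m_1',m_2'}$, and the definition of $\nu$ gives that $\nu(p')$ is a mge of $\rho(p')$, so Lemma~\ref{LemmaJointEq} yields $Eq^{(2)}(m_1',m_2')=Eq^{(2)}(m_1^{\ast},m_2^{\ast})$. A direct unfolding using associativity transports this equality through right multiplication by $\tuple{u_1,u_2}$ to $Eq^{(2)}(m_1'u_1,m_2'u_2)=Eq^{(2)}(m_1^{\ast}u_1,m_2^{\ast}u_2)$. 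Condition (2) now provides that $\eta(m_1^{\ast}u_1,m_2^{\ast}u_2)$ is a mge of $\rho(p)$, and the equalizer equality upgrades it to a mge of $\tuple{m_1'u_1,m_2'u_2}=\tuple{m_1,m_2}$ as well. A final application of Lemma~\ref{LemmaJointEq} then transfers the mge property from $\nu(p)$, which is a mge of $\rho(p)$ by definition, onto $\tuple{m_1,m_2}$, closing the induction.
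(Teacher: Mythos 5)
Your proposal is correct and follows essentially the same route as the paper: the forward direction is the identical joint-equalizer argument, and the reverse direction is the same induction on the length of the path witnessing the relevant pair, with Lemma~\ref{LemmaJointEq} bridging the path label and the chosen pair $\rho(p')$. The only (cosmetic) difference is that you transport the full equalizer sets $Eq^{(2)}$ through right multiplication by $\tuple{u_1,u_2}$, whereas the paper chases a single concrete equalizer $\tuple{u_1y_1',u_2y_2'}$; the content is the same.
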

\begin{proof}
$(1 \Rightarrow 2)$ Since $\tuple{m_1 u_1, m_2 u_2}$ is a relevant pair of $q$ it follows from Condition~1 that $\nu(q)$ is a mge of $\tuple{m_1 u_1, m_2 u_2}$ and of $\rho(q)$. From Lemma~\ref{LemmaJointEq} it follows that $\eta(m_1 u_1,m_2 u_2)$ is a mge for $\rho(q)$ as well. 

$(2 \Rightarrow 1)$ We assume that Condition~2 holds. Let $\tuple{m_1,m_2}$ be an arbitrary relevant pair of a state $p=\tuple{p_1,p_2}\in Q^2$. Then $\tuple{m_1,m_2}$ is the label of a path with $n\in \bbbn$ transitions in ${\cal A}^2$. We prove Condition 1 by induction on $n$.\\
Let $n=0$. Then $\tuple{m_1,m_2}=\tuple{e,e}$ and $p \in I^2$, thus $\nu(p) = \tuple{e,e}$ is a mge for $\tuple{m_1,m_2}$.\\
Let us assume that for every relevant pair $\tuple{m_1,m_2}$ of a state $p \in Q^2$ that is a label of a path with  $n$ transitions in ${\cal A}^2$ we have that $\nu(p)$ is a mge for $\tuple{m_1,m_2}$.\\
Let $\tuple{n'_1,n'_2}$ be a relevant pair of a state $q\in Q^2$ that is the label of a path with $n+1$ transitions. Let the last transition of this path be $\tuple{p,\tuple{u_1,u_2},q} \in \Delta_2$. In this case there exists a pair $\tuple{m'_1,m'_2} \in M^2$ such that $n'_1=m'_1 u_1$, $n'_2=m'_2 u_2$ and $\tuple{m'_1,m'_2}$ is a relevant pair of $p$, and is a label of a path with $n$ transition. Let $\rho(p)=\tuple{m_1,m_2}$, $\rho(q)=\tuple{n_1,n_2}$ and $\eta(m_1 u_1, m_2 u_2)=\tuple{y'_1,y'_2}$. Then $m_1 u_1 y'_1 = m_2 u_2 y'_2$ and from Condition 2 we have that $n_1 y'_1 = n_2 y'_2$. Hence $\tuple{u_1y'_1,u_2y'_2}$ is an equalizer of $\tuple{m_1,m_2}$. From the induction hypothesis we have that $\nu(p) = \tuple{x_1,x_2}$ is a mge for $\tuple{m_1,m_2}$ and $\tuple{m'_1,m'_2}$ and therefore $\tuple{u_1y'_1,u_2y'_2}$ will be an equalizer of $\tuple{m'_1,m'_2}$. Hence $m'_1 u_1 y'_1 = m'_2 u_2 y'_2$ thus $\tuple{y'_1,y'_2}$ is an equalizer of $\tuple{n'_1,n'_2}$. Since $\tuple{n_1,n_2}$ and $\tuple{n'_1,n'_2}$ have a joint equalizer, $\nu(q)$ is a mge for $\tuple{n'_1,n'_2}$. \qed
\end{proof}

Now we can proceed with the functionality decision procedure, see also Algorithm~\ref{alg:functional}.
\begin{proposition}\label{lemma:func_test}
Assume that ${\cal M}$ is an effective mge monoid. Then the functionality of a real-time transducer ${\cal T}=\tuple{\Sigma,{\cal M},Q,I,\Delta,T}$ is decidable.
\end{proposition}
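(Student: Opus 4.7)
My plan is to reduce the functionality test to the characterization of Lemma~\ref{lemma:functional}, using Proposition~\ref{PropCondEq} to replace condition~1 (which quantifies over all relevant pairs) by its local reformulation over transitions of the squared automaton. Under this reformulation, both conditions become finitely checkable, and the effectiveness hypotheses on ${\cal M}$ will ensure that every individual check is computable.

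Concretely, I would proceed in five steps. First, construct the squared output automaton ${\cal A}^2$ using Algorithm~\ref{alg:sq_aut}. Second, compute the set of co-accessible states of ${\cal A}^2$ and restrict attention to them. Third, compute a valuation $\tuple{\rho,\nu}$ via Algorithm~\ref{alg:valuation}; this is effective by Proposition~\ref{prop:valuation} since $\circ$, $\eta$, and equality in ${\cal M}$ are computable. Fourth, verify condition~2 of Lemma~\ref{lemma:functional}: for every final accessible state $f \in F\times F$ for which $\rho(f)$ is defined, test that $\nu(f)=\tuple{e,e}$ by decidable equality. Fifth, verify condition~1 in its local form from Proposition~\ref{PropCondEq}: for every transition $\tuple{p,\tuple{u_1,u_2},q}\in \Delta_2$ such that both $\rho(p)=\tuple{m_1,m_2}$ and $\rho(q)=\tuple{n_1,n_2}$ are defined, compute $\tuple{x_1,x_2}:=\eta(m_1 u_1,m_2 u_2)$ and check whether it is a mge of $\tuple{n_1,n_2}$. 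Declare ${\cal T}$ functional if and only if all checks succeed; correctness follows immediately from Lemma~\ref{lemma:functional} and Proposition~\ref{PropCondEq}.

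The delicate subtask is the mge check in step five. To certify that $\tuple{x_1,x_2}$ is a mge of $\tuple{n_1,n_2}$ I would first test $n_1 x_1 = n_2 x_2$, which establishes that $\tuple{x_1,x_2}$ is at least an equalizer. Assuming this holds, I would check that the already computed mge $\nu(q)=\tuple{y_1,y_2}$ is an instance of $\tuple{x_1,x_2}$, i.e., that some $c\in M$ satisfies $x_1 c = y_1$ and $x_2 c = y_2$. By Corollary~\ref{CorSolvability}, solvability of $x_1 c = y_1$ is decidable and its unique solution $c$ is computable; one further equality test then settles $x_2 c = y_2$. By Lemma~\ref{LemmaJointEq}, once $\tuple{x_1,x_2}$ is known to equalize $\tuple{n_1,n_2}$ and $\nu(q)$ is a known mge of $\tuple{n_1,n_2}$, the instance relation between them is equivalent to $\tuple{x_1,x_2}$ being a mge.

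The hard part will be precisely this mge verification: the definition of mge quantifies universally over all equalizers, while the effectiveness assumptions give us only $\eta$ and solvability of single equations. The idea above circumvents the universal quantifier by leveraging $\nu(q)$, which is already a mge by construction, so that the mge property reduces to a single instance check, which in turn reduces to one solvability test plus one equality test. All other steps are routine traversals of an automaton of size $O(|Q|^2)$ with a bounded number of monoid operations per transition.
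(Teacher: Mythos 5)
Your reduction follows the same route as the paper: build the squared automaton and a valuation, then decide functionality by testing condition~2 of Lemma~\ref{lemma:functional} on final states and condition~1 through its local reformulation in Proposition~\ref{PropCondEq}. The one genuine gap is in your step five, the mge verification. To certify that $\nu(q)=\tuple{y_1,y_2}$ is an instance of $\tuple{x_1,x_2}=\eta(m_1u_1,m_2u_2)$ you invoke Corollary~\ref{CorSolvability} to ``decide solvability'' of $x_1c=y_1$. That corollary only computes the solution \emph{under the hypothesis that one exists}; it does not assert that existence is decidable, and at this point of the algorithm a solution need not exist (that is exactly what is being tested). Tracing solvability back to Definition~\ref{DefComputational} pushes the problem one level down: $x_1c=y_1$ is solvable iff $\tuple{x_1,y_1}$ is equalizable (decidable by (iii)) \emph{and} the second component of $\eta(x_1,y_1)$ is invertible, and axiom (iv) only promises that inverses of invertible elements are computable, not that invertibility is decidable. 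So, as written, step five is not an effective procedure under the stated axioms.

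The repair is the paper's key observation, and it makes the check strictly simpler. Since $\nu(q)$ is by construction a mge of $\rho(q)=\tuple{n_1,n_2}$, Lemma~\ref{LemmaJointEq} yields: $\tuple{m_1u_1,m_2u_2}$ is equalizable with $\eta(m_1u_1,m_2u_2)$ a mge of $\tuple{n_1,n_2}$ if and only if $\nu(q)$ itself equalizes $\tuple{m_1u_1,m_2u_2}$, i.e.\ if and only if the single decidable equality $m_1u_1y_1=m_2u_2y_2$ holds --- no instance test, no solvability test, not even an application of $\eta$ to the extended pair. This is precisely the test performed in Algorithm~\ref{alg:functional}. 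Two smaller points: you should explicitly reject when $\rho(q)$ is defined but not equalizable (so $\nu(q)$ is undefined), since condition~1 then fails outright and $\eta(m_1u_1,m_2u_2)$ in your step five may not even be defined; and your equivalence ``equalizer $+$ instance $\Leftrightarrow$ mge'' is correct but follows directly from the definitions of mge and instance rather than from Lemma~\ref{LemmaJointEq}, which is needed for the joint-equalizer argument above instead. With these adjustments your argument coincides with the paper's proof.
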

\begin{proof}
We proceed by first constructing the trimmed squared automaton ${\cal A}$ and a valuation for ${\cal A}$. Afterwards for each transition in the trimmed squared automaton ${\cal A}$ we check Condition 2 of Proposition~\ref{PropCondEq}. If the check fails, then ${\cal T}$ is not functional. Otherwise the transducer is functional iff for every final state $f \in F^2$ we have that $\nu(f)=\tuple{e,e}$.\qed
\end{proof}

\begin{algorithm}
\begin{multicols}{2}
\noindent
\begin{alltt}
EvaluatedSquaredAutomaton(\({\cal{T}},\eta\))
@1\hspace{0.2cm}\({\cal{A}}\leftarrow{SquaredAutomaton}({\cal{T}})\)
@2\hspace{0.2cm}\(C\leftarrow{Coaccessible}({\cal{A}})\)
@3\hspace{0.2cm}\(\tuple{\rho,\nu}\leftarrow{Valuation}({\cal{A}},C,\eta)\)
@4\hspace{0.2cm}return \({\cal{A}},C,\rho,\nu\)
\end{alltt}
\break
\begin{alltt}
TestFunctionality(\({\cal{T}},\eta\))
@1\hspace{0.2cm}\({\cal{A}},C,\rho,\nu\leftarrow{EvaluatedSqauredAutomaton}({\cal{T}},\eta)\)
@2\hspace{0.2cm}\(\tuple{{\cal{M}}\times{\cal{M}},Q\sb{2},I\sb{2},F\sb{2},\Delta\sb{2}}\leftarrow{\cal{A}}\)
@3\hspace{0.2cm}for \(\tuple{f\sb{1},f\sb{2}}\in{F}\sb{2}\) do
@4\hspace{0.4cm}if \(\nu(f\sb{1},f\sb{2})\neq\tuple{e,e}\) then
@5\hspace{0.6cm}return false
@6\hspace{0.2cm}for \({\tuple{\tuple{p\sb{1},p\sb{2}},\tuple{m\sb{1},m\sb{2}},\tuple{q\sb{1},q\sb{2}}}\in{\Delta}\sb{2}}\) do
@7\hspace{0.4cm}if \(\tuple{p\sb{1},p\sb{2}},\tuple{q\sb{1},q\sb{2}}\in{C}\) then
@8\hspace{0.6cm}if \(\nu(q\sb{1},q\sb{2})=\bot\) then
@9\hspace{0.8cm}return false
@10\hspace{0.5cm}else
@11\hspace{0.7cm}\(\tuple{x\sb{1},x\sb{2}}\leftarrow\rho(p\sb{1},p\sb{2})\)
@12\hspace{0.7cm}\(\tuple{y\sb{1},y\sb{2}}\leftarrow\nu(q\sb{1},q\sb{2})\)
@13\hspace{0.7cm}if \(x\sb{1}m\sb{1}y\sb{1}\neq{x}\sb{2}m\sb{2}y\sb{2}\) then
@14\hspace{0.9cm}return false
@15\hspace{0.2cm}done
@16\hspace{0.2cm}return true
\end{alltt}
\end{multicols}
\caption{Functionality test according to Proposition~\ref{PropCondEq}.}\label{alg:functional}
\end{algorithm}

\begin{remark}\label{complexity_issue}
The constraint on the original transducer to be a real-time entails the following complexity concern. In case that the original transducer is not real-time, we should perform a $\varepsilon$-closure procedure in advance. Whereas, this would not increase the number of states of the transducer, in the worst case this may lead to a squared increase of the number of transitions, $|\Delta| \sim |\Sigma||Q|^2$. This may harm the construction of the squared automaton and cause it to produce as many as $|\Sigma| |Q|^4$ transitions, whereas the original (not)real-time transducer may have had only $O(|\Sigma||Q|)$ transitions.  
\end{remark}

The following proposition suggests a solution for the concern raised by Remark~\ref{complexity_issue}. 
We use the notation introduced in Definition~\ref{DefExtExt}.

\begin{proposition}\label{prop:epsilon-sq_aut}
Let ${\cal T}=\tuple{\Sigma^\ast\times{\cal M},Q,I,\Delta,F}$ be a  monoidal finite-state transducer. Then
\begin{enumerate}
\item the monoidal automaton
\begin{eqnarray*}
{\cal A}_{e}^2 &=& \tuple{{\cal M}\times {\cal M},Q\times Q,I\times I,F\times F,\Delta^{e}_2}, \text{ where} \\
\Delta^{e}_2 &=& \{\tuple{\tuple{p_1,p_2},\tuple{m_1,m_2},\tuple{q_1,q_2}}\,|\, \exists a\in (\Sigma\cup \{\varepsilon\}) (\tuple{p_i,\tuple{a,m_i},q_i}\in \Delta^{ext} \text{ for } i=1,2)\}
\end{eqnarray*}
is a squared output automaton for ${\cal T}$.
\item the monoidal automaton:
\begin{eqnarray*}
{\cal A}^2 &=& \tuple{{\cal M}\times {\cal M},Q\times Q,I\times I,F\times F,\Delta_2}, \text{ where} \\
\Delta_2 & = & \{\tuple{\tuple{p_1,p_2},\tuple{m_1,m_2},\tuple{q_1,q_2}}\,|\, \exists a\in \Sigma (\tuple{p_i,\tuple{a,m_i},q_i}\in \Delta \text{ for } i=1,2)\}\\
&& \cup \{\tuple{\tuple{p_1,p_2},\tuple{e,m_2},\tuple{p_1,q_2}}\,|\, (\tuple{p_2,\tuple{\varepsilon,m_2},q_2}\in \Delta)\}\\
&& \cup \{\tuple{\tuple{p_1,p_2},\tuple{m_1,e},\tuple{q_1,p_2}}\,|\, (\tuple{p_1,\tuple{\varepsilon,m_1},q_1}\in \Delta)\}
\end{eqnarray*}
is a squared output automaton for ${\cal T}$.
\end{enumerate}
\end{proposition}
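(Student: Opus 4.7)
For both parts the task is to verify the biconditional defining a squared output automaton. The conceptual content is that two paths of ${\cal T}$ consuming the same input word $u\in \Sigma^\ast$ can be synchronized into a single path of the squared automaton by interleaving their $\varepsilon$-transitions appropriately, and conversely any such squared path decomposes into two paths of ${\cal T}$ with common input. Both parts follow the same skeleton; the only difference is the way the synchronization is encoded.

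For Part 1, I would argue by induction. The direction $(\Leftarrow)$ starts from paths $\tuple{p_i,\tuple{u,m_i},q_i}\in \Delta^\ast$ for $i=1,2$ sharing input $u$. Writing each path as a sequence of transitions whose input labels (elements of $\Sigma \cup \{\varepsilon\}$) concatenate to $u$, I would interleave them greedily: whenever both paths have a pending $\varepsilon$-transition, pair them (using $a=\varepsilon$); whenever only one path has a pending $\varepsilon$-transition, pair that transition with the reflexive loop $\tuple{r,(\varepsilon,e),r}\in \Delta^{ext}$ on the other path's current state; whenever both paths next consume the same real symbol $a\in\Sigma$, pair those two real transitions. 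Each such pairing is a valid $\Delta_2^{e}$-transition, and their concatenation yields a path $\tuple{\tuple{p_1,p_2},\tuple{m_1,m_2},\tuple{q_1,q_2}}\in (\Delta_2^{e})^\ast$. The converse $(\Rightarrow)$ is again by induction on the length of the squared path: every $\Delta_2^{e}$-transition projects to transitions in $\Delta^{ext}$ on both coordinates sharing a common $a \in \Sigma \cup \{\varepsilon\}$, and the trivial self-loops $\tuple{r,(\varepsilon,e),r}$ may be removed without affecting membership in $\Delta^\ast$ or the composed label $\tuple{u,m_i}$.

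For Part 2 the same skeleton applies, but instead of encoding simultaneous $\varepsilon$-moves through the extension, the three clauses of $\Delta_2$ render one-sided $\varepsilon$-moves explicit. In $(\Leftarrow)$ I would process any pending $\varepsilon$-transition on the first component via a transition of the third clause, any pending $\varepsilon$-transition on the second component via the second clause (if both sides have pending $\varepsilon$-transitions, discharge them in either order using two consecutive steps), and paired real $a$-transitions via the first clause. In $(\Rightarrow)$ I would decompose each $\Delta_2$-transition: the first clause gives genuine transitions for both components, while a single-sided $\varepsilon$ clause yields one genuine transition in $\Delta$ and a reflexive step $\tuple{r,(\varepsilon,e),r}\in \Delta^\ast$ on the other component. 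The main obstacle is the bookkeeping in the interleaving step: one must specify the procedure precisely and verify inductively that after each squared step the unprocessed suffixes of the two paths still share a common remaining input, so that the construction terminates exactly at $\tuple{q_1,q_2}$ with outputs $\tuple{m_1,m_2}$.
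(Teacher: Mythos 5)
Your argument is correct: the synchronization/interleaving of two paths with a common input, padding one-sided $\varepsilon$-moves with the reflexive loops of $\Delta^{ext}$, is exactly the content of Part~1, which the paper simply declares straightforward. Where you diverge is Part~2: you re-run the entire interleaving argument a second time against the three clauses of $\Delta_2$, whereas the paper derives Part~2 from Part~1 by a short reduction, namely by proving $\Delta_2^\ast = (\Delta^{e}_2)^\ast$. The two inclusions are cheap: $\Delta_2\subseteq \Delta^{e}_2$ is immediate (a one-sided $\varepsilon$-clause is just a $\Delta^{e}_2$-transition in which the idle component takes its reflexive $\tuple{\varepsilon,e}$ loop), and conversely a simultaneous $\varepsilon$-move $\tuple{\tuple{p_1,p_2},\tuple{m_1,m_2},\tuple{q_1,q_2}}\in\Delta^{e}_2$ factors through the intermediate state $\tuple{p_1,q_2}$ as two consecutive one-sided moves in $\Delta_2$. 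Since the defining biconditional of a squared output automaton depends only on the generalized transition relation, Part~2 then follows from Part~1 with no further bookkeeping. Your version buys nothing extra and costs a second pass through the invariant that the unprocessed suffixes share a common remaining input; the paper's version localizes all the delicate work in Part~1 and makes Part~2 a two-line consequence. Both proofs are valid, but if you adopt the paper's reduction you should still make your Part~1 interleaving explicit, since that is where the actual content lies.
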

\begin{proof}
The first part is straightforward. For the second we prove $\Delta_2^*= (\Delta_2^{e})^*$.
Let $\tuple{p_i,\tuple{\varepsilon,m_i},q_i}\in \Delta^{e}$. Then 
\begin{eqnarray*}
\tuple{\tuple{p_1,p_2},\tuple{e,m_2},\tuple{p_1,q_2}} &\in& \Delta_2\\ 
\tuple{\tuple{p_1,q_2},\tuple{m_1,e},\tuple{q_1,q_2}} &\in& \Delta_2.
\end{eqnarray*}
Summing up we have that
$\tuple{\tuple{p_1,p_2},\tuple{m_1,m_2},\tuple{q_1,q_2}}\in \Delta_2^*$.
Therefore $\Delta_2^{e}\subseteq \Delta_2^*$.
Since, obviously $\Delta_2\subseteq \Delta_2^{e}$, we get $\Delta_2^*=(\Delta_2^{e})^*$. From the definition of a squared output automaton and Part 1 it follows that ${\cal A}^2$ is a squared output automaton for ${\cal T}$.\qed
\end{proof}
Thus applying the functional test to ${\cal A}^2$ yields a functional test algorithm for arbitrary transducer ${\cal T}$.

\begin{remark}\label{complexity_answer}
We can analyze the number of transitions in the modified squared automaton ${\cal A}^2$ as follows.
Let
\begin{eqnarray*}
\Delta_{a} &=&\{\tuple{p,\tuple{a,m},q}\in \Delta\} \text{ for } a\in \Sigma\cup \{\varepsilon\}
\end{eqnarray*}
Then the number of transitions in the modified ${\cal A}^2$ is bounded by:
\begin{equation*}
|\Delta_2| \le \sum_{a\in \Sigma} |\Delta_a|^2 + 2|Q| |\Delta_{\varepsilon}|.
\end{equation*}
Thus, we get at most a squared increase of the number of transitions. This is an especially desired bound
in the case where the transducer is constructed from a regular expression. In this case we can apply $\varepsilon$-constructions for union, concatenation and iteration. Specifically, we arrive at a transducer with a linear number of transitions and states in terms of the original regular expression. This shows that the upper bound for $|\Delta_2|$ in this case would be only the square of the original regular expression. This compares favourably to the worst case scenario described in Remark~\ref{complexity_issue}.
\end{remark}

\section{Bimachine construction based on the equalizer accumulation principle}\label{SecBimachineConstruction}

Let ${\cal T}=\tuple{\Sigma^*\times{\cal M},Q,I,\Delta,F}$ denote a functional transducer
with output in the effective mge monoid ${\cal M}=(M,\circ,e)$. Without loss of generality we assume that $\eta(e,e)=\tuple{e,e}$. Further we  assume that $\tuple{\varepsilon,e} \in L({\cal T})$ (cf. Remark~\ref{real-time2}).

In this section we show how to construct a bimachine with at most $2^{|Q|+1}$ states that is equivalent to ${\cal T}$.
Specifically, we prove:
\begin{theorem}\label{th:contribution_real_time}
If ${\cal T}$ is a functional real-time transducer, then there is a bimachine ${\cal B}=\tuple{{\cal M},{\cal A}_L,{\cal A}_R,\psi}$ such that ${\cal A}_L$ has at most $2^{|Q|}$ and ${\cal A}_R$ has at most $2^{|Q|}$ states and $O_{\cal B}  = O_{\cal T}$.
\end{theorem}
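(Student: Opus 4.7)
The plan is to take ${\cal A}_L$ to be the accessible part of the ordinary power-set determinisation of the input projection of ${\cal T}$, with initial state $s_L=I$ and every state accepting. Symmetrically, ${\cal A}_R$ is the power-set determinisation of the \emph{reversed} input projection, started in $s_R=F$ and read right-to-left, again with every state accepting. Both are standard subset constructions on $Q$, so $\vert L\vert,\vert R\vert\leq 2^{\vert Q\vert}$ is immediate. All of the work is in defining the output function $\psi$ and checking correctness.

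Fix a total ordering of $Q$ once and for all. For every non-empty subset $S=\{p_1<\cdots<p_k\}\subseteq Q$ that is an ``active set'' (i.e.\ arises as $\ell\cap r'$ for a reachable pair of states $\ell$ of ${\cal A}_L$, $r'$ of ${\cal A}_R$), I set
$$\tilde{X}(S)\ :=\ \gamma^{(k)}\bigl(\nu(p_1,p_2),\nu(p_2,p_3),\ldots,\nu(p_{k-1},p_k)\bigr)\ \in\ M^k,$$
where $\nu$ is the valuation of the squared automaton of ${\cal T}$ from Proposition~\ref{prop:valuation} and $\gamma$ is the accumulation operator of Corollary~\ref{CorollaryCombinationPairwisecmges2}. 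The key fact is that when ${\cal T}$ is functional, Lemma~\ref{lemma:functional} together with Lemma~\ref{LemmaJointEq} forces each $\nu(p_i,p_{i+1})$ to be a mge of \emph{every} relevant pair for $(p_i,p_{i+1})$, so Corollary~\ref{CorollaryCombinationPairwisecmges2} yields that $\tilde{X}(S)$ is a mge of every prefix-output tuple $(m_{p_1},\ldots,m_{p_k})$ that ${\cal T}$ can realise on an input prefix whose active set equals $S$. I then define $\psi(\ell,\sigma,r)$ when $A:=\ell\cap\delta_R(r,\sigma)\neq\emptyset$: pick any transition $\tuple{p,\tuple{\sigma,m},q}\in\Delta$ with $p\in A$ and $q\in A':=\delta_L(\ell,\sigma)\cap r$, and let $\psi(\ell,\sigma,r)$ be the unique $v\in M$ satisfying
$$\tilde{X}(A)_p\circ v\ =\ m\circ\tilde{X}(A')_q.$$

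For well-definedness I fix any input $w=u\sigma w'$ realising the configuration $(\ell,\sigma,r)$ at position $\vert u\vert+1$; let $m^u_{p'}$ be a realised prefix output at each $p'\in A$, and $N_A:=m^u_{p'}\circ\tilde{X}(A)_{p'}$, which does not depend on $p'$ because $\tilde{X}(A)$ is a mge of $(m^u_{p'})$. Similarly $N_{A'}:=m^{u\sigma}_{q'}\circ\tilde{X}(A')_{q'}$ is independent of $q'\in A'$. For each $p'\in A$ pick a transition $(p',\sigma,\alpha_{p'},q_{p'})$ with $q_{p'}\in A'$---one exists because $p'\in \delta_R(r,\sigma)$ witnesses a continuation through $\sigma$ to a final state. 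Then $(\alpha_{p'}\circ\tilde{X}(A')_{q_{p'}})_{p'\in A}$ is an equalizer of $(m^u_{p'})_{p'\in A}$ since each entry multiplied on the left by $m^u_{p'}$ equals $N_{A'}$. Being an instance of the mge $\tilde{X}(A)$, it has the form $(\tilde{X}(A)_{p'}\circ v)_{p'\in A}$ for a single $v\in M$, and Lemma~\ref{LemmaLeftCancelation} (left cancellation) gives uniqueness of $v$, independence from the particular chosen transition, and independence from the realising input $w$. Hence $\psi$ is a well-defined partial function.

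Correctness is an induction on the length of $w=a_1\cdots a_n\in\Dom(O_{\cal T})$. Let $\lambda_i,\rho_i$ be the left and right states passed to $\psi$ at position $i$ and $A_i:=\lambda_i\cap\delta_R(\rho_i,a_i)$, and set $N_i:=m^{a_1\cdots a_{i-1}}_p\circ\tilde{X}(A_i)_p$. The base case $N_1=e$ uses $A_1\subseteq I$, $m^\varepsilon_p=e$, $\eta(e,e)=\tuple{e,e}$, and Remark~\ref{pure_unity_case}. The inductive step is exactly the defining equation of $\psi$ multiplied on the left by $m^{a_1\cdots a_{i-1}}_p$: $N_i\circ\psi(\lambda_i,a_i,\rho_i)=N_{i+1}$. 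So the accumulated bimachine output after consuming $a_1\cdots a_i$ equals $N_{i+1}$, and at $i=n$ the set $A_{n+1}\subseteq F$ together with functionality forces every $m^w_p$ to equal $O_{\cal T}(w)$; Lemma~\ref{LemmaLeftCancelation} then yields $\tilde{X}(A_{n+1})=(e,\ldots,e)$, so $O_{\cal B}(w)=N_{n+1}=O_{\cal T}(w)$. The main obstacle is the well-definedness paragraph: lifting Lemma~\ref{lemma:functional}.1 from pairs to tuples---so that the structurally defined $\tilde{X}(S)$ acts as a mge of \emph{every} realised prefix-output tuple at $S$---is what makes plain power-set subautomata sufficient and a single element $v$ jointly satisfy the defining equation for all transitions witnessing the step.
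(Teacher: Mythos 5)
Your proposal is correct and follows essentially the same route as the paper: plain power-set determinizations for ${\cal A}_L$ and ${\cal A}_R$, the per-set mge $\tilde{X}(S)$ built from the valuation $\nu$ of the squared automaton via the accumulation operator $\gamma$ (the paper's $\phi_S$ of Step~3), the output $\psi$ defined by solving $\phi_S(p)\circ c = m\circ\phi_{S'}(q)$ for one witnessing transition, well-definedness via the mge/equalizer-instance argument of Proposition~\ref{prop:well-defined}, and the telescoping correctness computation of Theorem~\ref{th:groups} with the boundary conditions at $I$ and $F$. No substantive differences to report.
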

\begin{theorem}\label{th:contribution}
If ${\cal T}$ is an arbitrary functional transducer, then there is a bimachine ${\cal B}=\tuple{{\cal M},{\cal A}_L,{\cal A}_R,\psi}$ such that ${\cal A}_L$ has at most $2^{|Q|}$ and ${\cal A}_R$ has at most $2^{|Q|}$ states and $O_{\cal B}  = O_{\cal T}$.
\end{theorem}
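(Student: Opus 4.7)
The plan is to reduce Theorem~\ref{th:contribution} to the already established real-time version, Theorem~\ref{th:contribution_real_time}, by an $\varepsilon$-removal preprocessing that preserves the state set. Without loss of generality we may assume $\mathcal{T}$ is trimmed, since removing states that are not accessible or not co-accessible yields an equivalent transducer with no more than $|Q|$ states and preserves functionality.

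Since $\mathcal{T}$ is trimmed and functional, Lemma~\ref{epsilon-cycles} tells us that every $\varepsilon$-cycle in $\mathcal{T}$ is labelled by $\tuple{\varepsilon,e}$. Hence the hypothesis of Remark~\ref{real-time2} is met, and the standard specialised $\varepsilon$-closure of \cite{RS97} converts $\mathcal{T}$ into a real-time transducer $\mathcal{T}' = \tuple{\Sigma^{\ast}\times \mathcal{M}, Q, I, \Delta', F'}$ over the \emph{same} state set $Q$, satisfying $L(\mathcal{T}) \cap (\Sigma^{+} \times \mathcal{M}) = L(\mathcal{T}') \cap (\Sigma^{+} \times \mathcal{M})$. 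In particular $\mathcal{T}'$ is a functional real-time transducer with $|Q|$ states and $O_{\mathcal{T}'}(w) = O_{\mathcal{T}}(w)$ for every $w \in \Sigma^{+}$.

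Applying Theorem~\ref{th:contribution_real_time} to $\mathcal{T}'$ yields a bimachine $\mathcal{B} = \tuple{\mathcal{M}, \mathcal{A}_L, \mathcal{A}_R, \psi}$ with $|\mathcal{A}_L| \leq 2^{|Q|}$ and $|\mathcal{A}_R| \leq 2^{|Q|}$, and $O_{\mathcal{B}} = O_{\mathcal{T}'}$. It remains only to handle the empty input: by Definition~\ref{DefBimachine} the bimachine always satisfies $O_{\mathcal{B}}(\varepsilon) = \psi^{\ast}(s_L,\varepsilon,s_R) = e$, which by the standing assumption $\tuple{\varepsilon,e} \in L(\mathcal{T})$ agrees with $O_{\mathcal{T}}(\varepsilon) = e$. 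Combining the two cases gives $O_{\mathcal{B}} = O_{\mathcal{T}}$ as required.

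The only conceptual hurdle is checking that the $\varepsilon$-closure indeed leaves the state set untouched while producing a genuinely real-time and equivalent transducer on $\Sigma^{+}$; but once Lemma~\ref{epsilon-cycles} has eliminated non-trivial $\varepsilon$-cycles this is precisely what the cited construction from \cite{RS97} delivers (at the cost of possibly many more transitions, see Remark~\ref{complexity_issue}). All the heavy lifting — the two power-set determinizations, the equalizer accumulation that defines $\psi$, and the $2^{|Q|}$ bounds — is inherited unchanged from Theorem~\ref{th:contribution_real_time}, so no further argument is needed beyond this preprocessing step.
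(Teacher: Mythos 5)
Your proof is correct, but it takes a genuinely different route from the paper's. You reduce to the real-time case by first trimming, invoking Lemma~\ref{epsilon-cycles} to rule out non-trivial $\varepsilon$-cycles, and then applying the specialised $\varepsilon$-closure of Remark~\ref{real-time2}, which (as Remark~\ref{complexity_issue} confirms) does not increase the number of states; Theorem~\ref{th:contribution_real_time} then delivers the $2^{|Q|}$ bounds, and the $\varepsilon$-input case is settled by the standing convention $\tuple{\varepsilon,e}\in L({\cal T})$. The paper instead deliberately avoids this preprocessing: it amends the construction itself (Steps~1-$\varepsilon$ through 5-$\varepsilon$), building the squared automaton directly on the non-real-time transducer via Proposition~\ref{prop:epsilon-sq_aut}, using $\varepsilon$-power-set determinization for ${\cal A}_L$ and ${\cal A}_R$, and relying on Proposition~\ref{prop:well-pre-defined2} together with Lemma~\ref{lemma:general_to_ordinary} (which guarantees that an \emph{ordinary} transition can still be selected in Step~4) to establish correctness in Theorem~\ref{th:contribution_general_constructive}. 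Since Theorem~\ref{th:contribution} only asserts a bound on the number of \emph{states}, both arguments are valid proofs of the statement; what your reduction sacrifices is the transition-count economy that motivates the paper's direct construction --- the $\varepsilon$-closure can blow $|\Delta|$ up to $\Theta(|\Sigma||Q|^2)$ and hence the squared automaton to $\Theta(|\Sigma||Q|^4)$ transitions (Remark~\ref{complexity_issue}), whereas the paper's route keeps the squared automaton within roughly the square of the original transition count (Remark~\ref{complexity_answer}). In short: your proof is a clean and legitimate shortcut for the theorem as stated, while the paper's longer route buys the complexity guarantee advertised in its conclusion.
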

In both cases, the left deterministic automaton ${\cal A}_L$ results from the determinization of the underlying automaton for ${\cal T}$, whereas the right deterministic automaton ${\cal A}_R$ results from the determinization of the reversed underlying automaton for ${\cal T}$. Of course, in the second case we adopt $\varepsilon$-power-set construction. The subtle part of the 
construction is the definition of the output function, $\psi$. It is in this part where the squared automaton, ${\cal A}^2$, and the notion of mge's come into play.

We present our algorithm stepwise. First, we consider a special case and use it to informally describe our intuition and motivation for the construction. Sections~\ref{bimachine:construction} and~\ref{bimachine:correctness} present the proof of Theorem~\ref{th:contribution_real_time}. In Section~\ref{bimachine:construction} we give the formal construction and we prove its correctness in Section~\ref{bimachine:correctness}. In Section~\ref{bimachine:general} we show that very subtle and natural amendments of our
construction yield the proof of Theorem~\ref{th:contribution}.
\subsection{High-level description}
To illustrate the idea of our algorithm, we consider an example where ${\cal T}$ is a functional real-time transducer with outputs in ${\mathbb R}_{\ge 0}=({\mathbb R}_{\ge 0},+,0)$.

Let $u=a_1\dots a_n$ be in the domain of ${\cal T}$ and consider
\begin{figure}
\includegraphics[width=0.9\textwidth]{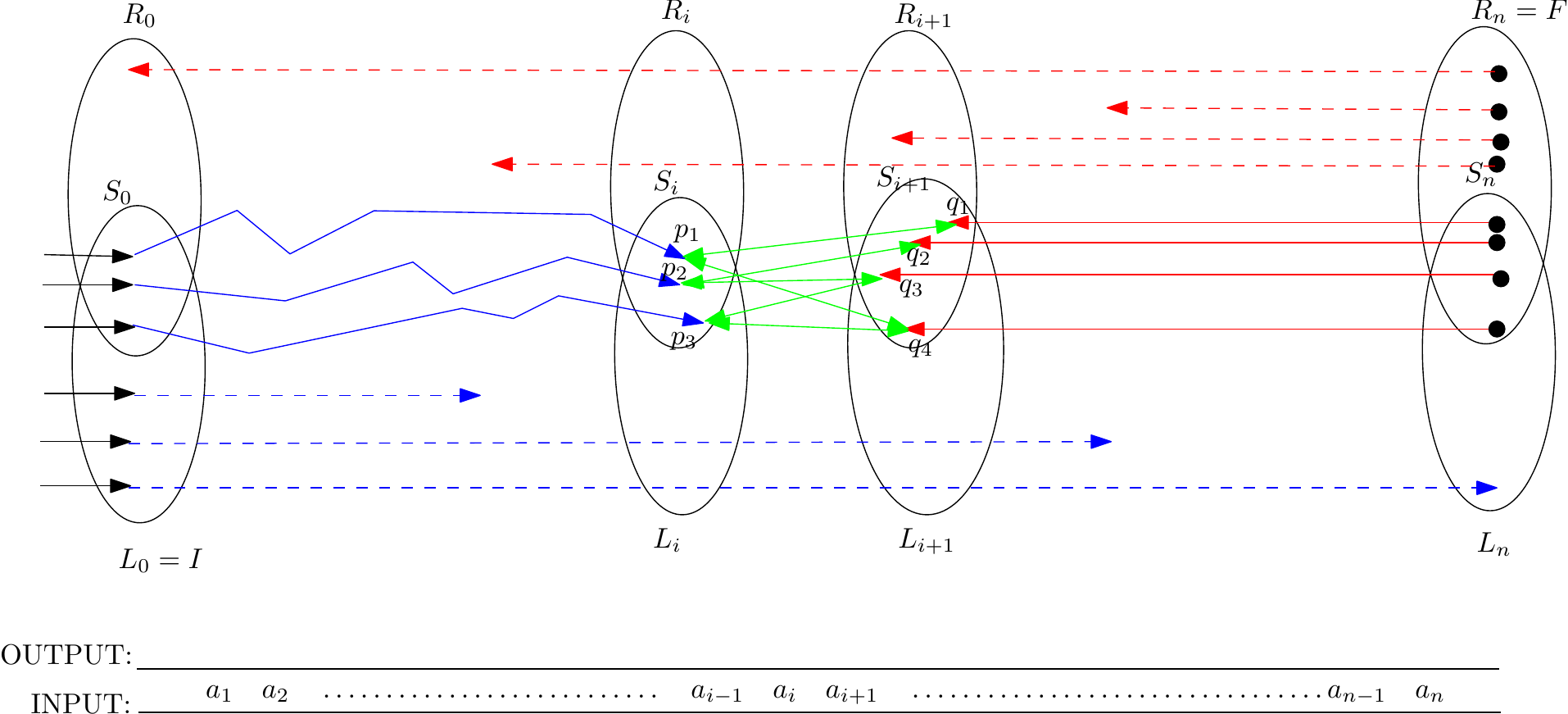}
\caption{}\label{fig:basic}
\end{figure}
the ordinary, i.e. left-to-right, traversal of $u$ in ${\cal T}$, see Figure~\ref{fig:basic}. It starts from the set of initial states, $L_0=I$, and gradually extends $L_i$ to $L_{i+1}$ by following all the outgoing transitions from $L_i$ labelled with input character $a_{i+1}$. Similarly, the reverse, i.e. right-to-left, traversal of $u$ in ${\cal T}$ starts from $R_n=F$ and stepwise turns back from $R_{i+1}$ to $R_i$ by following all the incoming transitions in $R_{i+1}$ that are labelled with $a_{i+1}$. Jointly, $L_i$ 
and $R_i$ express that the successful paths traverse exactly the states $S_i=L_i\cap R_i$ before scanning the character $a_{i+1}$. At 
the time step when $a_{i+1}$ is scanned, all these paths depart from $S_i$, follow a transition labelled $a_{i+1}$, and arrive in $S_{i
+1}=L_{i+1}\cap R_{i+1}$, see Figure~\ref{fig:basic}.

Let us assume that $S_i=\{p_1,p_2,p_3\}$ and each of the states, $p_j$, is reached by a path $\pi_j$ for $j\in \{1,2,3\}$ labeled with $a_1\dots a_i$ on the input. Assume that the path $\pi_j$ produces an output $c_j=cost(\pi_j)$. Let $M_i=\max(c_1,c_2,c_3)$ be the maximum over all the outputs $c_j$. We shall maintain the following invariant. By time step $i$, we will emit exactly the maximum, $M_i$. 

Our construction relies on the following observation.
Since each state $p_j \in S_i$ lies on a successful path labeled with $a_1\ldots a_n$ for $j\in \{1,2,3\}$ we can consider a path $P_j$ with input label  $a_{i+1}\ldots a_n$ starting from $p_j$ and terminating in a final state. Since the transducer ${\cal T}$ is functional, the outputs produced along these paths must coincide, i.e.:
\begin{equation*}
cost(\pi_j) +cost(P_j) =cost(\pi_k)+cost(P_k) \text{ for all }j,k\in \{1,2,3\}.\quad (\dagger)
\end{equation*}
This implies that each state $p_j\in S_i$ can (precompute and) store its imbalance:
$$v_{S_i}(p_j)=\max(c_1,c_2,c_3)-c_j=M_i-c_j.$$
Note that the imbalances depend only on the set $S_i$ and do not depend on the specific input $a_1 \ldots a_i$. 
This follows from the equations ($\dagger$) and the fact that the paths $P_j$ do not depend on $a_1 \ldots a_i$.
A similar argument applies to the set $S_{i+1}$. In our example $S_{i+1}=\{q_1,q_2,q_3,q_4\}$. Thus, if four paths, $\pi_1',\pi_2',\pi_3',\pi_4'$ reach $q_1,q_2,q_3,q_4$, respectively, with the same input, then for their costs, $c_j'=cost(\pi_j')$ it holds that:
$$v_{S_{i+1}}(q_j)=M_{i+1}-c'_j.$$
%\begin{figure}
%\includegraphics[width=0.9\textwidth]{./Pictures/PictureCosts2}
%\caption{}\label{fig:costs2}
%\end{figure} 
Now we select an arbitrary transition $t_i=\tuple{p,\tuple{a_{i+1},m_{i+1}},q}\in \Delta$ with $p=p_j\in S_i$ and $q=q_k \in S_{i+1}$. The path following $\pi_j$  through $t_i$ has cost:
\begin{equation*}
c_k'=c_j+m_{i+1} =M_i - (M_i-c_j)+m_{i+1}=M_i - v_{S_i}(p_j) +m_{i+1}
\end{equation*} 
On the other hand:
\begin{equation*}
M_{i+1}=c_k'+v_{S_{i+1}}(q_k)=M_i - v_{S_i}(p_j)+m_{i+1}+v_{S_{i+1}}(q_k).
\end{equation*}
Hence:
\begin{equation*}
M_{i+1}-M_{i}=m_{i+1}+v_{S_{i+1}}(q_k) - v_{S_i}(p_j).
\end{equation*}
The right hand side of this expression does not depend on the specific transition we take from $S_i$ to $S_{i+1}$ and can be expressed only locally in terms of the structure of ${\cal T}$ and the precomputed imbalances, $v_{S_i}(p_j)$ and $v_{S_{i+1}}(q_k)$ .

Using the above observation we construct the bimachine by first computing the left and right deterministic automata applying the standard determinization procedure to the underlying and the reversed underlying automaton. Then the output function is defined by setting $\psi(L_i,a,R_{i+1}) = M_{i+1}-M_{i}$.

\subsection{Formal construction}\label{bimachine:construction}
As before let ${\cal T}$ be a functional real-time transducer with outputs in the effective mge monoid 
${\cal M}=(M,\circ,e)$. Let $\eta$ denote a function that computes a mge for each equalizable pair of $M$.   
The construction of the bimachine ${\cal B}$ for ${\cal T}$ uses five steps. 

\noindent{\bf Step 1.\ } We compute 
the {\bf squared output automaton} 
$$
{\cal A}^2 = \tuple{{\cal M}\times {\cal M},Q \times Q,I \times I,\Delta_2,F \times F}\\
$$
for ${\cal T}$ and a valuation $\tuple{\rho,\nu}$ for ${\cal A}^2$ as described in 
Proposition~\ref{prop:valuation}, see the left part of Algorithm~\ref{alg:functional}.

\noindent{\bf Step 2.\ } We compute the left and right deterministic automata for the bimachine, see Algorithm~\ref{alg:step2}.
The {\bf left deterministic automaton} is defined as the result 
$${\cal A}_L = \tuple{\Sigma, Q_L, s_L, Q_L, \delta_L}$$
when applying the standard determinization procedure\footnote{By the ``standard'' construction we mean the one that generates only the accessible states in the deterministic automaton.} to the underlying automaton of ${\cal T}$. This means that $Q_L \subseteq 2^Q$, $s_L = I$ and the transition function is defined as
$\delta_L(L,a) := \set{q'}{\exists q \in L, m \in M : \tuple{q,\tuple{a,m},q'} \in \Delta}$.
The {\bf right deterministic automaton} is defined as the result 
$${\cal A}_R = \tuple{\Sigma, Q_R, s_R, Q_R, \delta_R}$$ when applying the standard 
determinization procedure to the reversed underlying automaton of  ${\cal T}$. We have $Q_R \subseteq 2^Q$, $s_R = F$ and
the transition function is defined as $\delta_R(R,a) := \set{q}{\exists q' \in R, m \in M : \tuple{q,\tuple{a,m},q'} \in \Delta}$.

\noindent{\bf Step 3 (mge accumulation, see also Algorithm~\ref{alg:step3}).\ } 
Let 
$$
{\cal S} =\{L\cap R\neq \emptyset \,|\, L\in Q_L, R\in Q_R\}
$$ 
%denote the set of non-empty subsets of $Q$ that are result of the intersection of a state $L\in Q_L$ with a state $R\in Q_R$.
%
%Let 
%
For each $S\in {\cal S}$ we define a (partial) function $\phi_{S}:{\cal S}\rightarrow M$ that represents a mge for $S$: 
fix any enumeration $\langle p_1,p_2,\dots,p_k\rangle$ of the elements of $S$. Then
\begin{equation*}
\phi_S(p_i)= \gamma_i^{(k)}(\nu(p_1,p_2),\dots,\nu(p_{k-1},p_k))
\end{equation*}
where $\gamma^{(k)}:(M\times M)^{k-1}\rightarrow M^{k}$ is the mge accumulation function introduced in  Corollary~\ref{CorollaryCombinationPairwisecmges2}.  

\noindent{\bf Step 4.\ } 
With these notions we now define the {\bf output function} $\psi:Q_L\times \Sigma\times Q_R\rightarrow {\cal M}$, see Algorithm~\ref{alg:step4}. To define $\psi(L,a,R')$ we proceed as follows: 
\begin{enumerate}
\item Let $L' :=\delta_L(L,a)$ and $R := \delta_R(R',a)$. If $L\cap R=\emptyset$, 
      then $\psi(L,a,R')$ is undefined. Otherwise
\item Let $S=L\cap R$ and $S'=L'\cap R'$. 
\item\label{step:transition} Let $t=\tuple{p,\tuple{a,m},p'}\in \Delta$ be arbitrary with $p\in S$ and $p'\in S'$. 
\item We define $\psi(L,a,R'):=c$, where $c$ is the unique solution of the equation $\phi_S(p) \circ c = m\circ \phi_{S'}(p')$ (cf. Corollary~\ref{CorSolvability}).
\end{enumerate}

\noindent{\bf Step 5.\ } Finally, we define ${\cal B} :=\tuple{{\cal M},{\cal A}_L,{\cal A}_R,\psi}$.

\begin{algorithm}
\begin{multicols}{2}
\noindent
\begin{alltt}
SetTrans\(({\Delta},P,a)\)
@1\hspace{0.2cm}return \(\{q\,|\,\exists{p}\in{P}(\tuple{p,a,q}\in\Delta)\}\)

Determinize\(({\cal{A}})\)
@1\hspace{0.20cm}\(\tuple{\Sigma,Q,I,F,\Delta}\leftarrow{\cal{A}}\)
@2\hspace{0.20cm}\(Q\sp{(-1)}\sb{D}\leftarrow\emptyset\); \(Q\sp{(0)}\sb{D}\leftarrow\{I\}\) 
@3\hspace{0.20cm}\(\delta\sb{D}\leftarrow\emptyset\); \(i\leftarrow{0}\);
@3\hspace{0.20cm}while \(Q\sb{D}\sp{(i)}\neq{Q}\sb{D}\sp{(i-1)}\) do
@4\hspace{0.40cm}\(Q\sb{D}\sp{(i+1)}\leftarrow{Q}\sb{D}\sp{(i)}\)
@5\hspace{0.40cm}for \(P\in{Q}\sb{D}\sp{(i)}\setminus{Q}\sb{D}\sp{(i-1)}\) do
@6\hspace{0.60cm}for \(a\in\Sigma\) do
@7\hspace{0.90cm}\(\delta\sb{D}(P,a)\leftarrow{SetTrans(\Delta,P,a)}\)
@8\hspace{0.90cm}\(Q\sb{D}\sp{(i+1)}\leftarrow{Q}\sb{D}\sp{(i+1)}\cup\{\delta\sb{D}(P,a)\}\)
@9\hspace{0.40cm}\(i\leftarrow{i+1}\)
@10\hspace{0.20cm}return \(\tuple{\Sigma,Q\sb{D}\sp{(i)},\{I\},Q\sb{D}\sp{(i)},\delta\sb{D}}\)
\end{alltt}
\break
\begin{alltt}
Project\(({\Delta})\)
@1\hspace{0.2cm}return\(\{\tuple{p,a,q}|\exists{m}(\langle{p,\langle{a,m}\rangle,q}\rangle{\in\Delta})\}\)

Reverse\(({\Delta})\)
@1\hspace{0.2cm}return \(\{\tuple{q,a,p}\,|\,\tuple{p,a,q}\in\Delta\}\)

BimachineAutomata(\({\cal{T}}\))
@1\hspace{0.2cm}\(\tuple{\Sigma\times{\cal{M}},Q,I,F,\Delta}\leftarrow{\cal{T}}\)
@2\hspace{0.2cm}\(\Delta\sb{\Sigma}\leftarrow{Project(\Delta)}\)
@3\hspace{0.2cm}\(\Delta\sp{rev}\sb{\Sigma}\leftarrow{Reverse(\Delta\sb{\Sigma})}\)
@4\hspace{0.2cm}\({\cal{A}}\leftarrow\tuple{\Sigma,Q,I,F,\Delta\sb{\Sigma}}\)
@5\hspace{0.2cm}\({\cal{A}}\sp{rev}\leftarrow\tuple{\Sigma,Q,F,I,\Delta\sp{rev}\sb{\Sigma}}\)
@6\hspace{0.2cm}\({\cal{A}}\sb{L}\leftarrow{Determinize}({\cal{A}})\)
@7\hspace{0.2cm}\({\cal{A}}\sb{R}\leftarrow{Determinize}({\cal{A}}\sp{rev})\)
@8\hspace{0.2cm}return \(\tuple{{\cal{A}}\sb{L},{\cal{A}}\sb{R}}\)
\end{alltt}
\end{multicols}
\caption{Step 2 of the construction. Determinize is a standard power-set determinization algorithm for automata with no $\varepsilon$-transitions. BimachineAutomata constructs the left and right automaton as described in Step 2.}\label{alg:step2}
\end{algorithm}
%\end{minipage}

%\begin{minipage}{0.5\textwidth}

\begin{algorithm}
\begin{multicols}{2}
\fontsize{9}{9}
{
\noindent
\begin{alltt}
n-MGE(\(\eta,n,x\))
//\(x[1..n-1]\) is an array of pairs 
//\(x[i]=\tuple{x\sb{1}[i],x\sb{2}[i]}\)
@1\hspace{0.2cm}if \(n=1\) return \(e\)
@2\hspace{0.2cm}if \(n=2\) return \(x[1]\)
@3\hspace{0.2cm}\(\tuple{z\sb{1},z\sb{2},\dots,z\sb{n-1}}\leftarrow{\text{n-MGE}}(\eta,{n-1},x)\)
@4\hspace{0.2cm}for \(i=1\) to \(n-1\)
@5\hspace{0.4cm}\(y\sb{i}\leftarrow{z\sb{i}}\circ\eta\sb{1}(z\sb{n-1},x\sb{1}[n-1])\)
@6\hspace{0.2cm}\(y\sb{n}\leftarrow{x}\sb{2}[n-1]\circ\eta\sb{2}(z\sb{n-1},x\sb{1}[n-1])\)
@6\hspace{0.2cm}return \(\tuple{y\sb{1},y\sb{2},\dots,y\sb{n}}\)
\end{alltt}

\begin{alltt}
SyntacticMGE(\(S,\eta,\nu\))
@1\hspace{0.2cm}\(S\leftarrow\tuple{p\sb{1},\dots,p\sb{n}}\)
@2\hspace{0.2cm}if \(n=1\) return \(e\)
@3\hspace{0.2cm}for \(i=1\) to \(n-1\)
@4\hspace{0.4cm}\(x[i]\leftarrow\nu(p\sb{i},p\sb{i+1})\)
@5\hspace{0.2cm}return \(\text{n-MGE}(\eta,n,x)\)
\end{alltt}

\break

\begin{alltt}
SetsMGE(\(Q\sb{L},Q\sb{R},\eta,\nu\))
@1\hspace{0.2cm}\({\cal{S}}\leftarrow\emptyset\), \({\Phi}\leftarrow\emptyset\)
@2\hspace{0.2cm}for \(L\in{Q\sb{L}}\) do
@3\hspace{0.4cm}for \(R\in{Q\sb{R}}\) do
@4\hspace{0.6cm}\(S\leftarrow{L}\cap{R}\)
@5\hspace{0.6cm}if \(S\neq\emptyset\) and \(S\not\in{\cal{S}}\) then
@6\hspace{0.8cm}\(\{p\sb{1},\dots,p\sb{n}\}\leftarrow{S}\)
@7\hspace{0.8cm}\(S'\leftarrow\tuple{p\sb{1},\dots,p\sb{n}}\)
@8\hspace{0.8cm}\(\Phi(S)\leftarrow{SyntacticMGE(S,\eta,\nu)}\)
@9\hspace{0.6cm}\({\cal{S}}\leftarrow{\cal{S}}\cup\{S\}\)
@10\hspace{0.4cm}fi
@11\hspace{0.3cm}done
@12\hspace{0.2cm}done
@13\hspace{0.2cm}return \(\tuple{{\cal{S}},\Phi\}}\)
\end{alltt}
}
\end{multicols}
\caption{On the left n-MGE computes a mge for an $n$-tuple of equalizable monoidal elements $\tuple{m_1,\dots,m_n}$. Its implementation follows Corollary~\ref{CorollaryCombinationPairwisecmges2}. The function SyntacticMGE follows the description of $\phi_S$ in Step 3. On the right, SetsMGE takes as input the set of sets of states, $Q_L$ and $Q_R$, and implements Step 3 of the construction. The selector function, $\sigma$, is fixed to take the first element of some arbitrary ordering of the non-empty set $S$, and $\Phi(S)$ corresponds to $\phi_S$ in the description.}\label{alg:step3}
\end{algorithm}

\begin{algorithm}
\begin{multicols}{2}

\fontsize{9}{9}
\noindent
\begin{alltt}
{
Output(\(L,R',a,\delta\sb{L},\delta\sb{R},\Delta,\Phi,\eta,\sp{-1})\)
@1\hspace{0.2cm}\(L'\leftarrow{\delta\sb{L}(L,a)}\)
@2\hspace{0.2cm}\(R\leftarrow{\delta\sb{R}(R',a)}\)
@3\hspace{0.2cm}\(S\leftarrow{L}\cap{R}\)
@4\hspace{0.2cm}if \(S=\emptyset\) return \(\bot\)
@5\hspace{0.2cm}\(S'\leftarrow{L'}\cap{R'}\)
@7\hspace{0.2cm}let \(p\in{S},p'\in{S'}:\tuple{p,\tuple{a,m},p'}\in{\Delta}\)
@8\hspace{0.2cm}\(\phi\sb{S}\leftarrow\Phi(S)\)
@9\hspace{0.2cm}\(\phi\sb{S'}\leftarrow\Phi(S')\)
@10\hspace{0.1cm}\(\tuple{x\sb{1},x\sb{2}}\leftarrow\eta(\phi\sb{S}(p),m\phi\sb{S'}(p'))\)
@11\hspace{0.1cm}return \(x\sb{1}x\sb{2}\sp{-1}\)
}
\end{alltt}

\break

\begin{alltt}
BimachineConstruction(\({\cal{T}},\eta,\sp{-1}\))
@1\hspace{0.2cm}\(\tuple{{\cal{A}},C,\rho,\nu}\leftarrow{EvaluatedSquaredAutomaton({\cal{T}},\eta)}\)
@2\hspace{0.2cm}\(\tuple{{\cal{A}}\sb{L},{{\cal{A}}\sb{R}}}\leftarrow{BimachineAutomata}(\cal{T})\)
@3\hspace{0.2cm}\(\tuple{\Sigma,Q\sb{L},s\sb{L},Q\sb{L},\delta\sb{L}}\leftarrow{{\cal{A}}\sb{L}}\)
@4\hspace{0.2cm}\(\tuple{\Sigma,Q\sb{R},s\sb{R},Q\sb{R},\delta\sb{R}}\leftarrow{{\cal{A}}\sb{R}}\)
@5\hspace{0.2cm}\(\tuple{{\cal{S}},\Phi}\leftarrow{SetsMGE}(Q\sb{L},Q\sb{R},\eta,\nu)\)
@6\hspace{0.2cm}\(\psi\leftarrow\emptyset\)
@7\hspace{0.2cm}\(\tuple{\Sigma\times{\cal{M}},Q,I,F,\Delta}\leftarrow{\cal{T}}\)
@8\hspace{0.2cm}for \(L\in{Q}\sb{L}\) do
@9\hspace{0.3cm}for \(R\in{Q}\sb{R}\) do
@10\hspace{0.3cm}for \(a\in{\Sigma}\) do
@11\hspace{0.6cm}\(x\leftarrow{Output}(L,R,a,\delta\sb{L},\delta\sb{R},\)
\hspace{3cm}\(\Delta,\Phi,\eta,\sp{-1})\)
@12\hspace{0.6cm}if \(x\neq\bot\) then
@13\hspace{0.8cm}\(\psi(L,a,R)\leftarrow{x}\)
@14\hspace{0.4cm}done
@15\hspace{0.3cm}done
@16\hspace{0.2cm}done
@17\hspace{0.2cm}return \(\tuple{{\cal{M}},{\cal{A}}\sb{L},{\cal{A}}\sb{R},\psi}\)
\end{alltt} 
\end{multicols}
\caption{Step 4 of the construction. The function Output computes the $\psi(L,a,R')$ as described in Step 4 and
BimachineConstruction applies Steps 1-4 to compile the ultimate bimachine.}\label{alg:step4}
\end{algorithm}

\subsection{Correctness for real-time functional transducers}\label{bimachine:correctness}
We first show that Step 4 unambiguously defines the domain and the values of the output the function, $\psi$. 
Formally, we have the following proposition. 
%First, we show that $\psi(L,a,R')$ is correctly defined i.e if $L\cap R \not= \emptyset$ then for each state $p \in S$ there exist a suitable transition in Step 4.3 and the corresponding equation in Step 4.4 is solvable and its solution does not depend on the choice of the transition.
%
\begin{proposition}\label{prop:well-defined}
Let ${\cal A}^2$, ${\cal A}_L$, and ${\cal A}_R$ be defined as in Steps~1 and~2 above. Assume that $L,L'\in Q_L$, $R,R'\in Q_R$ and $a\in \Sigma$ satisfy
\begin{equation*}
L'=\delta_L(L,a) \text{ and } R=\delta_R(R',a)
\end{equation*}
and let $S=L\cap R$ and $S'=L'\cap R'$. If $S \not= \emptyset$, then:
\begin{enumerate}
\item  
for each state $p\in S$ there exists a transition $t=\tuple{p,\tuple{a,m},p'}$ with $p'\in S'$.
\item $c=\psi(L,a,R')$ is well-defined and for every transition $\tau=\tuple{p_\tau,\tuple{a,m_\tau},p'_\tau}$ with $p_\tau \in S$ and $p'_\tau \in S'$ it holds:
\begin{equation*}
\phi_S(p_\tau) c=m_\tau \phi_{S'}(p'_\tau) .
\end{equation*}
\end{enumerate}
\end{proposition}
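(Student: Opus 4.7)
For Part 1, let $p \in S = L \cap R$. Since $p \in R = \delta_R(R',a)$, the definition of the right automaton yields a transition $\tuple{p,\tuple{a,m},p'} \in \Delta$ with $p' \in R'$. Because $p \in L$ and $L' = \delta_L(L,a)$, we get $p' \in L'$, so $p' \in L' \cap R' = S'$. The symmetric argument establishes the dual: every $p' \in S'$ admits an incoming $a$-transition from some state of $S$.

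For Part 2, the plan is to attach a canonical output to each state of $S$ and $S'$, identify $\phi_S$ and $\phi_{S'}$ as mges of the resulting tuples, and then read off $c$ as the joint right-shift linking them. Fix any $u \in \Sigma^*$ with $\delta_L^*(s_L,u) = L$. For each $p \in S$, all paths from $I$ to $p$ labelled $u$ produce the same output $m_p$: any two such outputs, completed by a common suffix to a final state (which exists since $p \in R$ is co-accessible), must agree by functionality and right cancellation. Analogously, for each $p' \in S'$ the output $n_{p'}$ of paths $I \to p'$ labelled $ua$ is uniquely determined; and for any transition $\tau = \tuple{p,\tuple{a,m_\tau},p'}$ with $p \in S$, $p' \in S'$, the factorization through $\tau$ forces $n_{p'} = m_p m_\tau$.

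By Lemma~\ref{lemma:functional}, $\nu(p_i,p_{i+1})$ is a mge for the relevant pair $(m_{p_i},m_{p_{i+1}})$, and combined with Corollary~\ref{CorollaryCombinationPairwisecmges2} this yields that $\phi_S$ is a mge for $(m_{p_1},\ldots,m_{p_k})$, so $M_S := m_p \phi_S(p)$ is independent of $p \in S$. Analogously, $\phi_{S'}$ is a mge for $(n_{p'_1},\ldots,n_{p'_{k'}})$ and $M_{S'}$ is well-defined. For each $p_j \in S$ pick, via Part 1, a transition $\tau_j = \tuple{p_j,\tuple{a,m_j},p'_{f(j)}}$; by the dual of Part 1, the map $f$ is surjective onto the enumeration of $S'$. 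Surjectivity implies that $(\phi_{S'}(p'_{f(j)}))_j$ is a mge of the reindexed tuple $(n_{p'_{f(j)}})_j = (m_{p_j} m_j)_j$. Consequently, $(m_j \phi_{S'}(p'_{f(j)}))_j$ is an equalizer of $(m_{p_j})_j$, since $m_{p_j} \cdot m_j \phi_{S'}(p'_{f(j)}) = M_{S'}$ for every $j$. The mge property of $\phi_S$ forces this equalizer to be an instance of $(\phi_S(p_j))_j$: there is a common $c$ satisfying $\phi_S(p_j) c = m_j \phi_{S'}(p'_{f(j)})$ for every $j$. For the transition $\tau_0$ chosen in Step 4 this specializes to $\phi_S(p_0) c = m_0 \phi_{S'}(p'_0)$, establishing existence; uniqueness follows from left cancellation (Lemma~\ref{LemmaLeftCancelation}).

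It remains to verify the equation for an arbitrary transition $\tau = \tuple{p_\tau,\tuple{a,m_\tau},p'_\tau}$. Pick $j$ with $p_j = p_\tau$; then $n_{p'_\tau} = m_{p_\tau} m_\tau$ and $n_{p'_{f(j)}} = m_{p_\tau} m_j$, so equating $M_{S'} = n_{p'_\tau} \phi_{S'}(p'_\tau) = n_{p'_{f(j)}} \phi_{S'}(p'_{f(j)})$ and left-cancelling $m_{p_\tau}$ yields $m_\tau \phi_{S'}(p'_\tau) = m_j \phi_{S'}(p'_{f(j)}) = \phi_S(p_\tau) c$. The main technical obstacles are (i) certifying $\phi_S$ and $\phi_{S'}$ as genuine mges (not merely equalizers) of the canonical-output tuples, which rests on the equalizer-accumulation principle of Corollary~\ref{CorollaryCombinationPairwisecmges2} together with Lemma~\ref{lemma:functional}; and (ii) preservation of the mge property under the surjective reindexing by $f$, which in turn relies on Part 1 and its dual.
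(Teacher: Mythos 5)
Your proof is correct and follows essentially the same route as the paper: Part~1 via the two power-set constructions, and Part~2 by attaching outputs along $u$ (resp.\ $ua$) to the states of $S$ (resp.\ $S'$), invoking Lemma~\ref{lemma:functional} together with Corollary~\ref{CorollaryCombinationPairwisecmges2} to certify $\phi_S$ and $\phi_{S'}$ as mge's of those output tuples, and then using the mge property of $\phi_S$ to extract a single $c$ that works for every transition. One assertion is wrong, though not load-bearing: the dual of Part~1 does \emph{not} make your chosen map $f$ surjective --- it only guarantees that every $p'\in S'$ has \emph{some} incoming $a$-transition from $S$, while your $\tau_j$'s are one fixed choice per $p_j\in S$ and may well miss states of $S'$ (e.g.\ $|S|=1$, $|S'|=2$). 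Fortunately the step you derive from surjectivity, namely that $\bigl(m_j\,\phi_{S'}(p'_{f(j)})\bigr)_j$ equalizes $\bigl(m_{p_j}\bigr)_j$, already follows from the weaker fact that $\phi_{S'}$ is an equalizer of the full tuple $\bigl(n_{p'}\bigr)_{p'\in S'}$, so $n_{p'}\phi_{S'}(p')=M_{S'}$ for every $p'$ in the image of $f$; the rest of your argument, including the left-cancellation step for an arbitrary $\tau$, goes through unchanged. The dual of Part~1 is still genuinely needed --- but only to ensure that every $p'\in S'$ is reachable via $ua$ so that $n_{p'}$ is defined and $\phi_{S'}$ is a mge of the whole tuple, which is where you in fact use it.
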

\begin{proof}
%\begin{enumerate}
%\item 
1. Let $p\in S$. Since $S=L\cap R$, and $R=\delta_R(R',a)$ by the determinization construction of ${\cal A}_R$ there is a transition  $t=\tuple{p,\tuple{a,m},p'}\in \Delta$ with $p'\in R'$. Since $p\in L$ from the determinization construction of ${\cal A}_L$ we get that $p'\in \delta_L(L,a)=L'$. Therefore $p'\in L'\cap R'=S'$ as required.

2. Let $u$ be such that $L=\delta_L^\ast(s_L,u)$. Let $S=\{p_1,\ldots,p_k\}$. Then there exist $s_1,\ldots,s_k \in I$ and $n_1,\ldots,n_k \in M$ such that $\tuple{s_j,\tuple{u,n_j},p_j} \in \Delta^\ast$ for each $1 \le j \le k$. Using Point 1, for each $j\le k$ we fix a transition $t_j=\tuple{p_j,\tuple{a,m_j},p'_j}\in \Delta$ such that $p'_j \in S'$. Let $\tau=\tuple{p_\tau,\tuple{a,m_\tau},p'_\tau}$ with $p_\tau \in S$ and $p'_\tau \in S'$  be arbitrary. In particular, $p_\tau=p_{i}$ for some $i\le k$.

Let $v$ be such that $R'=\delta_R^\ast(s_R,v^{rev})$. Then there exist $f_1,\ldots,f_k,f_\tau \in F$ and $n'_1,\ldots,n'_k,n'_\tau \in M$ such that $\tuple{p'_j,\tuple{ v,n'_j},f_j} \in \Delta^\ast$ for $1 \le j \le k$ and $\tuple{p'_\tau,\tuple{v,n'_\tau},f_\tau}\in \Delta^*$. Since the transducer is functional we have that $n_1 m_1 n'_1 = \ldots = n_k m_k n'_k=n_i m_\tau n'_\tau$. Therefore $\tuple{n_1 m_1,\ldots,n_k m _k,n_i m_\tau}$ and $\tuple{n_1,\ldots,n_i}$ are equalizable. From Lemma~\ref{lemma:functional} we get that $\nu(p_j,p_{j+1})$ is a mge for $\tuple{n_j,n_{j+1}}$. Thus, if $x_j=\phi_{S}(p_j)$ for $j\le k$ by Corollary~\ref{CorollaryCombinationPairwisecmges2} we get that $\phi_{S}=\tuple{x_1,\dots,x_k}$ is a mge of $\tuple{n_1,\ldots,n_k}$. Similarly, if $y_j=\phi_{S'}(p_j')$ and $y_\tau=\phi_{S'}(p'_\tau)$ we get
that $\tuple{y_1,\dots,y_k,y_\tau}$ is an equalizer for $\tuple{n_1m_1,\dots,n_km_k,n_i m_\tau}$. Hence, $\tuple{m_1y_1,\dots,m_ky_k,m_\tau y_\tau}$ is an equalizer for $\tuple{n_1,\ldots,n_k,n_i}$. Therefore there exists $c\in M$ such that $x_j c =m_j y_j$ for $j\le k$ and
further $x_i c=m_\tau y_\tau$. Recalling that $x_j=\phi_{S}(p_j)$, $y_j=\phi_{S'}(p_j')$ and $y_\tau=\phi_{S'}(p'_\tau)$ we get:
\begin{eqnarray*}
\phi_S(p_j) c = m_j \phi_{S'}(p'_j) \text{ for } j \le k \text{ and } \phi_S(p_\tau) c =m_\tau \phi_{S'}(p'_\tau).
\end{eqnarray*}
By the discussion above we know that this system has a solution. Hence each of the equations has an unique solution (cf. Lemma~\ref{lemmaSolvability}) and therefore the solution is uniquely defined by any of the equations. Hence, $\psi(L,a,R')$ is well defined.
  \qed
%\end{enumerate}
\end{proof}
\begin{theorem}\label{th:groups}
 Let $\mathcal{T}$ and ${\cal B}$ be as above and $u=a_1\dots a_n\in \Sigma^+$. Then:
 \begin{enumerate}
 \item if $\tuple{u,m}\in L({\cal T})$, then $O_{\cal B}(u)=m$.
 \item $O_{\cal B}=O_{\cal T}$.
 \end{enumerate}
 \end{theorem}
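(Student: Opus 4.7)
My plan is to prove claim 1 by induction on the prefix length, and then derive claim 2 by checking that $O_{\cal B}$ is undefined on inputs outside $\text{dom}(O_{\cal T})$. Fix $u = a_1\dots a_n$ with ${\cal T}$-output $m$, and for $0 \le i \le n$ set $L_i := \delta_L^*(s_L, a_1\dots a_i)$, $R_i := \delta_R^*(s_R, (a_{i+1}\dots a_n)^{rev})$, and $S_i := L_i \cap R_i$, so that $S_i$ is exactly the set of states traversed by successful paths for $u$ at position $i$. A short unfolding of Definition~\ref{DefBimachine} yields the recursion $M_{i+1} = M_i \circ \psi(L_i, a_{i+1}, R_{i+1})$ for $M_i := \psi^*(s_L, a_1\dots a_i, R_i)$, reducing claim 1 to the equality $M_n = m$.

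The heart of the argument is the invariant
\[
(\ast)\qquad c_p \circ \phi_{S_i}(p) \;=\; M_i \qquad \text{for every } p \in S_i,
\]
where $c_p$ denotes the output of any path from $I$ to $p$ labelled $a_1\dots a_i$. I would first argue that $c_p$ is unambiguous: since $p \in R_i$, $p$ has some continuation with output $k$ to a final state, and two intermediate outputs $c, c'$ to $p$ would give $ck = c'k$ by functionality of ${\cal T}$, hence $c = c'$ by right cancellation in ${\cal M}$. The base case $i = 0$ is direct: the only relevant pair at $\tuple{p_1,p_2} \in I^2$ is $\tuple{e,e}$, so Lemma~\ref{lemma:functional} together with the assumption $\eta(e,e) = \tuple{e,e}$ forces $\nu(p_j,p_{j+1}) = \tuple{e,e}$, and Remark~\ref{pure_unity_case} collapses $\phi_{S_0}(p)$ to $e$, so $(\ast)$ reads $e\circ e = e$.

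For the inductive step, given $p' \in S_{i+1}$ I would pick a transition $\tau = \tuple{p, \tuple{a_{i+1}, m_\tau}, p'} \in \Delta$ realising a path to $p'$. Membership $p \in L_i$ is immediate from the existence of the path, and $p \in R_i$ because $p'$ is co-accessible via $\tau$, so $p \in S_i$ and $c_{p'} = c_p \circ m_\tau$. Proposition~\ref{prop:well-defined}.2 then supplies the bridging identity
\[
\phi_{S_i}(p) \circ \psi(L_i, a_{i+1}, R_{i+1}) \;=\; m_\tau \circ \phi_{S_{i+1}}(p'),
\]
and left-multiplying by $c_p$, using the inductive hypothesis on the left and $c_{p'} = c_p \circ m_\tau$ together with the recursion for $M_{i+1}$ on the right, yields $(\ast)$ at level $i+1$. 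Evaluating $(\ast)$ at $i = n$ proves claim 1: every $p \in S_n \subseteq F$ satisfies $c_p = m$, and Lemma~\ref{lemma:functional}.2 combined with Remark~\ref{pure_unity_case} forces $\phi_{S_n}(p) = e$, whence $O_{\cal B}(u) = M_n = m$.

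For claim 2 it only remains to check that $O_{\cal B}(u)$ is undefined when $u \notin \text{dom}(O_{\cal T})$: by Step~4 of the construction, $\psi(L,a,R')$ is undefined precisely when $L \cap \delta_R(R',a)$ is empty, and absence of a successful path for $u$ forces $S_n = L_n \cap F = \emptyset$, so the final application of $\psi$ is undefined and $O_{\cal B}(u)$ does not exist. The main technical hurdle I anticipate is the algebraic telescoping in the inductive step, namely verifying that the mge identity from Proposition~\ref{prop:well-defined}.2 composes correctly under left-multiplication by $c_p$ and dovetails with the recursion defining $M_i$; once the invariant $(\ast)$ is in place, both claims follow with little further work.
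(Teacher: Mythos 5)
Your proof is correct and follows essentially the same route as the paper: both rest on Proposition~\ref{prop:well-defined}.2 together with the boundary facts $\phi_{S_0}\equiv e$ and $\phi_{S_n}\equiv e$ from Remark~\ref{pure_unity_case}, and your invariant $(\ast)$ is just the paper's telescoping identity $\phi_{S_0}(p_0)c_1\cdots c_i = m_1\cdots m_i\,\phi_{S_i}(p_i)$ restated for all of $S_i$ rather than along one fixed successful path. The only nitpick is in claim 2: the definedness of the last output is governed by $S_{n-1}=L_{n-1}\cap\delta_R(R_n,a_n)$ rather than $S_n$ (these are empty simultaneously, so nothing breaks), and the case $u=\varepsilon$ should be dispatched by the standing convention $\tuple{\varepsilon,e}\in L({\cal T})$.
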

 \begin{proof}
 \begin{enumerate}
 \item Since $\tuple{u,m}\in L({\cal T})$ there is a successful path:
 \begin{equation*}
 \tuple{p_0,\tuple{a_1,m_1},p_1}\dots \tuple{p_{n-1},\tuple{a_n,m_n},p_n} \text{ in } {\cal T}.
 \end{equation*}
 In particular, $p_0\in I$, $p_n\in F$, and $m=m_1\circ m_2\dots \circ m_n$.
 Let $L_i=\delta_L^*(s_L,a_1\dots a_{i})$ and $R_i=\delta^*_R(s_R,a_n\dots a_{i+1})$.
By the power-set determinization construction of ${\cal A}_L$ and ${\cal A}_R$ it follows
that $p_i\in L_i$ and $p_i\in R_i$. Let $S_i=L_i\cap R_i$. Hence $S_i\neq \emptyset$ for
all $i\le n$. By Proposition~\ref{prop:well-defined}, Point 2, it follows that:
\begin{eqnarray*}
c_{i+1}&=&\psi(L_i,a_{i+1},R_{i+1}) \text{ is well-defined and } \\
\phi_{S_i}(p_i) c_{i+1}& =& m_{i+1} \phi_{S_{i+1}}(p_{i+1}).
\end{eqnarray*}
Now a straightforward computation shows that:
%\begin{equation}\label{eq:basic}
$$
\phi_{S_0}(p_0) c_1 c_2 \ldots c_n = m_1 \phi_{S_1}(p_1) c_2 c_3 \ldots c_n = \cdots = m_1 m_2\dots m_n \phi_{S_n}(p_n).
$$
%\end{equation}
Since $S_0\subseteq I$, we have that for all $p_0',p_0''\in S_0$, $\nu(p_0',p_0'')=\tuple{e,e}$. Therefore, since
$\eta(e,e)=\tuple{e,e}$ by Remark~\ref{pure_unity_case}, we get $\phi_{S_0}(p_0)=e$ for all $p_0\in S_0$. Similarly, since $S_n\subseteq F$ all the states in $S_n$ are final. By the functionality test we have that $\nu(p_n',p_n'')=\tuple{e,e}$ for all $p_n',p_n''\in S_n$. Again, since $\eta(e,e)=\tuple{e,e}$, by Remark~\ref{pure_unity_case} $\phi_{S_n}(p_n)=e$.
Therefore:
\begin{equation*}
O_{\cal B}(u)=c_1 \ldots c_n=m_1 \ldots m_n=m.
\end{equation*}
\item First, $O_{\cal B}(\varepsilon)=\varepsilon=O_{\cal T}(\varepsilon)$ by our convention in the beginning of the section. By the first part, we have that $O_{\cal T}\subseteq O_{\cal B}$. Conversely, if $u\in \Dom({\cal B})$ and $u\in \Sigma^+$ it follows that $\delta^*_L(s_L,u)\cap s_R\neq \emptyset$. By the power-set construction of ${\cal A}_L$ and since $s_R=F$ this means that there is a successful path labelled by $u$ in ${\cal T}$. Hence,
$u\in \Dom({\cal T})$. 
 \end{enumerate}\qed
 \end{proof}

\begin{proof}[of Theorem~\ref{th:contribution_real_time}]
Since by Step 2 of the construction, ${\cal A}_L$ and ${\cal A}_R$ have at most $2^{|Q|}$ states,
the result follows by Theorem~\ref{th:groups}. \qed
\end{proof}

\subsection{Non-real-time functional transducers}\label{bimachine:general}
Next, we turn our attention to the general case, where the functional transducers is not necessarily real-time.
The main issue to be addressed are the $\tuple{\varepsilon,m}$-transitions. Nevertheless, very natural amendments
to the construction from Section~\ref{SecBimachineConstruction} yield the result of Theorem~\ref{th:contribution}.
We start with the following verbatim modification:

\noindent{\bf Step 1-$\varepsilon$.\ } See Step~1. For the construction of a squared automaton in this case
we apply Propostion~\ref{prop:epsilon-sq_aut}.

\noindent{\bf Step 2-$\varepsilon$.\ } See Step~2. The only difference here is that we apply a $\varepsilon$-power-set determinization to obtain:
\begin{eqnarray*}
{\cal A}_L = \tuple{\Sigma, Q_L, s_L, Q_L, \delta_L} \\
{\cal A}_R = \tuple{\Sigma, Q_R, s_R, Q_R, \delta_R}
\end{eqnarray*}
Specifically, $Q_L\subseteq 2^Q$ contains only accessible states, $s_L=I$, and for all $L\in Q_L$ and $a\in \Sigma$:
\begin{equation*}
\delta_L(L,a) := \set{q'}{\exists q \in L, m \in M : \tuple{q,\tuple{a,m},q'} \in \Delta^*}.
\end{equation*}
Similarly, $Q_R\subseteq 2^Q$ contains only accessible states, $s_R=F$, and for all $R\in Q_R$ and $a\in \Sigma$:
\begin{equation*}
\delta_R(R,a) := \set{q'}{\exists q \in R, m \in M : \tuple{q',\tuple{a,m},q} \in \Delta^*}. 
\end{equation*}
\noindent{\bf Step 3-$\varepsilon$.\ } See Step~3.

\noindent{\bf Step 4-$\varepsilon$.\ } See Step~4. 

\noindent{\bf Step 5-$\varepsilon$.\ } Finally, we define ${\cal B} :=\tuple{{\cal M},{\cal A}_L,{\cal A}_R,\psi}$.

With these changes we can prove the following modification of Proposition~\ref{prop:well-defined}:
\begin{proposition}\label{prop:well-pre-defined2}
Let ${\cal A}^2$, ${\cal A}_L$, ${\cal A}_R$ and $\phi$ be defined as in Steps~1-$\varepsilon$,~2-$\varepsilon$~and~3-$\varepsilon$ above. Assume that $L,L'\in Q_L$, $R,R'\in Q_R$ and $a\in \Sigma$ satisfy
\begin{equation*}
L'=\delta_L(L,a) \text{ and } R=\delta_R(R',a)
\end{equation*}
and let $S=L\cap R$ and $S'=L'\cap R'$. If $S \not= \emptyset$, then:
\begin{enumerate}
\item  
for each state $p\in S$ there is a generalized transition $t=\tuple{p,\tuple{a,m},p'}\in \Delta^*$ with $p'\in S'$.
\item there is $c\in M$ such that for every generalized transition ${\tau=\tuple{q,\tuple{a,\lambda},q'}\in \Delta^*}$ with $q\in S$ and $q'\in S'$:
\begin{equation*}
\phi_S(q) c=\lambda \phi_{S'}(q').
\end{equation*}
\end{enumerate}
\end{proposition}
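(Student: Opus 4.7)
The proof plan is to parallel the argument of Proposition~\ref{prop:well-defined} almost verbatim, with individual transitions in $\Delta$ replaced by generalized transitions in $\Delta^*$, and to invoke the machinery of Proposition~\ref{prop:epsilon-sq_aut} (which guarantees that the squared automaton built with $\varepsilon$-moves still captures exactly the intended relevant pairs) wherever the real-time proof relied on the single-transition squared automaton. For Part~1, fix $p\in S = L\cap R$. Because the right automaton is built by the $\varepsilon$-power-set determinization, from $p\in R=\delta_R(R',a)$ there exists some $p'\in R'$ and $m\in M$ with $\tuple{p,\tuple{a,m},p'}\in \Delta^*$. Since also $p\in L$, the $\varepsilon$-power-set determinization of ${\cal A}_L$ puts $p'\in \delta_L(L,a)=L'$, so $p'\in L'\cap R'=S'$.

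For Part~2, enumerate $S=\{p_1,\ldots,p_k\}$ and fix $u$ with $L=\delta_L^*(s_L,u)$ and $v$ with $R'=\delta_R^*(s_R,v^{rev})$. The $\varepsilon$-power-set construction furnishes, for each $p_j\in S$, a generalized transition $\tuple{s_j,\tuple{u,n_j},p_j}\in \Delta^*$ from an initial state $s_j$, and from each $p_j'$ (obtained by Part~1 as $\tuple{p_j,\tuple{a,m_j},p_j'}\in \Delta^*$) a generalized transition $\tuple{p_j',\tuple{v,n_j'},f_j}\in \Delta^*$ to a final state. The same holds for the arbitrary generalized transition $\tau=\tuple{q,\tuple{a,\lambda},q'}$: with $q=p_i$, we get $\tuple{p_i',\tuple{v,n_\tau'},f_\tau}\in\Delta^*$. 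Functionality of ${\cal T}$ forces
\begin{equation*}
n_1 m_1 n_1' = \cdots = n_k m_k n_k' = n_i\,\lambda\,n_\tau'.
\end{equation*}
By Proposition~\ref{prop:epsilon-sq_aut} each pair $\tuple{n_j,n_{j+1}}$ is a relevant pair for $\tuple{p_j,p_{j+1}}$ in the squared automaton, so Lemma~\ref{lemma:functional} tells us $\nu(p_j,p_{j+1})$ is an mge for it. Corollary~\ref{CorollaryCombinationPairwisecmges2} then yields that $\phi_S=\tuple{\phi_S(p_1),\ldots,\phi_S(p_k)}$ is an mge for $\tuple{n_1,\ldots,n_k}$, and symmetrically $\phi_{S'}$ restricted to $\{p_1',\ldots,p_k',q'\}$ equalizes $\tuple{n_1 m_1,\ldots,n_k m_k,n_i\lambda}$ via $\tuple{n_1',\ldots,n_k',n_\tau'}$. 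Hence $\tuple{m_1\phi_{S'}(p_1'),\ldots,m_k\phi_{S'}(p_k'),\lambda\phi_{S'}(q')}$ is an equalizer for $\tuple{n_1,\ldots,n_k,n_i}$, and the mge property of $\phi_S$ produces a single $c\in M$ with $\phi_S(p_j)\,c=m_j\phi_{S'}(p_j')$ for every $j$ and $\phi_S(p_i)\,c=\lambda\,\phi_{S'}(q')$.

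The main obstacle is conceptual rather than computational: one must verify that the squared automaton of Proposition~\ref{prop:epsilon-sq_aut} really does collect as \emph{relevant pairs} all the output pairs arising from pairs of generalized transitions in $\Delta^*$ (not only the single-step transitions), so that Lemma~\ref{lemma:functional} applies uniformly. Once that is in hand, the algebraic manipulation of equalizers is identical to the real-time case, and uniqueness of $c$ follows from left cancellation in the mge monoid exactly as in Proposition~\ref{prop:well-defined}. A subtle secondary point is that the $\varepsilon$-power-set determinization can collapse several transducer states whose outputs on some $\varepsilon$-prefix already differ; however, any such discrepancy is absorbed into the relevant pairs computed on the squared automaton, so the existence of the joint equalizer $\tuple{n_1',\ldots,n_k',n_\tau'}$ is preserved.
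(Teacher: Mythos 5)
Your proposal is correct and follows exactly the route the paper intends: the paper's own proof of this proposition is literally ``See the proof of Proposition~\ref{prop:well-defined},'' and you carry out that adaptation explicitly, correctly identifying that the only point needing care is that the $\varepsilon$-squared automaton of Proposition~\ref{prop:epsilon-sq_aut} registers as relevant pairs precisely the output pairs of \emph{generalized} transitions over a common input word, so that Lemma~\ref{lemma:functional} and the equalizer algebra transfer unchanged. (Minor slip: the generalized transition to a final state for $\tau$ should start from $q'$, not $p_i'$; your subsequent equations use the correct reading.)
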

\begin{proof}
See the proof of Proposition~\ref{prop:well-defined}. \qed
\end{proof}

However, Proposition~\ref{prop:well-pre-defined2} is not enough to establish the correctness of Theorem~\ref{th:contribution}.
The problem is that in Step~4, and thus in Step~4-$\varepsilon$, we select an ordinary transition $t=\tuple{p,\tuple{a,m},p'}\in \Delta$ and not a generalized $\tuple{p,\tuple{a,m},p'}\in \Delta^*$ as in Proposition~\ref{prop:well-pre-defined2}.
The following lemma shows that we can select an ordinary transition in Step~4-$\varepsilon$. 

\begin{lemma}\label{lemma:general_to_ordinary}
Let $L$ be a state in ${\cal A}_L$, $R'$ be a state in ${\cal A}_R$, and $a\in \Sigma$.
If $R=\delta_R(R',a)$ and $L'=\delta_L(L,a)$ are such that, $S=R\cap L\neq \emptyset$
and $S'=L'\cap R'$, then there exists a transition $t=\tuple{p,\tuple{a,m},p'}\in \Delta$ such that $p \in S$ and $p'\in S'$.
\end{lemma}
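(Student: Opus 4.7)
The plan is to fix an arbitrary witness $p \in S = L \cap R$ and to locate the unique honest $a$-step inside a generalized transition that certifies $p \in R$. Concretely, since $p \in R = \delta_R(R', a)$, the definition yields some $q \in R'$ and $m \in M$ with $\tuple{p, \tuple{a, m}, q} \in \Delta^\ast$, and this generalized transition unfolds into a proper path
\begin{equation*}
p = p_0 \rightarrow^{(\varepsilon,\cdot)} \cdots \rightarrow^{(\varepsilon,\cdot)} p_j \rightarrow^{(a,m_0)} p_{j+1} \rightarrow^{(\varepsilon,\cdot)} \cdots \rightarrow^{(\varepsilon,\cdot)} p_\ell = q
\end{equation*}
carrying exactly one $a$-labelled step. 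The candidate transition is $t := \tuple{p_j, \tuple{a, m_0}, p_{j+1}} \in \Delta$; it remains to verify $p_j \in S$ and $p_{j+1} \in S'$, i.e.\ the four memberships $p_j \in L$, $p_j \in R$, $p_{j+1} \in L'$, and $p_{j+1} \in R'$.

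Two of these memberships fall out of the path itself. The prefix $p \rightarrow^{(\varepsilon,\cdot)\ast} p_j \rightarrow^{(a,m_0)} p_{j+1}$ is a generalized transition $\tuple{p, \tuple{a, \cdot}, p_{j+1}} \in \Delta^\ast$ with $p \in L$, so $p_{j+1} \in \delta_L(L,a) = L'$. Symmetrically, the suffix $p_j \rightarrow^{(a,m_0)} p_{j+1} \rightarrow^{(\varepsilon,\cdot)\ast} q$ together with $q \in R'$ gives $p_j \in \delta_R(R', a) = R$.

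The remaining two memberships rely on the $\varepsilon$-closure property built into the $\varepsilon$-power-set determinization. For $p_j \in L$: any non-initial state $L = \delta_L(L'', b)$ is closed under forward $\varepsilon$-transitions in ${\cal T}$, because its defining formula uses $\Delta^\ast$ and hence already absorbs trailing $\varepsilon$-steps; so from $p \in L$ and $p \rightarrow^{(\varepsilon,\cdot)\ast} p_j$ one obtains $p_j \in L$. The boundary case $L = s_L$ is handled by interpreting $s_L$ as the $\varepsilon$-closure of $I$, which is the standard convention in $\varepsilon$-power-set determinization. An entirely symmetric argument in the reversed underlying automaton, where $R'$ is closed under forward reverse-$\varepsilon$ (equivalently, under taking $\varepsilon$-predecessors in ${\cal T}$), yields $p_{j+1} \in R'$ from $p_{j+1} \rightarrow^{(\varepsilon,\cdot)\ast} q \in R'$.

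Combining the four memberships gives $p_j \in L \cap R = S$ and $p_{j+1} \in L' \cap R' = S'$, so the ordinary transition $t$ is as required. The main subtlety is purely bookkeeping: one must ensure that the $\varepsilon$-closure is correctly attached to the initial sets $s_L$ and $s_R$, because the defining formula for $\delta_L$ (resp.\ $\delta_R$) absorbs $\varepsilon$-transitions automatically only after the first $a$-step has been taken. Once this convention is fixed, the decomposition argument above works uniformly for every $L \in Q_L$ and every $R' \in Q_R$.
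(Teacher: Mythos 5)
Your proof is correct and follows essentially the same route as the paper's: pick a witness in $S$, use $p\in\delta_R(R',a)$ to obtain a generalized $a$-transition into $R'$, extract the unique ordinary $a$-labelled step on that path, and verify the four memberships via the closure of the power-set states under $\varepsilon$-steps. You are in fact slightly more careful than the paper at one point: the paper simply asserts that $q\in L$ together with an $\varepsilon$-path to $p$ gives $p\in L$ (and dually for $R'$), which for $L=s_L$ and $R'=s_R$ silently requires the initial states of the $\varepsilon$-power-set construction to be the $\varepsilon$-closures of $I$ and $F$ --- exactly the convention you make explicit.
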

\begin{proof}
Let $q\in S=R\cap L$. Then there is a state $q'\in R'$ with $\tuple{q,\tuple{a,m'},q'}\in \Delta^*$.
Since $q\in L$ it follows that $q'\in L'$. Hence $q'\in L'\cap R'=S'$. Fix a path $\pi$ form $q$ to $q'$
with label $\tuple{a,m'}$. Let $t=\tuple{p,\tuple{a,m},p'}$ be the unique transition on $\pi$ with input
label $a\in \Sigma$. It follows that $\tuple{q,\tuple{\varepsilon,m_1},p}\in \Delta^*$ and $\tuple{p',\tuple{\varepsilon,m_2},q'}\in \Delta^*$. Since $q\in L$, we get that $p\in L$. Since $q'\in R'$ we obtain that $p'\in R'$. Now the transition
$t$ with $p'\in R'$ shows that $p\in R=\delta_R'(R',a)$, and thus $p\in L\cap R=S$. Similarly, since $p\in L$
the transition $t$ witnesses that $p'\in L'=\delta_L'(L,a)$ and therefore $p'\in L'\cap R'$. \qed
\end{proof}

\begin{theorem}\label{th:contribution_general_constructive} 
$O_{{\cal B}'}=O_{\cal T}$.
\end{theorem}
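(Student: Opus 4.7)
The plan is to mirror the proof of Theorem~\ref{th:groups}, exploiting two facts that have been established precisely for this purpose: Proposition~\ref{prop:well-pre-defined2} already asserts the key compatibility $\phi_S(q) c = \lambda \phi_{S'}(q')$ for every \emph{generalized} transition $\tuple{q,\tuple{a,\lambda},q'}\in \Delta^*$ (not merely ordinary ones), and Lemma~\ref{lemma:general_to_ordinary} guarantees that the ordinary transition selected in Step~4-$\varepsilon$ actually exists, so $\psi(L,a,R')$ is well defined by the same equation $c = \phi_S(p)^{-1} m \phi_{S'}(p')$ as in the real-time case.

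For the inclusion $O_{\cal T} \subseteq O_{{\cal B}'}$, I would fix $u=a_1\cdots a_n\in \Sigma^+$ with $\tuple{u,m}\in L({\cal T})$ and choose a successful path in ${\cal T}$. The first step is to \emph{slice} this path into $n$ segments, each consuming exactly one input letter (and absorbing all surrounding $\varepsilon$-transitions), producing states $p_0\in I$, $p_n\in F$ and generalized transitions $\tuple{p_{i-1},\tuple{a_i,m_i},p_i}\in \Delta^*$ with $m_1\circ\cdots\circ m_n = m$. Setting $L_i := \delta_L^*(s_L, a_1\cdots a_i)$ and $R_i := \delta_R^*(s_R, a_n\cdots a_{i+1})$, the $\varepsilon$-power-set construction of Step~2-$\varepsilon$ immediately yields $p_i\in L_i\cap R_i =: S_i$ for every $i$, because the generalized transitions $\tuple{p_{i-1},\tuple{a_i,m_i},p_i}\in\Delta^*$ are precisely the witnesses tracked by the $\varepsilon$-closed transition relation in both directions.

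Proposition~\ref{prop:well-pre-defined2} then gives a well-defined $c_i := \psi(L_{i-1},a_i,R_i)$ satisfying $\phi_{S_{i-1}}(p_{i-1})\circ c_i = m_i \circ \phi_{S_i}(p_i)$ for $i=1,\ldots,n$. Chaining these $n$ identities yields
\begin{equation*}
\phi_{S_0}(p_0)\circ c_1\circ\cdots\circ c_n \;=\; m_1\circ\cdots\circ m_n\circ \phi_{S_n}(p_n).
\end{equation*}
Since $S_0\subseteq L_0 = I$ and $S_n\subseteq R_n = F$, Lemma~\ref{lemma:functional} forces $\nu(q',q'')=\tuple{e,e}$ on these sets, and together with $\eta(e,e)=\tuple{e,e}$ Remark~\ref{pure_unity_case} collapses $\phi_{S_0}(p_0)=\phi_{S_n}(p_n)=e$. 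Thus $O_{{\cal B}'}(u)=c_1\circ\cdots\circ c_n = m$.

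The remaining direction is handled as in Theorem~\ref{th:groups}: the equality $O_{{\cal B}'}(\varepsilon)=e=O_{\cal T}(\varepsilon)$ is covered by the standing assumption $\tuple{\varepsilon,e}\in L({\cal T})$, and for $u\in \Sigma^+$ in the domain of ${\cal B}'$ the nonemptiness of $\delta_L^*(s_L,u)\cap s_R$ witnesses a (possibly $\varepsilon$-extended) successful path on $u$ in ${\cal T}$, so $u\in \Dom({\cal T})$. The main obstacle I anticipate is only notational: verifying that the slicing of the ${\cal T}$-path correctly tracks membership $p_i\in L_i$ and $p_i\in R_i$ through the $\varepsilon$-power-set determinization. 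Because $\Delta^*$ already bundles any $\varepsilon$-transitions occurring on either side of an $a_i$-transition, this is bookkeeping rather than a conceptual difficulty, and no new monoid-theoretic ingredient beyond Proposition~\ref{prop:well-pre-defined2} and Lemma~\ref{lemma:general_to_ordinary} is needed.
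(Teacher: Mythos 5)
Your proposal is correct and follows exactly the route the paper takes: the paper's own proof simply says to repeat the argument of Theorem~\ref{th:groups} with Proposition~\ref{prop:well-pre-defined2} and Lemma~\ref{lemma:general_to_ordinary} in place of Proposition~\ref{prop:well-defined}, and your slicing of the successful path into generalized one-letter transitions is precisely the omitted bookkeeping. The only cosmetic point is that $\phi_{S_0}(p_0)=e$ follows from the valuation's convention $\rho=\tuple{e,e}$ on $I^2$ (Definition~\ref{DefValuation}) rather than from Lemma~\ref{lemma:functional}, which covers the final-state case.
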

\begin{proof}
If $u=\varepsilon$, then by convention we have that $O_{{\cal B}'}(\varepsilon)=e=O_{\cal T}(\varepsilon)$.
For $u\in \Sigma^+$, we can repeat the proof of Theorem~\ref{th:groups} where we use Proposition~\ref{prop:well-pre-defined2}
and Lemma~\ref{lemma:general_to_ordinary} instead of Proposition~\ref{prop:well-defined}. We omit the details.\qed
\end{proof}

\begin{proof}[of Theorem~\ref{th:contribution}]
By construction, ${\cal A}_L$ and ${\cal A}_R$ have at most $2^{|Q|}$ states. Now the result follows by Theorem~\ref{th:contribution_general_constructive}.\qed
\end{proof}

\section{Comparison to previous bimachine constructions}\label{SectionComparison}

In this section we show that our construction outperforms previous methods for building bimachines from functional transducers.
After a brief discussion of the standard bimachine construction in \cite{RS97} we introduce a class of real-time functional transducers with $n+2$ states for which the standard bimachine construction generates a bimachine with at least $\Theta(n!)$ states. 
The present construction based on the equalizer accumulation principle leads to 
$2^n + n +3$ states. 

\subsection{Classical Construction}\label{SubSecClCo}
A comprehensive outline of the standard bimachine construction can be found elsewhere,~\cite{RS97,Eil74,Sakarovitch09}.
Here we follow~\cite{CIAA2017} and provide only the basic ideas, stressing the main properties of the construction that are relevant for our discussion.

The classical construction of bimachines~\cite{Eil74} refers to the special case where ${\cal M}=\tuple{\Omega^*,\circ,\varepsilon}$ is the free monoid generated by an alphabet $\Omega$. As described in~\cite{RS97}, but see also the proofs in~\cite{Eil74,Berstel79,Sakarovitch09}, it departs from a pseudo-deterministic transducer, i.e. a transducer ${\cal T}=\tuple{\Sigma\times\Omega^*,Q,I,F,\Delta}$ that can be considered as a deterministic finite-state automaton over the new alphabet $\Sigma\times \Omega^*$. This means that $I$ contains a single state $i$ and $\Delta$ is a finite graph of a function $Q\times (\Sigma\times\Omega^*)\rightarrow Q$. 

The next step is the core of the construction. The goal is to construct an unambiguous transducer ${\cal T}'$ with transition relation $\Delta'$ equivalent to ${\cal T}$. This is achieved by specializing the standard determinization construction for finite-state automata: the sets generated by the determinization procedure are split into two parts, a single {\em guessed positive state} -- this is our positive hypothesis for the successful path to be followed, and a set of {\em negative states} -- these are the alternative hypotheses that must all fail in order for our positive hypothesis to be confirmed. Formally, the states in the resulting transducer are pairs $\tuple{p,N}\in Q\times 2^Q$. The initial state is $i'=\tuple{i,\emptyset}$. The algorithm inductively defines transitions in $\Delta'$ and states in $Q'$. Let $\prec_{lex}$ denote the lexicographic order on $\Omega^*$. 
Let $\tuple{p,N}$ be a generated state and $\tuple{p,\tuple{a,v},p'}\in \Delta$. 
Let 
$$N' := \{q'\,|\, \exists q\in N, \exists v':\tuple{q,\tuple{a,v'},q'}\in \Delta\} \cup \{q \,|\, \exists v'\prec_{lex} v:\tuple{p,\tuple{a,v'},q}\in \Delta\}.$$
If $p' \not\in N'$ we introduce a state $\tuple{p',N'}$ and a transition  
$$\tuple{\tuple{p,N},\tuple{a,v},\tuple{p',N'}}\in \Delta'.$$
%we obtain a transition 
%\begin{eqnarray*}
%&&\tuple{\tuple{p,N},\tuple{a,v},\tuple{p',N'}}\in \Delta', \text{ where }\\
%N'&=&\{p'\,|\, \exists p\in N \exists v'(\tuple{p,\tuple{a,v'},p'}\in \Delta\} \cup \{q \,|\, \exists v'\prec_{lex} v(\tuple{p,\tuple{a,v'},q}\in \Delta\},\\
%&&\text{in case that } p' \not\in N'. 
%\end{eqnarray*}
%The pair $\tuple{p',N'}$ is added to $Q'$. 
Intuitively, this transition makes a guess about the lexicographically smallest continuation of the output that can be followed to a final state $f\in F$. Accordingly, all transitions that have the same input character, $a$, but lexicographically smaller output, are implicitly assumed to fail. 
To reflect this, we add those states to the set of negative hypotheses, $N'$. 
To maintain the previously accumulated %Clearly, we should not forget the 
 negative hypotheses along the path to $\tuple{p,N}$ the $a$-successors of $N$ are added to $N'$. Following these lines, the set of final states of ${\cal T}'$ is defined as:
 \begin{equation*}
 F'=\{\tuple{f,N}\,|\, f\in F \text{ and } N\cap F=\emptyset\}.
\end{equation*}
 
Note that $\tuple{f,N}$ becomes final only if $f\in F$ and there is no final state $n\in N$ 
 reached with smaller output on a parallel path.  
 It can be formally shown,~\cite{RS97}, that this construction indeed leads to an unambiguous transducer:
 \begin{equation*}
 {\cal T}'=\tuple{\Sigma\times \Omega^*,Q',\{i'\},F',\Delta'}
 \end{equation*}
equivalent to ${\cal T}$.

The final step is to convert the (trimmed part of) ${\cal T}'$ in an equivalent bimachine. This can be easily done by a determinization of ${\cal A}_L={\cal A}_{{\cal T}',D}$ and ${\cal A}_R={{\cal A}_{\cal{T}'}^{rev}}_{D}$ and defining an appropriate output function $\psi:Q_L\times \Sigma\times Q_R\rightarrow \Omega^*$. 

\subsection{A special class of transducers}

The following example introduces a class of transducers that demonstrates the advantages of the new bimachine 
construction. 

\begin{example}\label{ExampleTransducers}
We consider the class of transducers, see also Figure~\ref{fig:transducer}:
$${\cal T}_n := \tuple{ \Sigma^*_n\times\{1\}^*,Q,\{s\},\{f\},\Delta_n}$$
where $\Sigma_n  =  \{a_1,a_2,\dots,a_n\}$, $Q :=  \{s,q_1,q_2,\dots,q_n,f\}$ and 
the transition relation $\Delta_n :=  \Delta_{s,n}\cup \Delta_{Q_n} \cup \Delta_{f,n}$ has three types of transitions: 
$$\Delta_{s,n} :=  \{\tuple{s,\tuple{a_j,1^{i-1}},q_i} \,|\, i\le n, j \le n\}$$ 
are transitions from the start state 
$s$ to the states in the ``intermediate layer'' $Q_n:= \{q_1,q_2,\dots,q_n\}$. 
$\Delta_{Q_n}$ is the graph of the function $\delta_{Q_n}:Q_n\times \Sigma_n\rightarrow 1^* \times Q_n$ defined as 
$$\delta_{Q_n}(q_i,a_j) = \begin{cases}
\tuple{1^n,q_i} \text{ if } i\not \in \{1,j\}\\
\tuple{1^{n+j-1},q_j} \text{ if } i=1 \\
\tuple{1^{n-j+1},q_1} \text{ if } i=j(\neq 1).
\end{cases}$$
and describes transitions between states in the intermediate layer $Q_n$. 
Finally $\Delta_{f,n}$ is the graph of the function 
$$
\delta_{f,n}(q_i,a_j)=\begin{cases}
\tuple{1^{2n-i+1},f} \text{ if } j\le i\\
\text{undefined, else}
\end{cases}
$$
and describes transitions from the intermediate layer to the final state $f$. 
For a given input string each transducer ${\cal T}_n$ outputs a sequence of letters $1$. 
See Figure~\ref{fig:transducer} for an illustration. 
\end{example}
\begin{figure}
\includegraphics[width=0.9\textwidth]{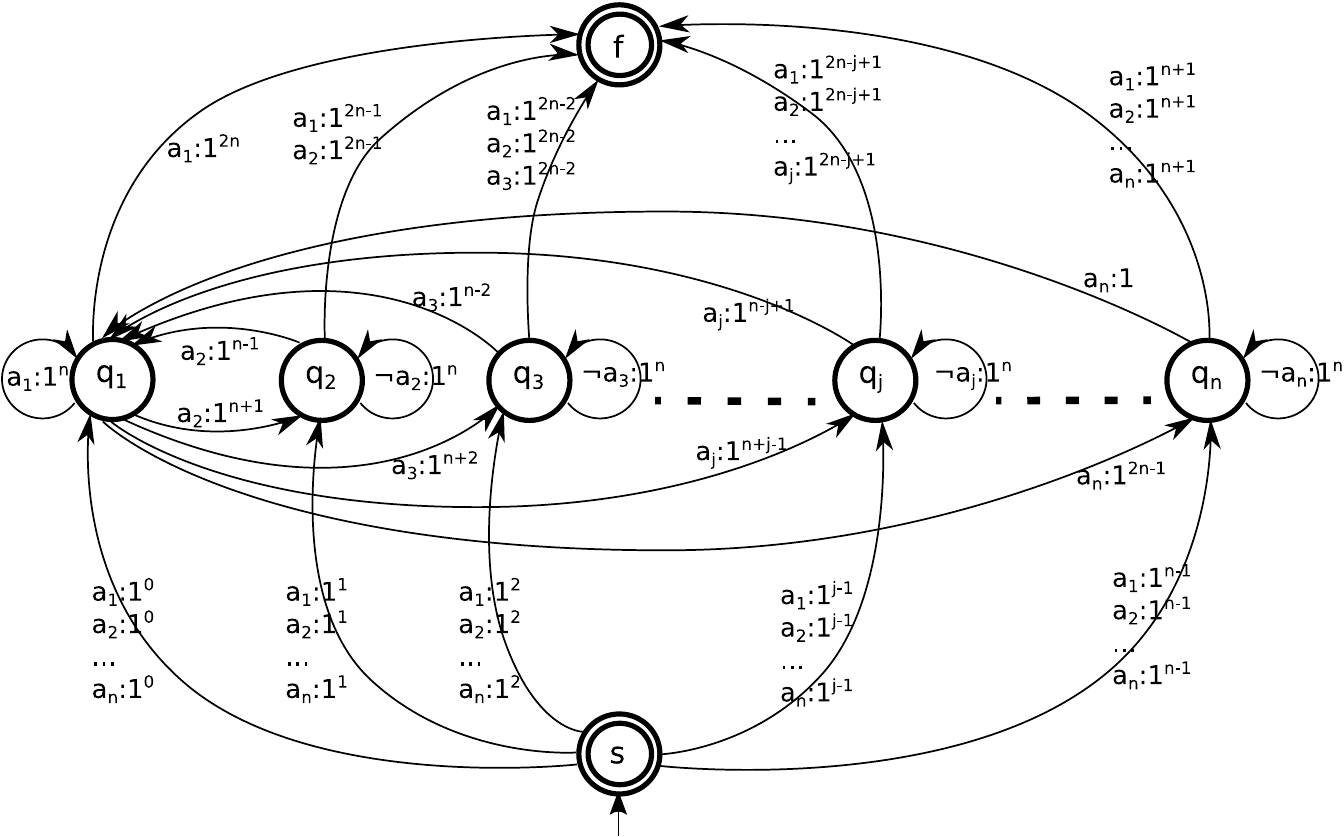}
\caption{The transducer ${\cal T}_n$. By $\neg a_j:1^n$ we denote the set of labels $a_i:1^n$ with $i \not= j$.}\label{fig:transducer}
\end{figure}
Before we analyze the set of bimachine states obtained from the standard and the new construction method we first summarize some 
simple properties.
\begin{lemma}\label{lemma:simple_properties}
${\cal T}_n$ is an ambiguous quasi-deterministic functional real-time transducer.
\end{lemma}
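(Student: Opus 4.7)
The lemma packs four assertions. Real-time is immediate from inspecting $\Delta_n$: no transition carries $\varepsilon$ on its input coordinate. Quasi-determinism requires checking that for every $(q,a,m) \in Q \times \Sigma_n \times \{1\}^\ast$ at most one $q'$ satisfies $\tuple{q,\tuple{a,m},q'} \in \Delta_n$. From $s$, the output length $i-1$ pins down the target $q_i$; from $q_i$, both $\delta_{Q_n}$ and $\delta_{f,n}$ are individually functional, and a brief case check on their possible output lengths ($n$, $n+j-1$, or $n-j+1$ versus $2n-i+1$) shows they never collide on a common $(q_i,a_j)$. For ambiguity (assuming $n \ge 2$), the witness I would take is $w = a_1 a_1$: for every $i \in \{1,\dots,n\}$ the path $s \xrightarrow{a_1/1^{i-1}} q_i \xrightarrow{a_1/1^{2n-i+1}} f$ is successful (the final transition is defined since $1 \le i$) and all $n$ such paths carry the same label $\tuple{a_1a_1,\, 1^{2n}}$.

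The substantive claim is functionality. The first step is the structural observation that once the initial transition $s \to q_{i_1}$ is chosen, every subsequent transition is forced, because both $\delta_{Q_n}$ and $\delta_{f,n}$ are functions. Consequently, for any input $w$ there are at most $n$ successful paths, each indexed by its initial target $i_1 \in \{1,\dots,n\}$, and functionality boils down to showing that they all produce the same output.

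My main device is a potential function. Define $\phi : Q_n \to \bbbz$ by $\phi(q_i) := 1 - i$. A three-case verification against the definition of $\delta_{Q_n}$ establishes the invariant that every internal transition $q_i \xrightarrow{a_j / 1^\ell} q_{i'}$ satisfies $\ell = n + \phi(q_i) - \phi(q_{i'})$. Telescoping along the internal segment of a successful path
\[
s \xrightarrow{a_{j_1}/1^{i_1-1}} q_{i_1} \to \cdots \to q_{i_{k-1}} \xrightarrow{a_{j_k}/1^{2n-i_{k-1}+1}} f
\]
of input length $k$ yields total internal output length $(k-2)n + \phi(q_{i_1}) - \phi(q_{i_{k-1}})$; adding the first and last transition contributions collapses to $kn$, independent of $i_1$. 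Since outputs lie in the free monoid $\{1\}^\ast$, equal length forces equal output, which proves functionality. The only step demanding real computation is the three-way verification of the potential identity against $\delta_{Q_n}$; the rest is bookkeeping.
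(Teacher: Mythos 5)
Your proof is correct and follows essentially the same route as the paper: your potential function $\phi(q_i)=1-i$ and its telescoping identity are exactly the paper's inductive invariant that any path $\tuple{s,\tuple{\alpha,1^k},q_i}\in\Delta^*$ has $k=n(|\alpha|-1)+i-1$, and the quasi-determinism and ambiguity arguments coincide with the paper's as well. (One inessential looseness: at an intermediate state $q_i$ on input $a_j$ with $j\le i$ the transition is not forced --- both $\delta_{Q_n}$ and $\delta_{f,n}$ apply --- and the path is only determined by additionally requiring that it be successful, $f$ being a sink; but your functionality argument never actually relies on this count of paths.)
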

\begin{proof}
1. ${\cal T}_n$ is quasi-deterministic: Clearly, $\Delta_{s,n}$ is a graph of a function from $s\times\Sigma_n^*\times 1^*$. 
Furthermore the elements of $\Delta_{s,n}$ have first coordinate $s$, whereas all the entries in $\Delta_{f,n}\cup \Delta_{Q_n}$
have first coordinate in $Q_n$. Thus, it remains to show $\Delta_{f,n}\cup \Delta_{Q_n}$ is a graph of a function $\delta:Q_n \times \Sigma_n \times 1^* \rightarrow Q_n\cup \{s,f\}$. 
However, $\Delta_{f,n}$ and $\Delta_{Q_n}$ are graphs of functions $\delta_{f,n}$ and $\delta_{Q_n}$, respectively. Furthermore if $\tuple{q_i,a_j}$ is in the domain of both functions, then
\begin{equation*}
\delta_{f,n}(q_i,a_j)=\tuple{1^{2n-i+1},f} \text{ and } \delta_{Q_n}(q_i,a_j)\in \{\tuple{1^{n-i+1},q_1},\tuple{1^{n},q_i}\}.
\end{equation*}
Since $2n-i+1\ge n+1>n\ge n-i+1$ we conclude that the first coordinates of the results of these two functions are distinct.
Therefore, ${\cal T}_n$ is quasi-deterministic. 

2. For every path $\tuple{s,\tuple{\alpha,1^k},q_i}\in \Delta^*$ we have
\begin{equation*}
\alpha \in \Sigma_n^+ \text{ and } k=n (|\alpha|-1) +i-1.
\end{equation*}
This follows by a straightforward induction on the length of the path. We easily conclude that 
${\cal T}_n$ is a real-time quasi-deterministic functional transducer with language
\begin{equation*}
L({\cal T}_n)=\{\tuple{\alpha,1^{n|\alpha|}}\,|\, \alpha \in \Sigma_n^*\setminus \Sigma_n\}.
\end{equation*}
It is also clear that for each word in $\Sigma_n^+ \{a_1\}$ there are $n$ distinct successful paths in ${\cal T}_n$, thus ${\cal T}_n$ is ambiguous. 
\qed
\end{proof}
The next property is crucial for our analysis.
\begin{lemma}\label{permutation}
 For each permutation $\pi:Q_n\rightarrow Q_n$ there
 is a word $\alpha_{\pi}\in A^*$ such that for all $q,q'\in Q_n$:
 \begin{equation*}
 \tuple{q,\tuple{\alpha_{\pi},.},q'}\in \Delta^* \iff q'=\pi(q).
 \end{equation*}
\end{lemma}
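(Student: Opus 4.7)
The key observation is that the restriction of the transition relation to $Q_n$ (ignoring the output component and the transitions to $f$) encodes an action of $\Sigma_n$ on the indices of $Q_n$ by transpositions. Specifically, inspecting the definition of $\delta_{Q_n}$, reading $a_1$ from any $q_i \in Q_n$ leads back to $q_i$ (so $a_1$ acts as the identity), while for $j \ge 2$, reading $a_j$ sends $q_1 \mapsto q_j$, $q_j \mapsto q_1$, and fixes every $q_i$ with $i \notin \{1,j\}$. In other words, letter $a_j$ induces the transposition $\tau_j := (1\ j) \in S_n$ on the index set $\{1,\dots,n\}$.

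The first step is to note that the transpositions $\tau_2,\dots,\tau_n$ are the standard ``star'' generators of the symmetric group $S_n$: every permutation of $\{1,\dots,n\}$ can be written as a product of such transpositions. Hence, given an arbitrary permutation $\pi \in S_n$ (which we identify with a permutation of $Q_n$ via indices), we can pick $j_1,\dots,j_k \in \{1,\dots,n\}$ with $\tau_{j_k} \circ \tau_{j_{k-1}} \circ \cdots \circ \tau_{j_1} = \pi$. Define $\alpha_\pi := a_{j_1}a_{j_2}\cdots a_{j_k} \in \Sigma_n^*$. By the above action, a straightforward induction on $k$ shows that for every $i \le n$, following the unique path inside $\Delta_{Q_n}$ labelled $\alpha_\pi$ from $q_i$ ends in $q_{\pi(i)}$. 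This proves the $(\Leftarrow)$ direction: there exists some path $\langle q,\langle \alpha_\pi, m\rangle, \pi(q)\rangle \in \Delta^*$ for an appropriate $m \in 1^*$.

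For the $(\Rightarrow)$ direction we must check that $q' = \pi(q)$ is the \emph{only} state in $Q_n$ reachable from $q$ on input $\alpha_\pi$. The only non-determinism in ${\cal T}_n$ at a state $q_i \in Q_n$ on input $a_j$ arises when $j \le i$, where both a $\Delta_{Q_n}$-transition (staying in $Q_n$) and a $\Delta_{f,n}$-transition (to $f$) exist. However, $f$ has no outgoing transitions by construction (the only transitions of $\Delta_n$ involving $f$ are those in $\Delta_{f,n}$, and they enter $f$). Therefore, any generalised transition $\langle q,\langle \alpha_\pi, \cdot\rangle, q'\rangle \in \Delta^*$ with $q' \in Q_n$ cannot pass through $f$, and so must use exclusively transitions from $\Delta_{Q_n}$. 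Since $\Delta_{Q_n}$ is the graph of the function $\delta_{Q_n}$, this deterministic path reaches a unique endpoint, which by the previous paragraph is $q_{\pi(i)}$. Hence $q' = \pi(q)$.

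The only subtlety in the argument is the second direction, where one must justify that the nondeterministic branches into $f$ cannot yield a different landing state in $Q_n$; this is handled by the fact that $f$ is a sink. Everything else is the elementary observation that the letters $a_j$ act as the star-transpositions of $S_n$, together with their generation of the full symmetric group.\qed
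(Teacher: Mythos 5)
Your proof is correct and follows essentially the same route as the paper's: identify the action of each letter $a_j$ on $Q_n$ with the star transposition $(1\ j)$, and use the fact that these transpositions generate the full symmetric group. Your additional care in the $(\Rightarrow)$ direction (ruling out the branches into the sink $f$ so that the endpoint in $Q_n$ is unique) is a detail the paper's proof leaves implicit, and it is handled correctly.
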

\begin{proof}
For each $j$, the function $\delta_{Q_n}(q_i,a_j)$ induces a permutation on $Q_n$.
 Specifically, it maps each state $q\not \in \{q_1,q_j\}$ to itself, and it transposes $q_1$ and $q_j$.
 Thus, we can identify the action of $a_j$ on $Q_n$ with the transposition $\sigma_{1,j}:Q_n\rightarrow Q_n$:
\begin{equation*}
\sigma_{1,j}(q_i)=\begin{cases}
q_i \text{ if } i\not \in \{1,j\}\\
q_j \text{ if } i=1 \\
q_1 \text{ if } i=j.
\end{cases}
\end{equation*}
The transpositions $\sigma_{1,j}$ generate the entire permutation group on $Q_n$. Therefore for each
permutation $\pi:Q_n\rightarrow Q_n$, there is a word $\alpha_{\pi}\in \Sigma_n^*$ such that for all $q,q'\in Q_n$:
\begin{equation*}
\tuple{q,\tuple{\alpha_{\pi},.},q'} \in \Delta^*\iff q'=\pi(q').
\end{equation*}\qed
\end{proof}
We first look at the number of states of the bimachine obtained from the new construction based on the equalizer accumulation principle.
\begin{lemma}\label{lemma:new_constr_simple}
For each $n$:
\begin{enumerate}
\item the determinization of the underlying automaton of ${\cal T}_n$ generates $3$ states.
\item the determinization of the reversed underlying automaton of ${\cal T}_n$ generates $2^n+n$ states.
\end{enumerate}
\end{lemma}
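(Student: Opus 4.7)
\medskip
\noindent\textbf{Proof plan for Lemma~\ref{lemma:new_constr_simple}.}
Both parts are established by a direct breadth-first traversal of the power-set construction, using the structure of $\Delta_{s,n}$, $\Delta_{Q_n}$ and $\Delta_{f,n}$ together with Lemma~\ref{permutation}.

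For part (1) I start the determinization at the singleton $\{s\}$. Because $\Delta_{s,n}$ contains $\langle s,\langle a_j,1^{i-1}\rangle,q_i\rangle$ for every pair $(i,j)$, the successor on any letter $a_j$ is exactly $Q_n$. From $Q_n$ and any letter $a_k$, the intermediate-layer transitions act on $Q_n$ by the permutation $\sigma_{1,k}$ (hence map $Q_n$ onto itself), while $\delta_{f,n}(q_n,a_k)=f$ because $k\le n$; therefore $\delta_L(Q_n,a_k)=Q_n\cup\{f\}$. Finally, since $f$ has no outgoing transition, $\delta_L(Q_n\cup\{f\},a_k)=Q_n\cup\{f\}$. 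So the reachable states are exactly $\{s\}$, $Q_n$, $Q_n\cup\{f\}$.

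For part (2) I work in the reversed underlying automaton of ${\cal T}_n$, starting from $F=\{f\}$. The predecessors of $f$ on $a_j$ are precisely $\{q_j,\ldots,q_n\}$, so the second layer consists of the $n$ pairwise distinct sets $R_j:=\{q_j,\ldots,q_n\}$ ($j=1,\ldots,n$). From any $R\subseteq Q\setminus\{f\}$ and any letter $a_k$, the predecessors of the states in $R$ split into three contributions: $s$ has no predecessors, each $q_i\in R$ contributes $s$ (from $\Delta_{s,n}$) and $q_{\sigma_{1,k}(i)}$ (from $\Delta_{Q_n}$, using that $\sigma_{1,k}$ is an involution). Hence $\delta_R(R,a_k)=\{s\}\cup \sigma_{1,k}(R\cap Q_n)$. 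Two consequences follow: once $s$ has entered the state, it is never removed; and the $Q_n$-part of the state is merely permuted by $\sigma_{1,k}$, so its cardinality is invariant from the third layer on.

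It remains to count. For a starting second-layer set $R_j$ of size $n-j+1$, the cardinality of the $Q_n$-part on subsequent levels equals $n-j+1\in\{1,2,\ldots,n\}$, so every possible value is attained. Within a fixed cardinality $m$, Lemma~\ref{permutation} says the group generated by the $\sigma_{1,k}$'s is the full symmetric group on $Q_n$, which acts transitively on the $m$-element subsets of $Q_n$; hence every set of the form $\{s\}\cup T$ with $T\subseteq Q_n$ and $|T|=m$ is reachable. Summing over $m=1,\ldots,n$ gives $2^n-1$ such states, and these are pairwise distinct from $\{f\}$ (which contains $f$) and from the $R_j$ (which do not contain $s$). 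The total is $1+n+(2^n-1)=2^n+n$.

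The main technical point will be the reachability argument in the last paragraph: asserting that every non-empty subset of $Q_n$ appears as the $Q_n$-part of some reachable state, which is where Lemma~\ref{permutation} is indispensable. The rest of the argument is bookkeeping of the power-set traversal and checking that the three classes of states I list are pairwise disjoint.
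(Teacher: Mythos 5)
Your proof is correct and follows essentially the same route as the paper: a direct power-set traversal for part (1), and for part (2) the observation that backward steps always re-introduce $s$ and permute the $Q_n$-part, with Lemma~\ref{permutation} supplying transitivity on subsets of each fixed cardinality. Your version is somewhat more explicit (e.g.\ the formula $\delta_R(R,a_k)=\{s\}\cup\sigma_{1,k}(R\cap Q_n)$ and the check that the three classes of states are pairwise disjoint), but the decomposition and the counting $1+n+(2^n-1)$ are identical to the paper's.
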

\begin{proof}
1. The first part is immediate - the states are given by the three sets $\{s\}$, $Q_n$, $Q_n\cup \{f\}$.

2. As for the second, part, let $N_i=\{q_1,\dots,q_{i-1}\}$ for $1\leq i\leq n$. We note that the reversed determinization constructs:
\begin{itemize}
\item one initial state, $\{f\}$,
\item $n$ non-final states $Q_n\setminus N_i$.
\item each state $Q_n\setminus N_i$ gives rise to the sets $P\cup \{s\}$ where $P\subseteq Q_n$ where $|P|=n-i$.
\end{itemize}
The first two claims are obvious. For the third claim, first notice that for each state $q_i$ in the intermediate layer and each letter $a_j$ we always come back to the start state $s$ using a reverse transition. For the states of the intermediate layer $Q_n$, each letter $a_j$ induces a permutation (s.a.). Hence, ignoring $s$, backward transitions always lead from a set with $n-i$ states of $Q_n$ to another set with $n-i$ states of $Q_n$. Lemma~\ref{permutation} shows that each subset of $n-i$ states of $Q_n$ is reached from $Q_n\setminus N_i$ using a series of backward transitions, which proves the third claim. 
%Hence, the nonfinal states 
%$Q_n\setminus N_i$ lead to final states by the reversed determinization. It follows from 
%Lemma~\ref{permutation}
%
%
%an argument similar to the one used in Proposition~\ref{prop:hard1} shows
%that all these states will be generated. However each of the sets $P\cup \{s\}$ with every character $a_j$ generates $s$ along %with a set of $|P|$ elements of $Q_n$. From the above analysis 
We conclude that the number of states generated by the determinization of the reversed underlying automaton of ${\cal T}_n$ is
$$
1 + n + \sum_{i=1}^n \binom{n}{n-i}=1+n+(2^n-1)=2^n + n.
$$
\qed
\end{proof}

The next proposition shows that the number of states of the bimachine obtained when using the classical construction for 
converting the transducers ${\cal T}_n$ is much larger than for the new construction. 
%
%class of transducers ${\cal T}_n$ is hard for the classical construction:
\begin{proposition}\label{prop:hard1}
Let ${\cal T}'_n$ be the trimmed result of the specialized determinization of ${\cal T}_n$ described in Subsection~\ref{SubSecClCo}. Then
\begin{enumerate}
\item ${\cal T}'_n$ has at least $(2n+3)2^{n-2}$ states.
\item The determinization of the underlying automaton of ${\cal T}_n'$ generates at least $n!+2$ states.
\item The determinization of the reversed underlying automaton of ${\cal T}_n'$ generates at least $2^n+n$ states.
\end{enumerate}
\end{proposition}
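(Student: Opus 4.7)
The proof attacks the three claims in turn, each relying on a careful description of the enhanced determinization that produces ${\cal T}'_n$. As a preliminary, reading any single letter of $\Sigma_n$ from $\langle s,\emptyset\rangle$ yields the $n$ states $\langle q_i,\{q_1,\ldots,q_{i-1}\}\rangle$ for $i=1,\ldots,n$, since the $n$ transitions out of $s$ carry the pairwise distinct outputs $1^0,1^1,\ldots,1^{n-1}$. The propagation rule for subsequent letters is routine: at $\langle p,N\rangle$ each $\Delta$-transition from $p$ on input $a_j$ yields a $\Delta'$-transition whenever the target stays outside the new negative set, which collects the $a_j$-successors of $N$ together with the targets of lex-smaller-output $p$-transitions.

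For Part~1, the plan is to exhibit $(2n+3)2^{n-2}$ distinct reachable pairs $\langle p,N\rangle$. The $2^{n-2}$ factor arises from independently toggling the membership of each $q_k$ with $2\le k\le n-1$ in $N$: reading $a_k$ at a state $\langle q_i,N\rangle$ with $i\notin\{1,k\}$ preserves the positive component (the shortest-output transition is $\langle q_i,a_k,1^n,q_i\rangle$) but adjoins the $f$-successors of the members of $N$ (carrying the strictly larger output $1^{2n-k+1}$) into the new negative set; the transposition structure from Lemma~\ref{permutation} then lets one independently retain or erase each $q_k$ by re-routing through the intermediate layer. The $(2n+3)$ factor aggregates the admissible profiles of the positive state $p\in\{s,q_1,\ldots,q_n,f\}$ together with coarse constraints on the remaining entries of $N$ (membership of $q_1$, $q_n$, and $f$). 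A mechanical enumeration through these combinations, verifying reachability case by case with explicit short input words, yields the stated lower bound.

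For Part~2, I start from the subset $U=\{\langle q_i,\{q_1,\ldots,q_{i-1}\}\rangle : 1\le i\le n\}$ reached after one letter in the subset construction on the underlying NFA of ${\cal T}'_n$ and read $\alpha_\pi$ from Lemma~\ref{permutation} for each permutation $\pi$ of $Q_n$. The resulting subset $U_\pi$ pairs the positive components $\pi(q_1),\ldots,\pi(q_n)$ with negative sets $N_{i,\pi}$ satisfying $|N_{i,\pi}\cap Q_n|=i-1$, since permutations preserve size on $Q_n$. Distinct $\pi\neq\pi'$ thus yield distinct pairings: if $\pi(q_i)=\pi'(q_j)=q$ for some $q$ with $i\neq j$, then the pairs with first component $q$ in $U_\pi$ and $U_{\pi'}$ have different negative-set cardinalities on $Q_n$, so the subsets differ. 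This produces $n!$ distinct reachable subsets; the initial $\{\langle s,\emptyset\rangle\}$ and one additional subset produced after a terminal $a_1$-step (which introduces an $\langle f,\cdot\rangle$ component) supply the remaining $2$ states for the bound $n!+2$.

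For Part~3, the plan is to use the first-coordinate projection $\pi\colon\langle p,N\rangle\mapsto p$ and show that it commutes with the reverse subset-transition, i.e., the projection of the reverse $a$-successor of $S$ in ${\cal T}'_n$ equals the reverse $a$-successor of $\pi(S)$ in ${\cal T}_n$. The nontrivial inclusion reduces to showing that for every predecessor $p$ of a $\pi(S)$-element in ${\cal T}_n$ one can find a negative set $N$ so that the corresponding $\Delta'$-transition exists; this amounts to solving an equation of the form $(a\text{-succ}(N))\cup L_{p,a,v}=M$ with $\langle q,M\rangle\in S$, which admits a solution after possibly switching to a different $\Delta$-transition on $(p,a)$ (the intermediate-layer and $f$-branch of ${\cal T}_n$ between them cover every relevant shape of $M$). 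Commutativity implies that the projection of the reverse-reachable subset of ${\cal T}'_n$ after reading $w$ equals the reverse-reachable subset of ${\cal T}_n$ after reading $w$; hence each of the $2^n+n$ distinct reachable subsets of ${\cal T}_n$'s reverse determinization (Lemma~\ref{lemma:new_constr_simple}.2) is the projection of a distinct reachable subset of ${\cal T}'_n$'s reverse determinization. The main obstacle throughout is Part~1, where the $(2n+3)2^{n-2}$ bound demands careful bookkeeping of the propagation rules for $N$.
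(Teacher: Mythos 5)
Parts 2 and 3 of your proposal are essentially sound. Part 2 follows the paper's argument almost verbatim: read $a_1$, then $\alpha_\pi$, and distinguish permutations via the invariant $|N_{i,\pi}\cap Q_n|=i-1$; your description of the two extra states is slightly garbled (the second extra state is the subset reached after the \emph{single} letter $a_1$, which contains no $\tuple{f,\cdot}$ pair and is therefore distinct from every $L_\pi$, all of which do contain such pairs --- it is not obtained by appending a terminal $a_1$, since $a_1\alpha_\pi a_1$ again realizes a permutation and lands in the already-counted family). Part 3 repackages the paper's computation as a commutation-with-projection lemma; the paper instead computes the projections $\{s\}\cup\pi(Q_n\setminus N_j)$ explicitly for the words $a_j\alpha_\pi$, which sidesteps the one nontrivial inclusion you acknowledge but do not prove (that every $\Delta$-predecessor of a projected element is witnessed by an actual $\Delta'$-transition into the specific pairs present in $S$). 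Your route would work, but that inclusion needs an argument.

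The genuine gap is Part 1, which you yourself flag as the main obstacle and then do not close. Saying that ``a mechanical enumeration through these combinations, verifying reachability case by case, yields the stated lower bound'' is not a proof, and the factorization you propose --- $2^{n-2}$ from independently toggling $q_2,\dots,q_{n-1}$ in $N$, times $2n+3$ ``profiles'' of $p$ and of the membership of $q_1,q_n,f$ --- does not match the actual structure of the reachable state set. The bound arises as $(n+1)2^{n-1}+2^{n-2}$, obtained from a quite different case analysis: $n(2^{n-1}-1)$ states $\tuple{p,S\cup\{f\}}$ with $p\in Q_n$ and $\emptyset\subset S\subseteq Q_n\setminus\{p\}$ (here $S$ ranges over nonempty subsets of an $(n-1)$-element set, reached via Lemma~\ref{permutation} and controlled insertion of $f$ using the words $\alpha_\pi a_1$); $2^{n-1}-n$ states $\tuple{q_j,S}$ with $\emptyset\neq S\subseteq N_j\setminus\{q_1\}$ and $f\notin S$, reached by appending $a_la_l$ for $l$ the maximal index in $S$ to strip $f$ out; $2n$ miscellaneous states $\tuple{q_j,\emptyset}$, $\tuple{q_j,N_j}$, $\tuple{s,\emptyset}$; and $2^{n-2}$ final states $\tuple{f,S}$ with $q_1\in S\subseteq N_n$. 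Your profile count does not partition this set, and in particular the mechanism for reaching a pair \emph{without} $f$ in its negative set after $f$ has once entered it (the $a_la_l$ trick) is absent from your sketch. Finally, since ${\cal T}'_n$ is the \emph{trimmed} automaton, accessibility alone does not suffice: you must also verify co-accessibility of every counted state (the paper does this by returning to $\tuple{q_i,N_i}$ or $\tuple{q_i,N_i\cup\{f\}}$ via the word realizing $\pi^{-1}$ and using that $q_i$, but no state of $N_i$, has an $a_i$-transition to $f$). Your proposal never addresses co-accessibility, so even a completed enumeration of reachable pairs would not yet establish the claim.
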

The proof of Proposition~\ref{prop:hard1} is given in the Appendix. The reason for the large number of states obtained when using the specialized determinization of ${\cal T}_n$ is that it distinguishes words $\alpha$ that correspond to different permutations on $Q_n$. This circumstance forces the determinization of the underlying automaton of ${\cal T}'_n$ to generate states that correspond to all the permutations of $Q_n$. Meanwhile, the determinization of the reversed underlying automaton of ${\cal T}'_n$ behaves timidly and, combinatorically, it is equivalent to the reversed determinization of ${\cal T}_n$ that we analyzed in Lemma~\ref{lemma:new_constr_simple}.

\begin{remark}
Similar arguments show that the bimachine construction introduced in Section~4 of ~\cite{CIAA2017} shares the same lower bound of $\Theta(n!)$ for the number of states when used for the class of transducers ${\cal T}_n$ introduced in Example~\ref{ExampleTransducers}.
\end{remark}

%\begin{lemma}
%For each $n$:
%\begin{enumerate}
%\item the determinization of the underlying automaton of ${\cal T}_n$ generates $3$ states.
%\item the determinization of the reversed underlying automaton of ${\cal T}_n$ generates $2^n+n$ states.
%\end{enumerate}
%\end{lemma}
%\begin{proof}
%1. The first part is immediate.
%
%2. As for the second, we note that the reversed determinization constructs:
%\begin{itemize}
%\item one initial state, $\{f\}$,
%\item $n$ non-final states $Q_n\setminus N_i$.
%\item each state $Q_n\setminus N_i$ gives rise to the sets $P\cup \{s\}$ where $P\subseteq Q_n$ where $|P|=n-i$.
%\end{itemize}
%The first two steps are obvious. For the last one, an argument similar to the one used in Proposition~\ref{prop:hard1} shows
%that all these states will be generated. However each of the sets $P\cup \{s\}$ with every character $a_j$ generates $s$ along with a set of $|P|$ elements of $Q_n$. From the above analysis we conclude that the states generated the determinization of the reversed underlying automaton of ${\cal T}_n$ are:
%\begin{equation*}
%1 + n + \sum_{i=1}^n \binom{n}{i}=1+n+(2^n-1)=2^n + n.
%\end{equation*}
%\qed
%\end{proof}

Summing up results we obtain the following theorem:
\begin{theorem}
For any positive integer $n$, on input ${\cal T}_n$ the standard bimachine construction generates at least $(n!+2)+(2^n+n)$ states, whereas
the construction proposed in Section~\ref{SecBimachineConstruction} generates a bimachine with $3+(2^n+n)$ states. 
\end{theorem}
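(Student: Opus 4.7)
The plan is to simply combine the two preparatory results Lemma~\ref{lemma:new_constr_simple} and Proposition~\ref{prop:hard1}, since at this point the theorem is essentially their sum. Both the classical bimachine construction and the new equalizer-accumulation construction produce a bimachine whose state count is the number of states of the left deterministic automaton plus the number of states of the right deterministic automaton, so the argument will reduce to adding these two counts in each case (the output function contributes no additional states).

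For the new construction, I will observe that Step~2 of the construction in Section~\ref{SecBimachineConstruction} applies ordinary power-set determinization directly to the underlying automaton of ${\cal T}_n$ (for ${\cal A}_L$) and to its reverse (for ${\cal A}_R$). By Lemma~\ref{lemma:new_constr_simple}(1) and~(2) these produce $3$ and $2^n+n$ states respectively, so the new bimachine has exactly $3+(2^n+n)$ states.

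For the classical construction I will recall that the left and right deterministic automata are obtained from the unambiguous transducer ${\cal T}'_n$ described in Subsection~\ref{SubSecClCo}, via power-set determinization of its underlying automaton and of its reverse. By Proposition~\ref{prop:hard1}(2) the left deterministic automaton then has at least $n!+2$ states, and by Proposition~\ref{prop:hard1}(3) the right deterministic automaton has at least $2^n+n$ states. Summing yields the claimed lower bound of $(n!+2)+(2^n+n)$ for the classical construction.

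There is no substantive obstacle here: the hard work is already carried out in the combinatorial analyses of Lemma~\ref{lemma:new_constr_simple} and Proposition~\ref{prop:hard1}, whose proofs exploit the permutation action described in Lemma~\ref{permutation}. The only thing to check in the present proof is the bookkeeping that the bimachine's total state count is indeed $|Q_L|+|Q_R|$ in both constructions, which is immediate from the definition of a bimachine. I would therefore keep the proof to two or three lines, merely citing the above lemmas.
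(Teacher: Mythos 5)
Your proposal is correct and follows exactly the route the paper takes: the paper states the theorem with the preamble ``Summing up results'' and gives no further argument, treating it precisely as the sum of Lemma~\ref{lemma:new_constr_simple} (giving $3$ and $2^n+n$ states for the two subautomata of the new construction) and Proposition~\ref{prop:hard1}, parts 2 and 3 (giving the lower bounds $n!+2$ and $2^n+n$ for the classical one). Your added remark that a bimachine's state count is $|Q_L|+|Q_R|$ is the only bookkeeping needed and is immediate from Definition~\ref{DefBimachine}.
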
\qed

\section{Conclusion}\label{SecConclusion}

In this paper we introduced a new method for converting functional transducers into bimachines. 
The method can be applied to all functional transducers where the output monoid belongs to 
the class of mge monoids introduced in Section~\ref{SecMGE}. From our point of view this class, which was shown to be closed under Cartesian products, deserves interest on its own. An effective functionality decision procedure is given for transducers with output in effective mge monoids.

Unlike other methods, the new bimachine construction does not try to imitate in a bimachine run a particular path of the transducer. Rather, the squared output transducer and the principle of equalizer accumulation are used to obtain a unified view on all 
parallel paths of the transducer for a particular input. The bimachine output for a partial input sequence corresponds to the maximal output across all the outputs obtained in the parallel transducer paths. 

The advantage of the new construction method is space economy. The number of states of the bimachine is bounded by 
$2\cdot 2^{n}$ where $n$ is the number of transducer's states. In \cite{CIAA2017} it has been shown that the size of bimachines obtained when using other methods based on transducer path reconstruction is in $O(n!)$. Further we showed that the standard construction can achieve a worst case complexity $\Theta(n!)$ on a class of transducers with $n+2$ states.
We are not aware of any known method with better space complexity. 

We additionally introduced a natural amendment of the construction that can be directly applied to arbitrary non-real-time functional transducers. This construction avoids the complexity increase caused by the blow-up of the number of transitions when 
first building a real-time transducer.

\bibliographystyle{splncs03}
\bibliography{bibliography-2}

\section*{Appendix}

We prove Proposition~\ref{prop:hard1}. 

\begin{proof}
1. First, since $\tuple{s,\emptyset}$ is the initial state of ${\cal T}'_n$ the initial transitions, $\tuple{a_j,1^i}$ for any $a_j \in \Sigma_n$
generate the pairs:
\begin{equation*}
P_i=\tuple{q_i,N_i}, \text{ where } N_i=\{q_1,\dots,q_{i-1}\}.
\end{equation*}
for each $i\le n$. Furthermore, $q_i$ has a transition with $a_i$ to $f$ whereas none of the states in $N_i$ has a transition with $a_i$ to $f$. Therefore $P_i$ is also co-accessible in ${\cal T}'_n$.

Let $p\in Q_n$ and $S\subseteq Q_n$ be arbitrary such that $p\not \in S$. Then $|S|=i-1$ for some $i\le n$.
Therefore there is a permutation
$\pi:Q_n\rightarrow Q_n$ such that $\pi(q_i)=p$ and\footnote{If $C\subseteq Q_n$, $\pi(C)$ is the set of $\pi(c)$ for $c\in C$.} $\pi(N_i)=S$. Following Lemma~\ref{permutation} let $\alpha_{\pi}$ be a word that realizes the permutation $\pi$. Hence, starting from $P_i$ and following the word $\alpha_{\pi}$ the construction will define a state $P_{i,\pi}$ such that:
\begin{equation*}
P_{i,\pi}=\tuple{\pi(q_i),\pi(N_i)\cup \{f\}}=\tuple{p,S\cup \{f\}} \text{ or } P_{i,\pi}=\tuple{\pi(q_i),\pi(N_i)}=\tuple{p,S}.
\end{equation*}
Let $\beta$ be the word that corresponds to $\pi^{-1}$. Thus starting from $P_{i,\pi}$ and following the word $\beta$ the determinization of ${\cal T}_n$ must reach the state $\tuple{q_i,N_i}$ or $\tuple{q_i,N_i\cup \{f\}}$. Since $f$ has no transitions,
we deduce that both of them are co-accessible implying that also $P_{i,\pi}$ is co-accessible. 

Furthermore, if $i>1$, then $\alpha_{\pi}a_1$ generates also the pair $\tuple{p,S\cup \{f\}}$ since $a_1$ corresponds to the identity on $Q_n$ and every state has a transition with $a_1$ to $f$. On the other hand, if $q_1\not \in S$ and $l$ is the maximal index of a state $q_l\in S$, then $\alpha_{\pi}a_la_l$ generates first $\tuple{p,(S\setminus\{q_l\})\cup \{q_1,f\}}$ and then $\tuple{p,S\setminus \{f\}}$ because there is a transition from $q_1$ with $a_l$ to $q_l$ but (by the maximality of $l$) none of the states in $S-q_l$ has a transition with $a_l$ to $f$.

By the above discussion for each state $p\in Q_n$ and every set $\emptyset\subset S\subseteq Q_n\setminus \{q\}$, the pair $\tuple{p,S\cup \{f\}}$ is a state in ${\cal T}_n'$. Clearly, there are $n(2^{n-1}-1)$ such pairs. 
Next, we count the pairs $\tuple{p,S}$ such that $p\in Q_n$ and $S\subseteq Q_n$. First, if $S\neq \emptyset$, then $\tuple{p,S\cup \{f\}}$ is a state in ${\cal T}_n'$. Now, assume that $p=q_j$ and $\emptyset\neq S\subseteq N_j\setminus\{q_1\}$. Let $l$ me the maximal index of a state in $S$. Then the word $a_la_l$ maps $S\cup \{f\}$ to $S$. For each $j$ there are $2^{j-2}-1$ sets $\emptyset\neq S\subseteq N_j\setminus\{q_1\}$. Summing over $j=2,3\dots,n$ we get
\begin{equation*}
\sum_{j=2}^n (2^{j-2}-1)=2^{n-1} - 1 - (n-1)=2^{n-1} - n.
\end{equation*}
Finally, each of the pairs $\tuple{q_j,\emptyset}$ is also in ${\cal T}_n'$ and we have the initial state $\tuple{s,\emptyset}$ and $\tuple{q_j,N_j}$ for $j>1$ add up to:
\begin{equation*}
n(2^{n-1}-1) + 2^{n-1} - n + 2n= (n+1)2^{n-1}
\end{equation*}
non-final states.

Finally, if $S_j\subseteq N_j$ is arbitrary there is an accessible state $\tuple{q_j,S_j'}$ with $S_j'\setminus \{f\}=S_j$. Now a transition with $a_j$ shows that $\tuple{f,S_j \cup \{q_1\}}$ is final. Therefore for every set $S\subseteq N_n$ that contains $q_1$ there is final state $\tuple{f,S}$. Therefore ${\cal T}_n'$ has at least $2^{n-2}$ final states. Summing up we see that ${\cal T}_n'$ has at least:
\begin{equation*}
(n+1)2^{n-1} + 2^{n-2}=(2n+3)2^{n-2}
\end{equation*}
states.

2. For the second part, we prove that to different permutations $\pi:Q_n\rightarrow Q_n$ we can assign different states in the left automaton ${\cal A}_L$. To achieve this, let $\pi:Q_n\rightarrow Q_n$ be some permutation. We consider the $\alpha_{\pi}$ as defined in Lemma~\ref{permutation}. Without loss of generality we can assume that $|\alpha_{\pi}|\ge 1$. First, it is easy to see that
\begin{equation*}
\delta_L(\tuple{s,\emptyset},a_1)= \{\tuple{q_i,N_i} \,|\, i\le n\}.
\end{equation*} 
Next, following the word $\alpha_{\pi}$ each state $\tuple{q_i,N_i}$ will reach some non-final states and some final states.
The set of non-final states reached by $\tuple{q_i,N_i}$ will be:
\begin{equation*}
\emptyset\neq P_{i,\pi}\subseteq \{\tuple{\pi(q_i),\pi(N_i)}, \tuple{\pi(q_i),\pi(N_i)\cup \{f\}}\}.
\end{equation*}
Hence, the states $\tuple{\pi(q_i),S_i}\in P_{i,\pi}$ have their first coordinate in $Q_n$, 
whereas their second coordinate, say $S_i$, has the property $|S_i\cap Q_n|=i-1$.
Meanwhile, the final states reached by $\tuple{q_i,N_i}$ have a first coordinate $f\not\in Q_n$.

This analysis shows that for every $\pi$ we have
\begin{equation*}
\delta_L^*(\tuple{s,\emptyset},a_1\alpha_{\pi})=\bigcup_{i=1}^nP_{i,\pi} \cup {\cal F}_{\pi}
\end{equation*}  
where ${\cal F}_{\pi}\neq \emptyset$ is some set of elements with first coordinate\footnote{The precise form ${\cal F}_{\pi}$ can be expressed in terms of the pairs $P_{i,\pi}$ and the last character of $\alpha_{\pi}$. Yet, it is not important for our analysis.} $f$. Now, we claim that if $\pi'\neq \pi''$ are two permutations, then
\begin{equation*}
L_{\pi'}=\bigcup_{i=1}^nP_{i,\pi'} \cup {\cal F}_{\pi'} \text{ and } L_{\pi''}=\bigcup_{i=1}^nP_{i,\pi''} \cup {\cal F}_{\pi''}
\end{equation*}
are distinct sets. Indeed, let $j\le n$ be such that $\pi'(q_j)\neq \pi''(q_j)$. By the above analysis there is a pair $\tuple{\pi'(q_j),S'_j}\in L_{\pi'}$ with $|S'_j \cap Q_n|=j-1$. Now, if $\tuple{\pi'(q_j),S_j'}\in L_{\pi''}$, then $\pi'(q_j)=\pi''(q_i)$ for some $i$ and again by the above analysis we conclude that $|S_j'\cap Q_n| =i-1$. Hence, $i=j$ implying that $\pi'(q_j)=\pi''(q_j)$. This is a contradiction.

Therefore, all the states $L_{\pi'}$ are distinct states. All of them are final, since $|a_1\alpha_{\pi}|\ge 2$, and the domain of
${\cal T}_n$ contains all these words. Additionally, ${\cal A}_L$ must contain two more states, i.e. the initial state $\{\tuple{s,\emptyset}\}$ and also:
$$\delta_L(\{\tuple{s,\emptyset}\},a_1)=\{\tuple{q_i,N_i}\,|\, i\le n\}.$$
Hence, in total, ${\cal A}_L$ contains at least $n!+2$ states. 

3. Let $s_R=F'$ be the set of final states of ${\cal T}'_n$. For $j\le n$ let $R_j$ and $R_j'$ be the sets:
\begin{equation*}
R_j = \delta_R(s_R,a_j) \text{ and } R_j'=\{q\,|\,\exists S( \tuple{q,S}\in R_j)\}.
\end{equation*}
Then, we deduce that $R_j$ consists of all the pairs $\tuple{q_i,S}$ such that $i\ge j$, $\tuple{q_i,S}$ is accessible in ${\cal T}'_n$, $S-f \subseteq N_j$, and $i\ge j$. Therefore $R_j'=Q_n\setminus N_j$ since $a_j\not \in \Dom{\cal T}_n$.

Let $\pi$ be an arbitrary permutation on $Q_n$, $\pi:Q_n\rightarrow Q_n$ and let $\alpha_{\pi}$ that realizes $\pi$. Without loss of generality we may assume that $|\alpha|\ge 1$. We consider the sets:
\begin{equation*}
R_{j,\pi}=\delta_R^*(R_j,\alpha_{\pi}) \text{ and }R'_{j,\pi}=\{q\,|\, \exists S(\tuple{q,S}\in R_j)\}. 
\end{equation*}
Now, since $\alpha^{rev} a_j \in \Dom({\cal T}_n)$ we must have that $\tuple{s,\emptyset}\in R_{j,\pi}$ and it should be clear that:
\begin{equation*}
 R'_{j,\pi}=\{s\} \cup \pi(R_j').
 \end{equation*}
Now since $\pi$ was arbitrary, we deduce for any subset $R'\subset Q_n$, $R'\neq \emptyset$, there is a state, some state
$R$ in the right automaton ${\cal A}_R$ such that:
\begin{equation*}
R'=\{s\} \cup \{q\in Q_n\,|\, \exists S (\tuple{q,S}\in R)\}.
\end{equation*}
Therefore there are at least $2^n-1$ states $R$ that all contain $\tuple{s,\emptyset}$. Adding the initial state, $s_R$ and the $n$ states $R_j$ for $j\le n$, we conclude that the right automaton ${\cal A}_R$ has at least:
\begin{equation*}
1+n+(2^{n} -1)=2^n +n
\end{equation*}
states.
 \qed
\end{proof}
\end{document}